\newcommand{\ind}{\mathbbm 1}
\DeclareMathOperator{\InvGamma}{InvGamma}
\DeclareMathOperator{\diag}{diag}
\DeclareMathOperator{\var}{var}
\DeclareMathOperator{\No}{N}
\DeclareMathOperator{\tr}{tr}
\DeclareMathOperator{\KL}{KL}
\DeclareMathOperator{\cov}{cov}
\newcommand{\lvertiii}{{\vert\kern-0.25ex\vert\kern-0.25ex\vert}}    
\newcommand{\rvertiii}{{\vert\kern-0.25ex\vert\kern-0.25ex\vert}}
\newcommand{\1}{\mathbbm 1}
\newcommand{\mc}[1]{\mathcal #1}
\DeclareMathOperator{\En}{En}
\newcommand\munderbar[1]{\underaccent{\bar}{#1}}
\newtheorem{theorem}{Theorem}[section]
\newtheorem{remark}[theorem]{Remark}
\newtheorem{lemma}[theorem]{Lemma}
\newtheorem{corollary}[theorem]{Corollary}
\newtheorem{assumption}[theorem]{Assumption}
\newtheorem{lem}[theorem]{Lemma}
\newcommand{\be}{\begin{equs}}
\newcommand{\ee}{\end{equs}}
\def\ci{\perp\!\!\!\perp}
\newcommand{\bb}[1]{\mathbb #1}
\newcommand{\R}{\bb R}
\renewcommand{\P}{\mathcal P}
\newcommand{\X}{\mathbf X}
\newcommand{\E}{\mathbf E}
\newcommand{\bone}{\mathbf 1}
\newcommand{\bigO}{\mathcal O}
\DeclareMathOperator{\Cauchy}{Cauchy}
\def\TV{{ \mathrm{\scriptscriptstyle TV} }}
\DeclareMathOperator{\TVar}{TV}
\newcommand{\beginsupplement}{%
        \setcounter{table}{0}
        \renewcommand{\thetable}{S\arabic{table}}%
        \setcounter{figure}{0}
        \renewcommand{\thefigure}{S\arabic{figure}}%
        \setcounter{section}{0}
        \renewcommand{\thesection}{S\arabic{section}}        
        \setcounter{page}{1}
        \pagenumbering{roman}
}
\begin{document}
\allowdisplaybreaks

\begin{frontmatter}
\title{Bayes Shrinkage at GWAS scale: Convergence and Approximation Theory of a Scalable MCMC Algorithm for the Horseshoe Prior}
\runtitle{Horseshoe MCMC Convergence}

\begin{aug}
\author{\fnms{James E.} \snm{Johndrow}\ead[label=e1]{johndrow@stanford.edu}},
\author{\fnms{Paulo} \snm{Orenstein}\ead[label=e2]{pauloo@stanford.edu}}
\and
\author{\fnms{Anirban} \snm{Bhattacharya}\ead[label=e3]{anirbanb@stat.tamu.edu}}

\runauthor{J.E. Johndrow, P. Orenstein, and A. Bhattacharya}

\affiliation{Stanford University\thanksmark{m1} and Texas A\&M University\thanksmark{m2}}

\address{390 Serra Mall\\
Stanford, CA, 94305 USA \\
\vspace{-3mm}\phantom{E-mail: }\printead{e1}, \printead*{e2}}

\address{Ireland Street\\
College Station, TX 77843 USA\\
\printead{e3}}
\end{aug}

\begin{abstract}
The horseshoe prior is frequently employed in Bayesian analysis of high-dimensional models, and has been shown to achieve minimax optimal risk properties when the truth is sparse. While optimization-based algorithms for the extremely popular Lasso and elastic net procedures can scale to dimension in the hundreds of thousands, algorithms for the horseshoe that use Markov chain Monte Carlo (MCMC) for computation are limited to problems an order of magnitude smaller. This is due to high computational cost per step and growth of the variance of time-averaging estimators as a function of dimension. We propose two new MCMC algorithms for computation in these models that have improved performance compared to existing alternatives. One of the algorithms also approximates an expensive matrix product to give orders of magnitude speedup in high-dimensional applications. We prove that the exact algorithm is geometrically ergodic, and give guarantees for the accuracy of the approximate algorithm using perturbation theory. Versions of the approximation algorithm that gradually decrease the approximation error as the chain extends are shown to be exact. The scalability of the algorithm is illustrated in simulations with problem size as large as $N=5,000$ observations and $p=50,000$ predictors, and an application to a genome-wide association study with $N=2,267$ and $p=98,385$. The empirical results also show that the new algorithm yields estimates with lower mean squared error, intervals with better coverage, and elucidates features of the posterior that were often missed by previous algorithms in high dimensions, including bimodality of posterior marginals indicating uncertainty about which covariates belong in the model. 
\end{abstract}

\begin{keyword}[class=MSC]
\kwd[Primary ]{60K35}
\kwd{60K35}
\kwd[; secondary ]{60K35}
\end{keyword}

\end{frontmatter}

\maketitle

\section{Introduction}
Modern applications in genetics and other areas of biology have stimulated considerable interest in statistical inference in the high-dimensional setting where the number of predictors $p$ is much larger than the number of observations $N$, and the truth is thought to be sparse or consist mostly of small signals. Regression models are frequently employed in this context. Consider a Gaussian linear model with likelihood
\be
L(z \mid W \beta, \sigma^2) &= (2 \pi \sigma^2)^{-N/2} e^{-\frac1{2 \sigma^2} (z-W\beta)'(z-W\beta)}, \label{eq:Likelihood}
\ee
where $W$ is a $N \times p$ matrix of covariates, $\beta \in \R^p$ is assumed to be a sparse vector, and $z \in \R^N$ is an $N$-vector of response observations. 
A common hierarchical Bayesian approach employs a Gaussian scale-mixture prior on $\beta$ of the form
\be
\begin{aligned}
\beta_j \mid \sigma^2, \eta, \xi &\stackrel{iid}{\sim} \No(0,\sigma^2 \xi^{-1} \eta_j^{-1}), \quad \eta_j^{-1/2} \stackrel{iid}{\sim} \Cauchy_{[0,b^{-1/2}]}(0,1), \quad j=1, \ldots, p,  \\
\xi^{-1/2} &\sim \Cauchy_{[a_\xi,b_\xi]}(0,1),\quad \sigma^2 \sim \InvGamma(\omega/2, \omega/2),
\end{aligned}
\label{eq:prior_main}
\ee
where $0 \le a_\xi < b_\xi \le \infty$ and $b \ge 0$, $\omega$ is a fixed prior hyperparameter, and $\Cauchy_{[\underline{a}, \overline{a}]}(0,1)$ is the standard Cauchy distribution restricted to the interval $[\underline{a}, \overline{a}]$. This is a slight generalization of the Horseshoe prior \cite{carvalho2010horseshoe}, whose original version used standard half-Cauchy priors on the local and global scales, that is, $a_\xi = b = 0, b_\xi = \infty$. The truncation of the prior on $\xi$ was introduced in \cite{van2017adaptive} for theoretical tractability, while we additionally truncate $\eta_j$ for our convergence analysis of the MCMC algorithms developed herein.  

The prior structure \eqref{eq:prior_main} induces approximate sparsity in $\beta$ by shrinking most components aggressively toward zero while retaining the true signals \citep{polson2010shrink}. In this sense, the prior \eqref{eq:prior_main} approximates the properties of point-mass mixture priors \citep{johnson2012bayesian,scott2010bayes, george1997approaches}, which allow some components of $\beta$ to be exactly zero a posteriori. From the point of view of testing the hypotheses $H_{0j}: \beta_j = 0$, the global precision parameter $\xi$ controls how many of the hypotheses are true, while the local precisions $\eta_j$ control which of the hypotheses are true. In the normal means setting, where $W = I_N$ (the $N \times N$ identity matrix), this prior has been shown to achieve the minimax adaptive rate of contraction when the true $\beta$ is sparse \cite{van2014Horseshoe,van2017adaptive}. Moreover, under certain conditions the marginal credible intervals have asymptotically correct frequentist coverage \cite{van2017uncertainty}. Although early literature on the horseshoe prior justified it as a continuous approximation of the point-mass mixture prior, over time it has come to be recognized as a good prior choice in high-dimensional settings in its own right \cite{bhadra2017lasso}.

Despite the popularity of the horseshoe in the literature, there currently is a lack of MCMC algorithms that scale to large $(N, p)$, owing to expensive linear algebra and slow mixing of the corresponding Markov chain. The current state of the art algorithm for large $p$ is \cite{bhattacharya2016fast}, which has only been employed successfully up to about $p=10,000$, while the recently proposed algorithm of \cite{hahn2017efficient} scales very well in $N$ but is less efficient than the exact algorithm we propose here when $p \gg N$ (see \cite[Section 3]{hahn2017efficient}). Another recent proposal is \cite{makalic2016simple}, but compares only to the implementation in the \texttt{monomvn} package for \texttt{R}, which is very slow relative to \cite{bhattacharya2016fast}. The lack of scalable algorithms has kept a useful model designed for high-dimensional regression out of many modern high-dimensional applications such as genome-wide association studies (GWAS), which often have $N$ in the thousands and $p$ in the hundreds of thousands or more. Moreover, no MCMC algorithms for the horseshoe have yet been shown to be geometrically ergodic, which is the usual route to guaranteeing rapid convergence to the posterior, the existence of central limit theorems, and finite-time risk bounds for time averages and other pathwise quantities. This is despite geometric ergodicity results for the Bayes Lasso Gibbs sampler \cite{khare2013geometric} of \citet{park2008bayesian} and the Dirichlet-Laplace Gibbs sampler \cite{pal2014geometric} of \citet{bhattacharya2015dirichlet}. 

Here, we propose a new MCMC algorithm for the horseshoe that exhibits faster mixing than existing algorithms. Our basic approach to improving mixing is to make more extensive use of block updating. We prove that the associated Markov chain is geometrically ergodic. While the proof of the geometric ergodicity result utilizes a Lyapunov function with some similarities to those employed in \cite{khare2013geometric} and \cite{pal2014geometric}, certain delicate modifications were required to accommodate the pole at zero in the Cauchy prior on the local scales.
We then propose a numerical approximation that substantially reduces the cost per step of the algorithm in large $(N, p)$ settings. Specifically, we make fast approximations to several matrix products exploiting the sparsity structure of the posterior. We prove bounds on the approximation error to the posterior both for invariant measures of the approximate algorithm and for finite-time Ces\'{a}ro averages. These results utilize and expand on recent work on perturbation theory for geometrically ergodic Markov chains. In the process, we prove a general lemma showing that one can typically work in unweighted metrics and nonetheless obtain an approximation error bound in the metric weighted by a Lyapunov function, which substantially reduces the effort needed to establish guarantees of approximation accuracy for large classes of unbounded functions. We further show that if one gradually reduces the approximation error as the chain extends, it is possible to construct exact algorithms that utilize only approximate transition kernels. These latter results are very general and apply to a wide array of algorithms constructed from approximate kernels.

Perturbation theory has been a recent focus of the theoretical MCMC literature \citep{johndrow2017error, rudolf2017perturbation, pillai2014ergodicity}, as well as algorithm development \citep{bardenet2017markov, korattikara2014austerity, welling2011bayesian}. Earlier examples of perturbation theory for Markov chains under stronger ergodicity conditions include \citet{mitrophanov2005sensitivity} and \citet{roberts1998convergence}. This theoretical literature provides conditions under which finite-length paths from the approximate kernel $\P_\epsilon$ give provably good approximations to the posterior.  This approach is attractive from at least two perspectives: (1) it suggests the possibility of overcoming computational challenges for Bayesian inference in high-dimensional or large sample settings by replacing computational bottlenecks with faster numerical approximations, and (2) it allows practitioners to move beyond the setting of choosing a $\P$ that has exactly the ``right'' invariant measure from a set of alternatives that in practice is quite small. 

Unfortunately, the practical success of this strategy has thus far been fairly limited. Recent activity has focused on using subsamples or ``minibatches'' of data in the large $N$ setting to create an analogue of stochastic gradient methods for MCMC. As \citet{bardenet2017markov} point out, achieving provably good approximations with significant computational advantage using subsampling typically requires the posterior to be well-approximated by a Gaussian, which is unlikely to be the case in large $p$ applications. Accurate approximations using minibatches typically require the construction of control variates \cite{pollock2016scalable,baker2017control,bardenet2017markov}, which in practice can be time-consuming, particularly when the target is high-dimensional and near-sparse in most directions, and the important directions are not known \emph{a priori}. 

Here we show strong approximation error guarantees for an algorithm that is not based on subsampling yet gives orders of magnitude speedup in large $p$ settings. This is one of the first demonstrations we are aware of in which the perturbation strategy has resulted in a practically significant algorithmic advance in the high-dimensional setting when the posterior is not remotely close to a Gaussian. In particular, the horseshoe posterior differs from a Gaussian not only in the tails, but also in its ``center,'' since the posterior will often have many modes. Because the critical feature of our algorithm is exploitation of sparsity, we expect that a similar strategy could succeed in other canonical high-dimensional Bayesian models. 

We then compare by simulation the exact and approximate algorithms in a range of large $p$, small $N$ regression settings, and show that the approximation is empirically very accurate and has orders of magnitude lower computational cost per step than the exact algorithm. We conclude by utilizing the approximate algorithm to estimate the horseshoe on a GWAS dataset with $N=2,267, p=98,385$, which is an order of magnitude higher dimensional than the datasets considered by \cite{bhattacharya2016fast}. We compare these results to point estimates obtained using the Lasso, and show that while there is broad agreement in which variables are important, the horseshoe estimated using our approximate algorithm exhibits the expected behavior of shrinking the larger signals less and the smaller signals more than Lasso. We also show that the our approximate algorithm more accurately recovers nuanced features of the posterior compared to the exact algorithm of \cite{bhattacharya2016fast}, such as bimodality of marginals when the true signal is near the minimax threshold of detection. These bimodal marginals indicate uncertainty about which variables belong in the model, which is an often-touted argument for the use of Bayesian procedures compared to frequentist methods such as the Lasso which return only a single selected model.

\section{Algorithms} \label{sec:Algos}
Our program is a theoretical and empirical evaluation of two MCMC algorithms for approximation of the horseshoe posterior obtained by combining the likelihood \eqref{eq:Likelihood} with the hierarchical prior \eqref{eq:prior_main}. Both are blocked Metropolis-within-Gibbs algorithms. The first algorithm targets the exact posterior, while the latter allows some bias in order to reduce the computation time per step when $p$ is large. We begin by defining the update rule for both algorithms and summarizing their empirical performance before turning to theoretical results in the next section. For sake of brevity, we suppress dependence on $z$ and $W$ in the full conditionals of the state variables. 

\subsection{Exact Algorithm}
We first define some quantities that will be used repeatedly. Let
\be \label{eq:BasicQuantities}
\begin{aligned}
D &= \diag(\eta_j^{-1}), \quad M_\xi = I_N + \xi^{-1} W D W' \\
p(\xi \mid \eta) &= |M_\xi|^{-1/2} \left( \frac{\omega}2 + \frac12 z'M_\xi^{-1} z \right)^{-(N+\omega)/2} \frac{\1\{\xi \in (a_\xi,b_\xi)\}}{\sqrt{\xi} (1+\xi)}
\end{aligned}
\ee
A blocked Metropolis-within-Gibbs algorithm that targets the exact horseshoe posterior is given by the update rule
\be \label{eq:ExactAlgorithm}
\begin{aligned}
&\text{1. sample } \eta \sim p(\eta \mid \xi,\beta,\sigma^2) \,\propto\, \prod_{j=1}^p \frac{1}{1+\eta_j} e^{-\frac{\beta_j^2 \xi \eta_j}{2\sigma^2}} \1\{\eta_j > b\}. \\
&\text{2. propose } \log(\xi^*) \sim \No(\log(\xi),s), \text{ accept } \xi \text{ w.p. } \frac{p(\xi^* \mid \eta) \xi^*}{p(\xi \mid \eta) \xi}. \\
&\text{3. sample } \sigma^2 \mid \eta,\xi \sim \text{InvGamma}\left( \frac{\omega + N}2, \frac{\omega + z'M_\xi^{-1} z}{2} \right). \\
&\text{4. sample } \beta \mid \eta,\xi,\sigma^2 \sim \No\left( (W'W + (\xi^{-1} D)^{-1})^{-1} W'z, \sigma^2 (W'W + (\xi^{-1} D)^{-1})^{-1} \right).
\end{aligned}
\ee
We refer generically to the Markov transition operator defined by this update rule as $\P$. 
We require the truncation of the local precisions $\eta_j$ to prove geometric ergodicity when $p>N$ (though not when $p \le N$). An inspection of the proof of the prior concentration for the horseshoe in \cite{chakraborty2016bayesian} reveals that the same concentration result goes through with the prior on the local scales truncated above, which suggests the statistical optimality is maintained with the truncated prior \cite{bhattacharya2016bayesian}.

The algorithm in \eqref{eq:ExactAlgorithm} is new, though it is related to the algorithm of \cite[Supplement]{polson2014bayesian} and that of \cite{bhattacharya2016fast}. It differs from \cite{polson2014bayesian} in that the second step targets $p(\xi \mid \eta)$  rather than $p(\xi \mid \eta,\beta,\sigma^2)$ as in \cite{polson2014bayesian}, and thus blocks together $(\beta,\sigma^2,\xi)$ instead of only $(\beta,\sigma^2)$. It also differs from \cite{bhattacharya2016fast}, which did not do any blocking of $\beta,\sigma^2,\xi$. Moreover, whereas \cite{polson2014bayesian} and \cite{bhattacharya2016fast} used slice sampling targeting $p(\eta \mid \xi,\beta,\sigma^2)$, we develop an exact rejection sampler to sample the $\eta_j$s independently. The rejection sampler exploits that the full conditional density of $\eta_j$ is log-convex to build a piecewise upper envelope which can be conveniently sampled from, with careful choices of the pieces ensuring very high acceptance rates; we defer the details to a supplemental document. Being able to sample the $\eta_j$s exactly is convenient as it avoids the introduction of additional $p$ latent variables in the slice sampler, and also simplifies the convergence analysis of the Markov chain.

Like \cite{bhattacharya2016fast}, we use an efficient method for sampling from the Gaussian full conditional for $\beta$. The details of this method are relevant for understanding our approximate sampler, so we briefly summarize it here. To sample from $\beta \mid \eta,\xi,\sigma^2$, the following three steps suffice
\be \label{eq:GaussianSamplingTrick}
\begin{aligned}
\text{sample } u &\sim \No(0, \xi^{-1} D) \text{ and } f \sim \No(0, I_N) \text{ independently} \\
\text{ set } v &= W u + f, \quad v^\ast = M^{-1}_\xi (z/\sigma - v),  \\
\text{set } \beta  &= \sigma(u + \xi^{-1} D W' v^\ast).
\end{aligned}
\ee
Notice that this algorithm -- and indeed, all but one step of \eqref{eq:ExactAlgorithm} -- requires computing $M_\xi$ defined in \eqref{eq:BasicQuantities} and solving $M_\xi v^* = (z/\sigma-v)$ for $v^*$. When $p$ is large, the computational bottleneck of the algorithm in \eqref{eq:ExactAlgorithm} is, perhaps surprisingly, just computing the matrix $W D W'$, which is needed to compute $M_\xi$. This has computational cost $N^2 p$, which dominates every other calculation in the algorithm when $p > N$. In the next section, we propose an approximate sampler that has lower computational cost per step.

\subsection{Approximate Algorithm} \label{sec:ApproxAlgo}
To reduce computational cost per step, we employ an approximation of the matrix product $\xi^{-1} WDW'$. The horseshoe prior is designed for the sparse setting, where most of the true $\beta$'s are zero or very small. In this case, the horseshoe posterior will tend to concentrate strongly around zero for most of the true nulls, thus endowing it with its minimax adaptive properties. Of course, this means that the posterior has a great deal of structure, since we can typically expect it to be tightly concentrated around the origin in a subspace of dimension approximately $(p-s)$, where $s$ is the unknown number of non-nulls. 

We can exploit this structure to create very accurate approximations of $\xi^{-1} WDW'$. For entries of $\beta_j$ to be shrunk to near zero, the precision $\xi \eta_j$ must be very large, as can be seen from \eqref{eq:GaussianSamplingTrick}. When this is the case, the $j$th column of $W$ does not contribute much to the $N \times N$ matrix $\xi^{-1} WDW'$. An important practical consequence of this, hitherto unexplored, is that once the MCMC algorithm begins to converge, the matrix $\xi^{-1} WDW'$ will typically be well-approximated by hard-thresholding $D$ in (\ref{eq:BasicQuantities}), resulting in
\be \label{eq:WDWApprox}
M_\xi \approx M_{\xi,\delta} :\, = I_N + \xi^{-1} WD_\delta W', \quad D_\delta = \diag(\eta_j^{-1} \bone(\xi^{-1}_{\max} \eta_j^{-1} > \delta)) 
\ee
for ``small'' $\delta$, where $\xi_{\max} = \max(\xi,\xi^*)$ in the first step of \eqref{eq:ExactAlgorithm} (the choice of $\delta$ is considered in Section \ref{sec:ApproxError}). This thresholding step reduces computational cost considerably, since the columns of $W$ corresponding to the zero diagonal entries of $D_\delta$ can just be ignored. Letting 
\be \label{eq:ActiveSet}
S = \{ j : \xi^{-1}_{\max} \eta_j^{-1} > \delta\}, 
\ee
we can also write the approximation as
$M_{\xi,\delta} = I_N + \xi^{-1} W_S D_S W_S'$, 
where $W_S$ consists of the columns of $W$ with indices in the set $S$, and $D_S$ consists of the rows and columns of $D$ with indices in the set $S$. This makes clear the computational advantages of thresholding. 

Using this strategy, we define an approximate algorithm that uses the same update rule as in \eqref{eq:ExactAlgorithm}, with only two changes:
\begin{enumerate}
 \item $M_\xi$ is replaced by $M_{\xi,\delta}$ everywhere that it appears in \eqref{eq:ExactAlgorithm}; and
 \item In the final step of \eqref{eq:GaussianSamplingTrick}, the quantity $D W'$ is replaced by $D_\delta W'$. 
\end{enumerate}
We denote the Markov transition operator corresponding to this variation of \eqref{eq:ExactAlgorithm} by $\P_\epsilon$. The subscript $\epsilon$ is meant to indicate that $\P_\epsilon$ is ``close'' to $\P$ in some suitable metric on probability measures. The choice of metric and how close $\P$ is to $\P_\epsilon$ as a function of the current state and $\delta$ are the focus of Section \ref{sec:PerturbationBounds}. 

The primary motivation behind the approximate algorithm is to improve per-iteration computational complexity without sacrificing accuracy. As discussed earlier, when the truth is sparse, we expect a large subset of $\{\xi^{-1} \eta_j^{-1}\}_{j \le p}$ to be small {\em a posteriori}, and hence thresholding the entries smaller than a small threshold $\delta$ should not affect the accuracy of the algorithm. Thresholding those small entries has significant computational advantages. The speedup from this approximation is best when $p$ is large relative to $N$ and the truth is sparse or close to sparse, so that most entries of $\beta$ are shrunk to near zero. Critically, coordinates that are thresholded away at iteration $k$ \emph{need not be thresholded away at iteration} $(k+1)$, and in practice the set of variables that escapes the threshold does change considerably from one iteration to another. This can occur because the thresholded coordinates are never actually set to zero or omitted, but rather sampled from a Gaussian that closely approximates the exact full conditional. Thus, we are not sacrificing the primary benefit of Bayesian methods for sparse regression: estimates of uncertainty about the set of true signals are still valid.

Consider the computational cost of extending the Markov chain by a single step. The exact algorithm needs to calculate $|M_\xi|$, $z' M_\xi^{-1} z$ and solve a linear system in $M_{\xi}$ in each iteration, each of which requires $O(N^3)$ operations. Further, formation of the matrix $M_\xi$ itself requires computation of $W D W'$, which has complexity $O(N^2 p)$. The approximate algorithm on the other hand needs to calculate $W D_\delta W'$, with a subset of the diagonal entries of $D_\delta$ being zero. With $S$ as in \eqref{eq:ActiveSet} denoting the {\em active set} of variables which escape the threshold, set 
$$
s_{\delta} = |S| = \sum_{j=1}^p \bone(\xi_{\max}^{-1} \eta_j^{-1} > \delta). 
$$
Also, let $D_S$ denote the $s_\delta \times s_\delta$ sub-matrix of $D$ and $W_S$ the $N \times s_\delta$ sub-matrix of $W$ resulting from picking out the non-thresholded diagonal entries/columns indexed by $S$. When the truth is sparse, $s_\delta \ll p$ after a few iterations and $W D_\delta W' = W_S D_S W_S'$, which costs $N^2 s_\delta$, providing significant savings. 
A second level of computational savings can be made when $s_\delta < N$, whence $W D_\delta W'$ is a reduced-rank approximation to $W D W'$. In such cases, our implementation altogether replaces the calculation of $W D_\delta W'$ and  
the formation of $M_{\xi, \delta}$ by directly calculating 
\be
M_{\xi, \delta}^{-1} = (I_N + \xi^{-1} W D_\delta W')^{-1} = I_N - W_S \big( \xi D_S^{-1} + W_S'W_S \big)^{-1} W_S',
\ee
using the Woodbury matrix identity. The calculation of $z' M_{\xi, \delta}^{-1} z$ and $M_{\xi, \delta}^{-1} (z/\sigma - v)$ are performed by substituting the above expression of $M_{\xi, \delta}^{-1}$, which only requires solving $s_\delta \times s_\delta$ systems, and has overall complexity $s_\delta^3 \vee s_\delta N$. The determinant of $I+\xi^{-1} W D_\delta W'$ is then computed by (a) performing a singular value decomposition of $W_S D_S^{1/2}$, which costs $\bigO(s_\delta^2 N)$, and then (b) calculating the eigenvalues as $1+\mathrm{s}^2$, where $\mathrm{s}$ is a vector of the singular values of $W_S D_S^{1/2}$, $(N-s_\delta)$ of which are identically zero. Accounting for the calculation of $W u$ performed when sampling $\beta$, which costs $\bigO(Np)$, the per step computational cost of the approximate algorithm when $s_\delta < N$ is order $(s_\delta^2 \vee p) N$. Thus by exploiting the sparse structure of the target, the algorithm achieves similar computational cost \emph{per step} to coordinate descent algorithms for Lasso and Elastic Net \cite[Sections 2.1, 2.2]{friedman2010regularization}. 

Before we conclude this section, we provide some additional insight into the consequences of the approximation for $\beta$. The effects of the modified updates for $\xi$ and $\sigma^2$ are relatively direct to see; however those for $\beta$ modify  multiple steps of the algorithm in \cite{bhattacharya2016fast}. Define $\Gamma := \xi^{-1} D$ and $\Gamma_\delta = \xi^{-1} D_\delta$. The approximate algorithm for $\beta$ sets
\begin{align*}
\beta = \Gamma_\delta W' M_\delta^{-1} z + \sigma \, (u - \Gamma W' M_\delta^{-1} v). 
\end{align*}
Since $(u, v)$ is jointly Gaussian, $\beta$ obtained above continues to have a Gaussian distribution, $\beta \sim N(\mu_\delta, \sigma^2 \Sigma_\delta)$, with 
\begin{align*}
\mu_\delta :\,= \Gamma_\delta W' M_\delta^{-1} z, \quad \Sigma_\delta :\,= \mbox{cov}(u - \Gamma W' M_\delta^{-1} v). 
\end{align*}
Some further simplifications (see Appendix A.1 for details) yields,
\begin{align}\label{eq:m_del}
\mu_\delta = (\mu_S; 0_{(p-s_\delta) \times 1}), \quad \mu_S = (W_S' W_S + \Gamma_S^{-1})^{-1} W_S' z,
\end{align}
and\footnote{When we write $\mu_\delta = (m_S; 0_{(p-s_\delta) \times 1})$, we simply mean that the sub-vector of $\mu_\delta$ corresponding to the indices in $S$ is $m_S$ while the rest are zero. Similarly, blocks of $\Sigma_\delta$ are defined by $S$ and $S^c$.} 
\begin{align}\label{eq:Sig_del}
\Sigma_{\delta} = 
\begin{bmatrix}
(W_S' W_S + \Gamma_S^{-1})^{-1} & - \Gamma_S W_S' M_S^{-1} W_{S^c} \Gamma_{S^c} \\
  & \Gamma_{S^c}
\end{bmatrix}.
\end{align}
Writing $\beta = (\beta_S; \beta_{S^c})$, we have $\E(\beta_{S^c}) = 0$, i.e, the entries of $\beta$ outside the active set are drawn from a zero mean distribution. Second, the marginal distribution of $\beta_S$ is $N\big( (W_S' W_S + \Gamma_S^{-1})^{-1} W_S' z, \, \sigma^2(W_S' W_S + \Gamma_S^{-1})^{-1} \big)$, which would exactly be the full conditional distribution of $\beta$ if the model was fitted with the current set of active variables.  

\subsection{Empirical performance}
Our main motivation for pursuing a variety of theoretical results about this algorithm is that it \emph{performs very well empirically.} In the final section, we apply the approximate algorithm to a GWAS dataset with $N=2267$ and $p=98,385$. This is about an order of magnitude larger in $p$ than any other application of horseshoe for linear models that we are aware of. Our exact algorithm has per-step linear complexity in $p$ and quadratic complexity in $N$, while the approximate algorithm actually has per-step linear complexity in $N$ and $p$ in many cases. It therefore competes with coordinate descent for the Lasso in terms of per-step computational cost. While the dependence of the mixing properties of the Markov chain on dimension is not considered theoretically, empirically we find that both algorithms are insensitive to $N$ and $p$ on typical metrics like autocorrelations and effective sample sizes.

We give a brief empirical comparisons of our two algorithms to the algorithm of \cite{bhattacharya2016fast} based on a simulation with $N=2,000$ and $p=20,000$, where the true $\beta$ consists of a sparse sequence of signals of varying sizes. We use $\delta = 10^{-4}$ for the approximate algorithm; choosing $\delta$ is considered in detail in Section \ref{sec:ApproxError}, along with a complete description of the simulation setup. The left panel of Figure \ref{fig:XiAC} compares autocorrelations for $\log(\xi)$ for the ``old'' algorithm of \cite{bhattacharya2016fast} to our exact algorithm (``new'') and our approximate algorithm. We focus on $\xi$ since this parameter is known to mix poorly in MCMC algorithms for the horseshoe \cite{polson2014bayesian}. Both of our algorithms improve mixing considerably. The right panel of Figure \ref{fig:XiAC} shows the distribution of effective samples per second, a measure of overall computational efficiency, over a number of parameters for the three algorithms. The new algorithm has slightly worse performance at the median than the old algorithm because of the slightly higher per-step cost of the blocked sampler, but performs much better for the slowest-mixing parameters. The approximate algorithm is about 50 times more efficient by this metric than the exact algorithm, and this gap widens with increasing $p$.

\begin{figure}[h]
 \centering
 \begin{tabular}{cc}
 \includegraphics[height=0.15\textheight]{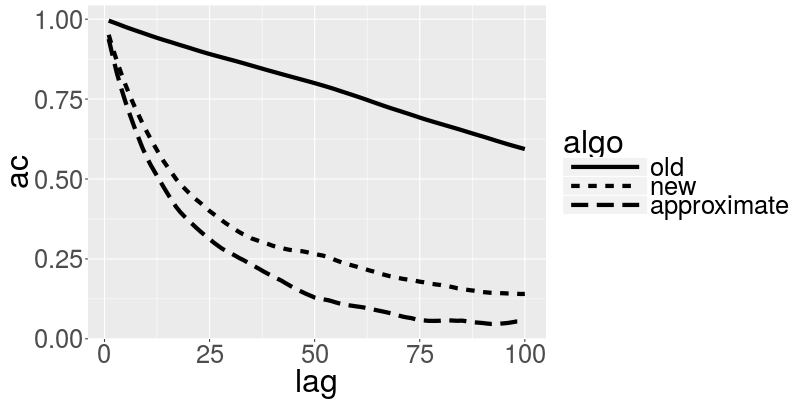} &  \includegraphics[height=0.15\textheight]{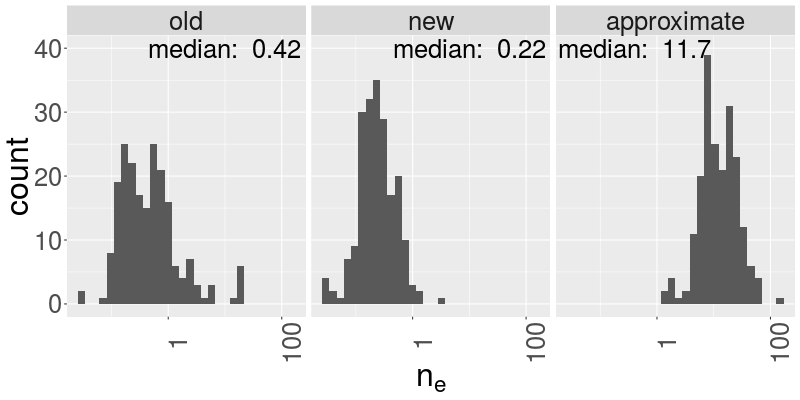}
 \end{tabular}
 \caption{Left: Estimated autocorrelations for $\log(\xi)$ for the three algorithms. Right: Effective samples per second for the three algorithms} \label{fig:XiAC}
\end{figure}

Figure \ref{fig:beta10} shows trace plots and density estimates for a single entry of $\beta$ for the three algorithms. This particular $\beta_j$ corresponds to a true signal of moderate size, and the resulting posterior marginal is bimodal, reflecting uncertainty about whether it is a signal or a null. Our exact and approximate algorithms both apparently mix well and result in visually similar estimates of the posterior marginal, while the old algorithm appears to become stuck at zero after a few thousand iterations, and the higher mode is lost after discarding a burn-in. Although this is a single entry of $\beta$, we later perform a more complete empirical comparison and find that the new algorithm outperforms the old algorithm on every metric we consider, while there is little discernible difference between the exact and approximate algorithms when $\delta = 10^{-4}$. Intuitively, the choice of $\delta$ should depend only weakly on dimension, since the matrix $WDW'$ is always regularized by the identity. Thus a ``small'' value of $\delta$ is one that is small relative to the eigenvalues of the identity, which are all 1. This is discussed in more detail in Section \ref{sec:ApproxError}.

\begin{figure}[h]
 \centering
 \includegraphics[width=0.7\textwidth]{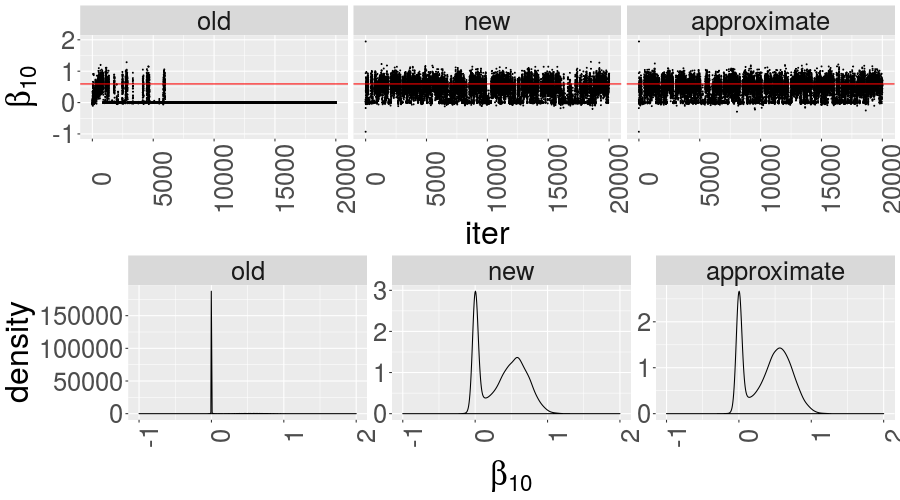}
 \caption{Trace plots (with true value indicated) and density estimates for one entry of $\beta$.} \label{fig:beta10}
\end{figure}

\section{Theoretical results} \label{sec:Theorems}
We now give results on the convergence of the exact algorithm and the accuracy of the approximate algorithm. We conclude with several general results showing that it is possible to construct exact algorithms using only approximate kernels. Any proofs not appearing immediately after the theorem statement are deferred to the appendix.

\subsection{Exact algorithm}
We first give some background on convergence rates of Markov chains. Our presentation follows closely that of \citet{hairer2011yet}. Let $\P$ be a Markov transition operator on a measurable state space $\X$. Let $x$ denote a generic element of the state space $\X$, so here $x = (\beta,\eta,\xi,\sigma^2)$. Denote by $P(\X)$ the set of probability measures on $\X$. We shall follow the general convention of \citep{hairer2011yet} to denote the action of $\P$ on a measurable function $f:\X \to \R$ and a probability measure $\nu$ on $\X$ by
\be 
\P f(x) = \int f(y) \P(x, y) dy, \quad \nu \P(A) = \int_{\X} \P(x, A) \nu(dx). 
\ee
We will state a form of geometric ergodicity of $\P$ that follows from two standard assumptions about $\P$, the first of which is existence of a Lyapunov function. 
\begin{assumption} \label{ass:Lyapunov}
 There exists a function $V : \X \to [0,\infty)$ and constants $0 < \gamma < 1$ and $K > 0$ such that
 \be
 (\P V)(x) \equiv \int V(y) \P(x,dy) \le \gamma V(x) + K.
 \ee
\end{assumption}
Lyapunov functions have been important in the study of stochastic stability at least since \citet{khasminskii1980stochastic}; their role in proving exponential convergence rates for Markov chains is thoroughly set down in the influential text of \cite{meyn1993markov} (see also \cite{rosenthal1995minorization}). Throughout, we will study convergence in a total variation norm weighted by the Lyapunov function. Define
\be
d_{\theta}(x,y) = \1\{x \ne y\} (2 + \theta V(x) + \theta V(y))
\ee
for some $\theta > 0$, and define a Lipschitz seminorm on measurable functions and associated dual metric on probability measures by
\be
\lvertiii \varphi \rvertiii_{\theta} &= \sup_{x \ne y} \frac{|\varphi(x) - \varphi(y)|}{d_{\theta}(x,y)} \\
d_{\theta}(\nu_1,\nu_2) &= \inf_{\Upsilon \in \mc C(\nu_1,\nu_2)} \int d_{\theta}(x,y) \Upsilon(dx,dy) = \sup_{\lvertiii \varphi \rvertiii_{\theta}<1} \int \varphi(x) (\nu_1 - \nu_2)(dx)
\ee
where $\mc C(\nu_1,\nu_2)$ is the space of couplings of the probability measures $\nu_1,\nu_2$. Notice that if $\theta=0$, we recover the ordinary unweighted total variation metric. 
\citet{hairer2011yet} shows that $d_\theta$ is a $\theta V$-weighted total variation norm
\be \label{eq:WeightedTV}
d_{\theta}(\nu_1,\nu_2) = \int (1+\theta V(x)) |\nu_1 - \nu_2|(dx).
\ee

Geometric ergodicity of $\P$ is often proved by showing minorization on sublevel sets of $V$. We give the form of this condition used in \cite{hairer2011yet}.
\begin{assumption} \label{ass:Minorization}
For every $R>0$ there exists $\alpha \in (0,1)$ (depending on $R$) such that, for $\mc S(R) = \{x : V(x) < R\}$,
\be \label{eq:Minorization}
\sup_{x,y \in \mc S(R)} \|\delta_x \P - \delta_y \P\|_{\TV} \le 2 (1-\alpha).
\ee
\end{assumption}

While weaker conditions are also used, typically they are unnecessary when the collection of kernels are all absolutely continuous with respect to Lebesgue measure. We then have the following variation of Harris' theorem from \citet{hairer2011yet}.
\begin{theorem}[\citet{hairer2011yet}, Theorem 3.1] \label{thm:HM2011}
 Suppose $\P$ satisfies assumptions \ref{ass:Lyapunov} and \ref{ass:Minorization}. Then there exists $\bar \alpha \in (0,1)$ and $\theta > 0$ such that
 \be \label{eq:StrictContraction}
 d_{\theta}(\nu_1 \P, \nu_2 \P) \le \bar \alpha d_{\theta}(\nu_1,\nu_2)
 \ee
 for any two probability measures $\nu_1,\nu_2 \in P(\X)$. That is, $\P$ is geometrically ergodic (or $V$-uniformly ergodic).
\end{theorem}
Observe that iterating the estimate in \eqref{eq:StrictContraction} and letting $\nu_1 = \nu$, and $\nu_2 = \nu^\ast$, the invariant measure, gives the result
\be
d_{\theta}(\nu \P^n,\nu^\ast) \le \bar \alpha^n d_{\theta}(\nu,\nu^\ast),
\ee
so that convergence toward the target occurs at an exponential rate. This differs slightly from the usual notion of geometric ergodicity in the MCMC literature given in \eqref{eq:Gergo} below. In addition to being somewhat more convenient, the two results are essentially equivalent.
\begin{remark}
 Suppose $\P$ satisfies Theorem \ref{thm:HM2011}. Then there exists a $C< \infty$ such that
 \be \label{eq:Gergo}
 \sup_{|\varphi| < 1+V} \int \varphi(y) (\delta_x \P^n - \nu^\ast)(dy) \le C \bar \alpha^n V(x).
 \ee
\end{remark}
Thus we will typically use the definition of geometric ergodicity in \eqref{eq:StrictContraction}, with the understanding that it implies the usual notion in \eqref{eq:Gergo}. Our first result shows that the exact algorithm is geometrically ergodic. 
\begin{theorem} \label{thm:Gergo}
 Let $\P$ be the Markov transition operator with update rule given by  \eqref{eq:ExactAlgorithm}. Then for any $c < 1/2$
 \begin{enumerate}
 \item The function
\be \label{eq:Lyapunov}
V\big(\eta, \beta, \sigma^2, \xi\big) = \frac{\|W \beta\|^2}{\sigma^2} + \xi^2 + \sum_{j=1}^p \left[\frac{\sigma^{2c}}{|\beta_j|^{2c}} + \frac{\eta_j^c  |\beta_j|^c}{\sigma^{c}} + \eta_j^c \right] 
\ee
 is a Lyapunov function of $\P$, even if no truncation of the prior on $\eta$ is used (i.e. $b=0$ in \eqref{eq:prior_main}). 
 \item If $b>0$ in \eqref{eq:prior_main}, $\P$ is geometrically ergodic in the sense of \eqref{eq:StrictContraction}.
 \end{enumerate}
\end{theorem}

\begin{remark}
 If $p \le N$ and $W$ is full-rank, then $\P$ is geometrically ergodic without any truncation of the prior on $\eta$. That is, one can take the constant $b=0$ in \eqref{eq:prior_main}.
\end{remark}

The Lyapunov function in \eqref{eq:Lyapunov} is somewhat unusual in that -- as a function of $\beta_j^2 \sigma^{-2}$ -- it both grows at infinity and has a pole at zero, whereas most commonly encountered Lyapunov functions simply grow at infinity and are bounded on compact sets containing the origin. This is necessary because showing the minorization condition in Assumption \ref{ass:Minorization} requires a uniform bound on the total variation distance between any two densities in the set 
\be
\left\{ p_1(\eta_j \mid \beta_j,\sigma^2,\xi) = \frac{e^{-m_j}}{\Gamma\left(0,m_j(1+b) \right)} \frac{1}{1+\eta_j} e^{-m_j \eta_j} \1\{\eta_j > b\} : (\beta,\sigma^2,\xi) \in \mc S(R) \right\}
\ee
for any $0 < R < \infty$ and every $j$, where $\Gamma(0,x)$ is the upper incomplete gamma function defined in \eqref{eq:incomp_gam}, $\mc S(R)$ is as defined in Assumption \ref{ass:Minorization} and $m_j = \beta_j^2 \xi/(2 \sigma^2)$. Clearly, minorization requires bounding $\beta_j^2 \sigma^{-2}$ away from infinity inside sublevel sets $\mc S(R)$ of $V$, since $p_1(\eta_j \mid \beta_j,\sigma^2,\xi) \to \delta_0(\eta_j)$ as $\beta_j^2 \sigma^{-2} \to \infty$, and $\delta_0(\eta_j)$ has total variation distance 1 from every measure with a continuous density. We must also bound $\beta_j^2 \sigma^{-2}$ away from zero for every $j$, since the limiting distribution is improper there. So the Lyapunov function must have sublevel sets in which $\beta_j^2 \sigma^{-2}$ is bounded away from both zero and infinity. The former is accomplished by the term $\sum_j \sigma^{2c} |\beta_j|^{-2c}$ regardless of the values of $N$ and $p$. However, truncation of the prior on $\eta$ is needed to achieve the latter. If $p>N$, then the function $\|W\beta \|^2 \sigma^{-2}$ is constant in the kernel of the linear function $W : \R^p \to \R^N$, and the only other appearance of positive powers of $\beta$ in \eqref{eq:Lyapunov} is in the term $\eta_j^c |\beta_j|^c \sigma^{-c}$. Thus, in order to ensure that $\beta$ cannot go to infinity in the kernel of $W$, we must have $\eta_j$ bounded away from zero in sublevel sets. In contrast, when $p \le N$ and $W$ is full rank, the term $\|W \beta\|^2 \sigma^{-2}$ is enough to keep $\beta_j^2 \sigma^{-2}$ bounded away from infinity in sublevel sets.

\subsection{Perturbation bounds} \label{sec:PerturbationBounds}
We now turn to proving error bounds for the approximate algorithm. Often in algorithm development, it is useful to identify computational bottlenecks, then design some computationally faster numerical approximation to alleviate the bottleneck. For example, the algorithm in section \ref{sec:ApproxAlgo} substitutes an approximation of the matrix $WDW'$ that is fast to compute when $\eta^{-1}$ is near sparse. This defines some new Markov operator $\P_\epsilon$. One then typically wants to know that the long-time dynamics of $\P_\epsilon$ will approximate those of $\P$. For example, we might ask whether the invariant measure(s) of $\P_\epsilon$ (assuming they exist) are close to the invariant measure $\nu^\ast$ of $\P$, or whether the usual time-averaging estimator 
\be
n^{-1} \sum_{k=0}^{n-1} \varphi(X_k^\epsilon)
\ee
for $X_k^\epsilon \sim \nu \P_\epsilon^k$ gives a good approximation to expectations under $\nu^\ast$. This is referred to as perturbation theory. This approach has significant advantages over studying $\P_\epsilon$ directly. For example, it is not necessary to show separately that $\P_\epsilon$ is geometrically ergodic (though in many cases it is), nor is it necessary that $\P_\epsilon$ has a unique invariant measure. Moreover, closeness of the invariant measure(s) of $\P_\epsilon$ to $\nu^\ast$ can be demonstrated as a corollary of bounds on the dynamics of the two chains.

Perturbation bounds for uniformly ergodic Markov operators date at least to \cite{mitrophanov2005sensitivity}, but more recent work \cite{johndrow2017error, rudolf2017perturbation, pillai2014ergodicity} focuses on the unbounded state space setting and the use of Lyapunov functions. One effectively needs two conditions, the first of which is some pointwise control of the kernel approximation error, typically in the same metric used to study convergence. In our setting one such condition is
\begin{assumption} \label{ass:Closeness}
The approximate kernel $\P_\epsilon$ satisfies
\be
\sup_{x \in \X} \|\delta_x \P - \delta_x \P_\epsilon\|_{\TV} \le \frac{\epsilon}2.
\ee
\end{assumption}
This differs from the basic error control assumption in both \cite{johndrow2017error} and \cite{rudolf2017perturbation}, which used variants of the condition $d_1(\delta_x \P,\delta_x \P_\epsilon) \le \epsilon (1+ \kappa V(x))$. In Theorem \ref{thm:Perturbation}, we show that the two conditions are essentially equivalent when one has control over stochastic stability of $\P_\epsilon$ via a Lyapunov function. It is often convenient if this is also a Lyapunov function of $\P$, so that the same weighted norms can be used to metrize convergence.
\begin{assumption} \label{ass:CommonLyapunov}
There exists $K_\epsilon > 0$ and $\gamma_\epsilon \in (0,1)$ such that
\be
(\P_\epsilon V)(x) \le \gamma_\epsilon V(x) + K_\epsilon.
\ee
\end{assumption}
Before stating our main results we point out an important general property of Lyapunov functions and weighted total variation metrics that allows us to use Assumption \ref{ass:Closeness} instead of an approximation error condition in $d_\theta$.
\begin{remark}
 If $\P$ has a Lyapunov function, then there must exist a Lyapunov function $V$ of $\P$ for which $V^2$ is also a Lyapunov function. In particular, if $\tilde V$ is a Lyapunov function of $\P$, then $V = \tilde V^{1/2}$ is a Lyapunov function of $\P$ whose square is also a Lyapunov function. Moreover, if $\P$ satisfies Assumption \ref{ass:Minorization} for $V^2$, then it also satisfies Assumption \ref{ass:Minorization} for $V$. Thus, if Theorem \ref{thm:HM2011} holds in the weighted total variation norm built on $V^2$, then it also holds in the weighted total variation norm built on $V$.
\end{remark}
\begin{proof}
The first part is proved in \cite{meyn1993markov}, but the argument is simple so we reproduce it here. Let $\tilde V$ be a Lyapunov function of $\P$ and put $V = \tilde V^{1/2}$. There exist $\tilde \gamma$ and $\tilde K$ so that
\be
(\P \tilde V)(x) \le \tilde \gamma \tilde V(x) + \tilde K.
\ee
By Jensen's inequality
 \be
 (\P \tilde V^{1/2})(x) \le (\P \tilde V(x))^{1/2} \le \sqrt{\tilde \gamma \tilde V(x) + \tilde K} &\le \sqrt{\tilde \gamma} \sqrt{ \tilde V(x)} + \sqrt{\tilde K} \\
 &\equiv \gamma V(x) + K.
 \ee
 and thus $V$ is a Lyapunov function for which $V^2$ is also a Lyapunov function. For the second part, we only need to show minorization on sublevel sets of $V$. Since $\P$ satisfies Assumption \ref{ass:Minorization} for $\tilde V$, for every $R^2> 0$ there exists $\alpha_{R^2} \in (0,1)$ (depending on $R$) so that
 \be
 \sup_{x,y \in \tilde{\mc S}(R^2)} \| \delta_x \P - \delta_y \P\|_{TV} \le 2(1-\alpha_{R^2})
 \ee
 for $\tilde{\mc S}(R^2) = \{x : \tilde V(x) < R^2\}$. But then Assumption \ref{ass:Minorization} is also satisfied for $V$, since letting $\mc S(R) = \{ x : V(x) < R\}$, we have
 \be
 \sup_{x,y \in \mc S(R)} \| \delta_x \P - \delta_y \P\|_{TV} \le 2(1-\alpha_{R})
 \ee
\end{proof}

The next result shows that under Assumptions \ref{ass:CommonLyapunov} and \ref{ass:Closeness}, we can obtain various bounds on the accuracy of $\P_\epsilon$.
\begin{theorem} \label{thm:Perturbation}
 Let $V$ be a Lyapunov function of $\P$ and $\P_\epsilon$ for which $V^2$ is also a Lyapunov function of $\P$ and $\P_\epsilon$. Suppose $\P$ satisfies Assumptions \ref{ass:Lyapunov} and \ref{ass:Minorization}, and $\P_\epsilon$ satisfies Assumptions \ref{ass:Closeness} and \ref{ass:CommonLyapunov}, all defined with respect to Lyapunov function $V$. Then, with $\psi(\epsilon) = C^* \sqrt{\epsilon}$ for a constant $C^*>0$, we have that for any probability measures $\nu_1,\nu_2$, 
 \be \label{eq:dPnPen}  
 \begin{aligned}                                      
  d_{\theta}(\nu_1 \P_\epsilon^n, \nu_2 \P^n) &\leq
                                       \frac{\psi(\epsilon)}{1-\bar \alpha}
                                       \left(\frac{1+ K_\epsilon}{1-\gamma_\epsilon}\right)
                                       + \psi(\epsilon) ( \nu_1 V
  ) (\bar \alpha \vee \gamma_\epsilon)^{n-1}n \\
  &+ \bar \alpha^n  d_{\theta}(\nu_1 , \nu_2), 
\end{aligned}
\ee
 which immediately implies that if $\nu_\epsilon^\ast$ is any invariant measure of $\P_\epsilon$
 \be
 d_\theta(\nu_\epsilon^\ast, \nu^\ast) &\le \frac{\psi(\epsilon)}{1-\bar \alpha} \left( \frac{1+ K_\epsilon}{1-\gamma_\epsilon} \right).
 \ee
 Furthermore, there exists $C,c_0,c_1 < \infty$ so that for any $|\varphi| < \sqrt{V}$
\be \label{eq:VariationBound}
\begin{aligned}
  \mathbf E \left( \frac1n \sum_{k=0}^{n-1} \varphi(X_k^\epsilon) - \nu^\ast \varphi  \right)^2 &\le 3 C^2 \psi(\epsilon) c_0  \\
  &+ \frac{3 C^2}{n} \left(\frac{2(1+K_{\epsilon})}{1-\gamma_\epsilon} + \frac{ \psi(\epsilon) c_1 V(x_0)}{1-\sqrt{\gamma_\epsilon}} \right) + \mathcal{O}\left( \frac{1}{n^2} \right),
  \end{aligned}
\ee
with $X_0^\epsilon = x_0$ and $X_k^\epsilon \sim \delta_{x_0} \P_\epsilon^{k-1}$. Moreover, the constants $C,c_0,c_1$ satisfy
\be
C &\le \frac{1 \wedge \nu^\ast V}{1-\bar \alpha_{(1/2)}}, \quad c_0 \le 2 + 5 \frac{K_\epsilon \vee \sqrt{K_\epsilon}}{(1-\sqrt{\gamma_\epsilon})^2} \quad c_1 = \left( 2 + \frac{\sqrt{K_\epsilon}}{1-\sqrt{\gamma_\epsilon}}  \right),
\ee
where $\nu^\ast$ is the unique invariant measure of $\P$ and $1-\bar \alpha_{(1/2)}$ is the spectral gap in the weighted total variation norm built on $V^{1/2}$ with an appropriate $\theta_{(1/2)}>\theta$. 
\end{theorem}
\begin{proof}
 The key to the result is the following Lemma. This result improves upon the result in \cite[Section 4.1]{johndrow2017error}, which showed that control in unweighted total variation was sufficient if the approximation error was tuned to the current state. This result requires only uniform control in the total variation or Hellinger distance over the entire state space.
 \begin{lemma} \label{lem:ApproxErrorV}
Suppose $V$ is a Lyapunov function of both $\P$ and $\P_\epsilon$ for which $V^2$ is also a Lyapunov function, and that Assumption \ref{ass:Closeness} holds. Then
with $\psi(\epsilon) = C^* \sqrt{\epsilon}$ we have
\be
\int (1+V(y)) |\delta_x \P - \delta_x \P_\epsilon |(dy) \le \psi(\epsilon) (1 + V(x))
\ee 
\end{lemma}
\begin{proof}
Write $\delta_x \P,\delta_x \P_\epsilon$ as densities 
\be
p(x,y) = \frac{d \delta_x \P}{d\nu}(y), \quad p_\epsilon(x,y) =  \frac{d \delta_x \P_\epsilon}{d\nu}(y)
\ee
with respect to an appropriate dominating measure $\nu$, which in our applications is just Lebesgue measure, and put $\tilde V = V^2$. Then
\be
I(x) &= \int V(y) |p(x,y) - p_\epsilon(x,y)| dy \\
I(x)^2 &= \left(\int V(y) (p^{1/2}(x,y) + p^{1/2}_\epsilon(x,y)) |p^{1/2}(x,y) - p^{1/2}_\epsilon(x,y)| dy \right)^2 \\
&\le 2 \left(\int \tilde V(y) (p(x,y) + p_\epsilon(x,y)) dy\right) \left(\int  (p^{1/2}(x,y) - p^{1/2}_\epsilon(x,y))^2 dy \right) \\
&\le 2 \left(\int \tilde V(y) (p(x,y) + p_\epsilon(x,y)) dy\right) \left(\int  |p(x,y) - p_\epsilon(x,y)| dy \right) \\
&\le 2 \left( (\gamma_0+\gamma_\epsilon) \tilde V(x) + K_0+K_\epsilon \right) \| \delta_x \P - \delta_x \P_\epsilon\|_{\TV} \\
I(x) &\le \sqrt{2} (\sqrt{\gamma_0+\gamma_\epsilon} \tilde V^{1/2}(x) + \sqrt{K_0+K_\epsilon}) \|\delta_x \P - \delta_x \P_\epsilon\|_{\TV}^{1/2} \\
&\le \sqrt{\epsilon} \left( \sqrt{K_0+K_\epsilon} + \sqrt{\gamma_0 + \gamma_\epsilon} V(x) \right) \\
&\le 2 \sqrt{\epsilon} \sqrt{K_0 + K_\epsilon+2} \left( \frac12 + V(x) \right)
\ee 
where we used the fact that $\gamma_0+\gamma_\epsilon < 2$. So finally we obtain
\be
\int (1+\sqrt{V(y)}) |\delta_x \P - \delta_x \P_\epsilon |(dy) &\le 2 \sqrt{\epsilon} \sqrt{K_0 + K_\epsilon+2} \left(\frac12 + V(x) \right) + \frac{\epsilon}2 \label{eq:BoundV} \\
&\le \psi(\epsilon) (1 +V(x)) 
\ee
for $\psi(\epsilon) = 2 \sqrt{\frac{\epsilon}2} \sqrt{K_0 + K_\epsilon+2}$, and we used the fact that $\epsilon < 2$ so $2\sqrt{ \epsilon} > \epsilon/2$.
\end{proof}
Now, \eqref{eq:dPnPen} follows from \cite[equation (10)]{johndrow2017error} and \eqref{eq:VariationBound} follows from \cite[Theorem 1.11]{johndrow2017error} after substituting $\psi(\epsilon)$ for $\epsilon$.
\end{proof}

The next result follows immediately from \eqref{eq:WeightedTV} and \eqref{eq:BoundV}.
\begin{corollary} \label{cor:ClosenessWtd}
 Suppose $\P$ satisfies Assumption \ref{ass:Closeness} and $V,V^2$ are Lyapunov functions of $\P$. Then with $\psi(\epsilon) = C^* \sqrt{\epsilon}$ 
 \be
 d_1(\delta_x \P, \delta_x \P_\epsilon) \le \psi(\epsilon) (1+ V(x)).
 \ee
\end{corollary}

Although these bounds are fairly transparent, a few comments are in order. First, all of the error bounds decrease to zero at rate $\sqrt{\epsilon}$. Second, $\P_\epsilon$ has an asymptotic bias proportional to $\sqrt{\epsilon} (1-\bar \alpha)^{-1}$, and all of the constants will be small when $\sqrt{\epsilon}$ is small relative to the spectral gap $1-\bar \alpha$. The implication is that there is more ``room'' to use approximations when the exact chain mixes rapidly, and the bias will be small when $\epsilon$ is small relative to the spectral gap. Moreover, it seems that if $\epsilon$ is gradually decreased to zero as the chain extends, one can achieve an exact algorithm using only approximate kernels; the conditions under which this occurs are made precise in the next section. 

In finite time, the practical tradeoff is between using a longer path from $\P_\epsilon$ with larger $\epsilon$, which results in larger bias but smaller variance, or a shorter path from $\P_\epsilon$ with smaller $\epsilon$, which has smaller bias but much larger variance. These tradeoffs are evident from \eqref{eq:VariationBound}, which gives an estimate of the squared error risk for the time-averaging estimator. Morally this is no different from choosing between two Markov kernels with the same invariant measure, where one mixes slowly but has low computational cost per step, and one mixes rapidly but has high computational cost per step.

The next result shows that our approximate horseshoe algorithm satisfies Assumptions \ref{ass:Closeness} and \ref{ass:CommonLyapunov}.
\begin{theorem} \label{thm:Close}
Let $\P_\epsilon$ be the Markov transition operator that uses the same update rule as $\P$ in Theorem \ref{thm:Gergo}, but approximates $WDW'$ and $D W$ by $WD_\delta W'$ and $D_\delta W'$ as in Section \ref{sec:ApproxAlgo} with a fixed value of $\delta$. Then 
\begin{enumerate}
\item $V(x)$ defined in \eqref{eq:Lyapunov} is a Lyapunov function of $\P_\epsilon$.
\item There exists a constant $C>0$ depending on $W,z$ such that for any $x \in \X$, 
\be \label{eq:PointwiseEtaBeta}
\sup_{x \in \X} \| \delta_x \P - \delta_x \P_\epsilon\|_{\TV} \le C \sqrt{\delta} + \bigO(\delta), 
\ee
where $\delta$ is the threshold tuning parameter for the matrix approximation in \eqref{eq:WDWApprox}.
\end{enumerate}
The result remains true even if no truncation of the prior on $\eta$ is used (i.e. we have $b=0$ in \eqref{eq:ExactAlgorithm}). In addition, if $b>0$ in \eqref{eq:ExactAlgorithm}, then $\P_\epsilon$ is geometrically ergodic.
\end{theorem}

It follows that Theorem \ref{thm:Perturbation} holds for our approximate algorithm using the Lyapunov function defined by the square root of \eqref{eq:Lyapunov}.

This result gives both a guarantee that taking $\delta$ sufficiently small, one can achieve any desired level of approximation error, and the rate at which the approximation error goes to zero with $\delta$. Of course, without knowing exactly the value of all of the constants, we cannot give exact estimates of the approximation error for any $\delta$. Section \ref{sec:ApproxError} focuses on choosing $\delta$ in practice.

\subsection{Exact algorithms using only approximate kernels} \label{sec:Exact}
We now give results showing how to construct ``exact'' versions of algorithms that only use approximating kernels. These results hold under general conditions and are not specific to the algorithms in Section \ref{sec:Algos}. 

In the MCMC literature, an algorithm is typically considered exact if
\be
\| \nu \P^k - \nu^\ast \|_{\TV} \to 0
\ee
as $k \to \infty$ for any starting measure $\nu$. Similarly, one might require that time averages converge to expectations under the target, e.g.
\be
\lim_{n \to \infty} \E \left( n^{-1} \sum_{k=0}^{n-1} \varphi(X_k^\epsilon) - \nu^\ast \varphi \right)^2 = 0 
\ee
for some large class of functions $\varphi$. Of course, in most cases any pathwise quantity from $\P$ will still have bias for any finite running time (though see the method of \cite{jacob2017unbiased} on de-biasing using couplings), and since one only ever has access to finite-time pathwise quantities, there is little practical difference between this guarantee and that given by Theorem \ref{thm:Perturbation}. Nonetheless, this property is often seen as desirable. The following result shows that we can achieve this by employing a sequence of approximating transition kernels $\P_{\epsilon_k}$ at step $k$, and taking $\epsilon_k \to 0$ as $k \to \infty$ at a slow rate. Notice that while this result uses the approximation condition $d_\theta(\delta_x \P, \delta_x \P_\epsilon) \le \epsilon_k (1+ V(x))$, this is implied by Assumption \ref{ass:Closeness} by \eqref{eq:WeightedTV} and \eqref{eq:BoundV} and the fact that the norms $d_1,d_\theta$ are equivalent (see \cite{hairer2011yet}). 
\begin{theorem}
Let $\{\epsilon_k\} \in [0,1]^{\infty}$. Consider a Markov chain $\{X_k\}$ defined by $X_0 \sim \nu$, $X_k \mid X_{k-1} \sim \P_{\epsilon_k}(X_{k-1},\cdot)$, and denote $\P_{\epsilon_1} \P_{\epsilon_2} \cdots \P_{\epsilon_n} \equiv \prod_{k=1}^n \P_{\epsilon_k}$. 
 Suppose that for every $\epsilon_k$, 
 \be
 d_\theta(\delta_x \P,\delta_x \P_{\epsilon_k}) \le \epsilon_k (1+V(x)),
 \ee
 and that for every $\epsilon \in [0,1)$,
 $(\P_{\epsilon} V)(x) \le \gamma_{\epsilon} V(x) + K_\epsilon$.
 Suppose further that
 $\tilde \gamma = \sup_{\epsilon \le 1} \gamma_\epsilon < 1$ and $\tilde K = \sup_{\epsilon \le 1} K_\epsilon < \infty$.
 Then if
 \be \label{eq:SumEpsAlpha}
 \lim_{n \to \infty} \sum_{k=0}^{n} \epsilon_{n-k} \bar \alpha^k = 0,
 \ee
 we have
 \be
 \lim_{n \to \infty} \bigg\| \nu \prod_{k=0}^{n} \P_{\epsilon_k} - \nu^\ast \bigg\|_{\TV} = 0,
 \ee
 and if
 \be \label{eq:SumSqrtEps}
 \lim_{n \to \infty} n^{-2} \sum_{k=1}^{n} \sum_{j=1}^n \sqrt{\epsilon_j \epsilon_k} = 0, 
 \ee
 then for any function $|\varphi| < \sqrt{V}$,
 \be
\lim_{n \to \infty} \E\left( n^{-1} \sum_{k=0}^{n-1} \varphi(X_k^\epsilon) - \nu^\ast \varphi \right)^2 = 0.
 \ee
\end{theorem}
\begin{proof}
 Let $\tilde \P_k$ be the $k$-step transition kernel $\tilde \P_k = \prod_{j=1}^{k} \P_{\epsilon_j}$. By \cite[Corollary 1.6]{johndrow2017error}
 \be
 d_\theta(\nu_1 \tilde \P_n, \nu_2 \P^{n}) \le \bar \alpha d_\theta(\nu_1 \tilde \P_{n-1}, \nu_2 \P^{n-1}) + \epsilon_n (1+\nu_1 \tilde \P_{n-1} V).
 \ee
 Iterating this estimate we obtain
 \be
 d_\theta(\nu_1 \tilde \P_n, \nu_2 \P^{n}) \le \bar \alpha^n d_\theta(\nu_1,\nu_2) + \sum_{k=0}^{n-1} \bar \alpha^k \epsilon_{n-k}  (1+\nu_1 \tilde \P_{n-k-1} V). 
 \ee
 Since $V$ is a Lyapunov function of $\P_{\epsilon_k}$ for every $\epsilon_k$, we obtain using the uniform bound on the constants
 \be
 \nu_1 \tilde \P_k V &\le \nu_1 \tilde \P_{k-1} (\gamma_{\epsilon_k} V + K_{\epsilon_k}) \\
 &\le \prod_{j=1}^k \gamma_{\epsilon_j} \nu_1 V + \sum_{j=0}^{k-1} \gamma_{\epsilon_j} K_{\epsilon_j} \\
 &\le \tilde \gamma^k (\nu_1 V) + \frac{\tilde K}{1-\tilde \gamma} \le \frac{\nu_1 V +\tilde K}{1-\tilde \gamma}
 \ee
 so
 \be
 d_\theta(\nu_1 \tilde \P_n, \nu_2 \P^{n}) &\le \bar \alpha^n d_\theta(\nu_1,\nu_2) + \sum_{k=0}^{n-1} \bar \alpha^k \epsilon_{n-k} \left(1 + \frac{\nu_1 V + \tilde K}{1-\tilde \gamma} \right) \\
 &= \bar \alpha^n d_\theta(\nu_1,\nu_2) + \left(1 + \frac{\nu_1 V + \tilde K}{1-\tilde \gamma} \right) \sum_{k=0}^{n-1} \bar \alpha^k \epsilon_{n-k}.
 \ee
 Substituting $\nu_2 = \nu^\ast$ we obtain
 \be
 d_\theta(\nu_1 \tilde \P_n, \nu^\ast) &\le \bar \alpha^n d_\theta(\nu_1,\nu^\ast) + \left(1 + \frac{\nu_1 V + \tilde K}{1-\tilde \gamma} \right) \sum_{k=0}^{n-1} \bar \alpha^k \epsilon_{n-k}.
 \ee
Now using that $d_\theta$ bounds total variation from above, the result follows whenever $\lim_{n \to \infty}\sum_{k=0}^{n-1} \bar \alpha^k \epsilon_{n-k} = 0$. 
 
 To show the second part, we apply \cite[equation (45)]{johndrow2017error} to obtain
 \be
 \frac1n \sum_{k=0}^{n-1} \varphi(X_k^\epsilon) - \nu^\ast \varphi = \frac{U(X_0^\epsilon)-U(X_n^\epsilon)}{n} + \frac1n M_n^\epsilon + \frac1n \sum_{k=0}^{n-1} (\P_{\epsilon_{k+1}}-\P) U(X_k^\epsilon)
 \ee
 where $\tilde \varphi = \varphi - \nu^\ast \varphi$,
 \be
 U = \sum_{k=0}^{\infty} \P \tilde \varphi, \quad M_n^\epsilon = \sum_{k=1}^n m_k^\epsilon, \quad m_{k+1}^\epsilon = U(X_k^\epsilon) -\P_{\epsilon_{k+1}} U(X_k^\epsilon). 
 \ee
Put  $C = \frac{(1 \vee \mu V^{1/2})}{\theta_{(1/2)}(1-\bar \alpha_{(1/2)})}$, where $1-\bar \alpha_{(1/2)}$ is the spectral gap of $\P$ in the weighted total variation norm built on $V^{1/2}$ with an appropriate $\theta_{(1/2)}$. Simple modifications to the calculations in \cite{johndrow2017error} using the uniform bounds on $K_\epsilon$ and $\gamma_\epsilon$ give
\be \label{eq:MartingalePart}
\E\left[\left(\frac1n M_n^{\epsilon} \right)^2\right]  &\le 2 C^2 \left( \frac1n +
\frac{\tilde K}{n\{1-\tilde \gamma\}} + \frac{1-\tilde \gamma^n}{n^2 
\{1-\tilde \gamma\}} V(x_0) \right),
\ee
and
\be
\E [(\P_{\epsilon_k} -\P) U(X_k^\epsilon) (\P_{\epsilon_j}-\P) U(X_j^{\epsilon})] \le C^2 \sqrt{\epsilon_k \epsilon_j} \left[ c_0 +   c_1 \tilde \gamma^{(j \wedge k)/2} V(x_0) \right]
\ee
so
\be 
\sum_{k=0}^{n-1} \sum_{j=0}^{n-1} \mathbf E \left[ (\P_{\epsilon_{k+1}}-\P) U(X_k^{\epsilon}) (\P_{\epsilon_{j+1}}-\P) U(X_j^{\epsilon}) \right] &\le \sum_{k=0}^{n-1} \sum_{j=0}^{n-1} C^2 \sqrt{\epsilon_{k+1} \epsilon_{j+1}} \left[ c_0 +  c_1 \tilde \gamma^{(j \wedge k)/2} V(x_0) \right] \\
&\le C^2 \left[ c_0 +   c_1 V(x_0) \right] \sum_{k=1}^{n} \sum_{j=1}^{n} \sqrt{\epsilon_k \epsilon_j} \\
n^{-2} \sum_{k=0}^{n-1} \sum_{j=0}^{n-1} \mathbf E \left[ (\P_{\epsilon_{k+1}}-\P) U(X_k^{\epsilon}) (\P_{\epsilon_{j+1}}-\P) U(X_j^{\epsilon}) \right] &\le \frac1{n^2} C^2 \left[ c_0 + c_1 V(x_0) \right] \sum_{k=1}^{n} \sum_{j=1}^{n} \sqrt{\epsilon_k \epsilon_j} \\
&\le \frac1{n^2} \tilde C \sum_{k=1}^{n} \sum_{j=1}^{n} \sqrt{\epsilon_k \epsilon_j} \label{eq:PerturbationPart}
\ee
Finally
\be \label{eq:BoundaryPart}
\frac{(U(X_0^\epsilon) -U(X_n^\epsilon))^2}{n^2} &\le \frac{4C^2}{n^2} \left( 1 + (1+\tilde \gamma^n) V(x_0) + \frac{\tilde K}{1-\tilde \gamma} \right). 
\ee
The quantities \eqref{eq:MartingalePart} and \eqref{eq:BoundaryPart} converge to zero at rate $n^{-1}$ and $n^{-2}$, respectively. The quantity in \eqref{eq:PerturbationPart} will converge to zero whenever
\be
\lim_{n \to \infty}  n^{-2} \sum_{k=1}^{n} \sum_{j=1}^{n} \sqrt{\epsilon_k \epsilon_j} = 0;
\ee
a sufficient condition is that $\sum_{k=0}^\infty \sum_{j=0}^\infty \sqrt{\epsilon_k \epsilon_j}$ is finite, in which case convergence to zero occurs at rate $n^{-2}$. This completes the proof of the second part.
\end{proof}

We note that the condition in \eqref{eq:SumSqrtEps} is exceedingly weak. Essentially, partial sums $\sum_{k=0}^{n-1} \epsilon_k$ need only grow slower than $n$. This is easily satisfied by many divergent series, such as the harmonic series.  The following remark shows that $\epsilon_k \to 0$ is sufficient for \eqref{eq:SumEpsAlpha} to hold.
\begin{remark} \label{rem:ConvergenceOfEpsilonSeries}
 Suppose $\epsilon_k \to 0$. Then $\lim_{n \to \infty} \sum_{k=0}^{n-1}\bar \alpha^k \epsilon_{n-k} = 0$.
\end{remark}
\begin{proof}
For any $k \le 0$ define $\epsilon_k = 1$. We have that
\be
\lim_{n \to \infty} \sum_{k=0}^{n-1} \bar \alpha^k \epsilon_{n-k} &\le \lim_{n \to \infty} \sum_{k=0}^{\infty} \bar \alpha^k \epsilon_{n-k} \\
&= \sum_{k=0}^{\infty} \bar \alpha^k (\lim_{n \to \infty} \epsilon_{n-k}) \\
&= 0, 
\ee
where in the second step we used that $\sum_{k=0}^{\infty} \bar \alpha^k \epsilon_{n-k} \le \sum_{k=0}^{\infty} \bar \alpha^k = (1-\bar \alpha)^{-1} < \infty$ and applied the dominated convergence theorem. This completes the proof.
\end{proof}
Taken together, the results in this section indicate that if one takes $\epsilon_k$ to zero, an exact algorithm can be obtained, including guarantees that time averages converge to expectations under the posterior measure uniformly over a large class of functions. This guarantee is similar to the guarantee of exactness for overdamped Langevin taking step sizes to 0 \cite{durmus2016high}. However, even one step of exact Langevin is always computationally infeasible, so there is no upper bound on the computational cost of the algorithm as the step sizes decrease to zero. In contrast, in our setting the exact algorithm is just a polynomial time rather than a linear time algorithm per step. The behavior of sequences of approximate kernels with decreasing approximation error can also be compared with that of pseudo-marginal MCMC, for which one usually can choose between a slowly mixing algorithm that has low per step cost and a faster mixing algorithm that has higher cost, both of which are exact. Similarly, in our case there is clearly some sequence $\epsilon_k$ that optimally trades off bias and variance for a fixed computational budget, but we must typically rely on empirical analysis to assess this. As such, it is often more practical to choose a single $\epsilon$ empirically, which we consider in the next section.

\section{Analysis of approximation error} \label{sec:ApproxError}
The results in the previous section show that the pointwise approximation error in $d_\theta$ decreases at rate $\delta^{-1/2}$. However, because we do not have a quantitative estimate of the spectral gap $1-\bar \alpha$, it is difficult to know how small $\delta$ needs to be to make the bias terms in Theorem \ref{thm:Perturbation} small. Here we give both some heuristic arguments for what a ``small enough'' value of $\delta$ will typically be, as well as an empirical analysis of the bias induced by different fixed values of $\delta$. The latter is done by comparing paths from the exact and approximate algorithms for different values of $\delta$ for problem sizes where it is feasible to run the exact algorithm. Of course, one can always achieve an asymptotically exact algorithm by using a decreasing sequence $\epsilon_k$ of approximation errors per the results of Section \ref{sec:Exact}. However, in practice simplicity is often highly valued, so the analysis in this section is aimed at choosing a default value of $\delta$ to be used in cases where the approximation error is held fixed over time.

\subsection{Heuristics for choosing $\delta$}
The threshold $\delta$ should satisfy $\delta \ll 1$. To see why, 
suppose that at most $N$ of the diagonal entries of $D$ are nonzero and the columns of $W$ corresponding to nonzero diagonal entries of $D$ are orthogonal. In this case 
\be
M_\xi = I + \xi^{-1} WDW' = W(I+\xi^{-1} D)W
\ee
with eigenvalues $(1+\eta_j^{-1} \xi^{-1})$ for $j : \lambda_j \ne 0$, so $M^{-1}$ has eigenvalues $(1+\eta_j^{-1} \xi^{-1})^{-1}$. In this idealized setting, an approximation $D_\delta$ that thresholds some of the nonzero $\eta_j$ will be very accurate precisely when $\xi^{-1} \eta_j^{-1} \ll 1$. This heuristic will be approximately correct whenever the columns of $W$ are not highly collinear. When highly collinear columns do exist, thresholding $\xi^{-1} \eta_j^{-1}$ at $\delta$ can result in a much worse approximation, since $m$ highly collinear columns with small $\eta_j$ correspond approximately to a single eigenvector of $\xi^{-1} WD W'$ with eigenvalue $m$ times what the above calculation would suggest. Thus, if we know that $W$ has low coherence, then a value of $\delta = 10^{-2}$ might be sufficient. To compensate for the likely presence of highly collinear columns in applications, we suggest taking $\delta = 10^{-4}$. We have detected no evidence of loss of accuracy using this threshold, even with highly dependent design matrices. In practice, the overriding factor in choosing such thresholds is computational budget, so we suggest taking $\delta$ as small as possible while satisfying computational constraints. 

\subsection{Empirical analysis}
We now empirically assess the approximation error for time averages by running the approximate algorithm for different values of $\delta$. The results in the following sections are based on a series of simulations in which the data are generated from
\be \label{eq:simulation}
w_i &\stackrel{iid}{\sim} \No_p(0,\Sigma) \\
z_i &\sim \No(w_i \beta, 4) \\
\beta_j &= \begin{cases} 2^{-(j/4-9/4)} & j<24 \\ 0 & j>23 \end{cases},
\ee
In contrast to typical simulations studies for shrinkage priors, in which signals are typically either zero or large relative
to the residual variance, we use a decreasing sequence of signals. The largest signal size is $4$, while $18$ out of the $23$ signals are smaller than the residual variance. For all of the problem sizes that we consider, this
results in bimodal marginal posterior for at least some of the $\beta_j$, increasing the difficulty of sampling from the
target. We consider two cases for $\Sigma$: the identity and $\Sigma_{ij} = \phi^{|i-j|}$. The latter is the covariance
matrix for an autoregressive model of order 1 with autoregressive coefficient $\phi$ and stationary variance $(1-\phi^2)^{-1}$. Throughout, we put $\phi = 0.9$ when
simulating a dependent design. Because all of the nonzero signals are in the first 23 elements of $\beta$, all of the $\beta_j$
corresponding to true signals will be highly correlated a posteriori, again considerably increasing the difficulty of efficiently sampling from the target.

For analysis of the approximation error, we simulate from \eqref{eq:simulation} with $N=1,000$ and $p=10,000$ for $\delta = 10^{-2},10^{-3},10^{-4}$, and $10^{-5}$. We also run the exact algorithm twice with different random number seeds. We collect paths of length 20,000 from each simulation after discarding a burn-in of 5,000. For the first 100 entries of $\beta$, which includes the 23 non-nulls and 77 nulls, we compute (1) correlation of pathwise means between the exact and approximate algorithm, (2) correlation of pathwise variances between the exact and approximate algorithms, and (3) Kolmogorov-Smirnov statistics for comparing the approximate algorithm to the exact algorithm. Each metric is also computed between the paths from the exact algorithm using a different random number seed. This last measurement gives some notion of how much variation one can expect in the estimates just due to MCMC error. A value of $\delta$ that performs similarly by these metrics to another copy of the exact algorithm initiated with a different random seed is thus one that achieves almost undetectable approximation error.

The Kolmogorov-Smirnov statistics are shown in Figure \ref{fig:delta-ks}. While $\delta = 10^{-2}$ or $10^{-3}$ have significant bias for at least some of the marginals, when $\delta = 10^{-4}$, none of the Kolmogorov-Smirnov statistics are greater that 0.1, and most are less than $10^{-1.5} \approx 0.03$. A slight inprovement is seen in decreasing $\delta$ to $10^{-5}$, for which the distribution of Kolmogorov-Smirnov statistics is hardly distinguishable from the distribution from a replicate simulation using the exact algorithm initiated using a different random number seed. 

\begin{figure}[h]
 \includegraphics[width=0.6\textwidth]{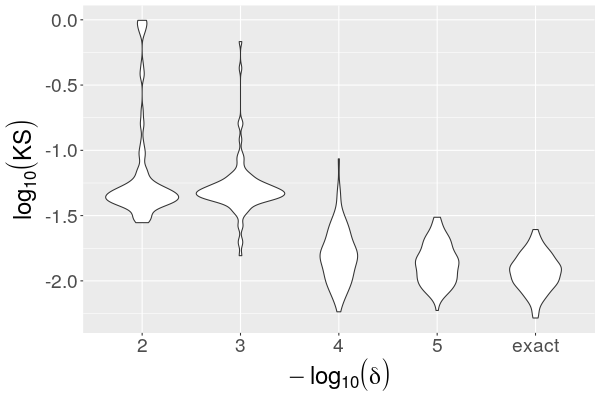}
 \caption{Distribution of Kolmogorov-Smirnov statistics comparing the marginals of 100 entries of $\beta$ for different values of $\delta$. } \label{fig:delta-ks}
\end{figure}

Table \ref{tab:DeltaCors} shows correlations between the means and variances of 100 entries of $\beta$ estimated using the exact and approximate algorithms. Similarly to the case of Kolmogorov-Smirnov statistics, significant disagreement is seen betweent the exact and approximate algorithms for $\delta = 10^{-2}$ or $10^{-3}$, but they are virtually indistinguishable for $\delta = 10^{-4}$ or $10^{-5}$. On the basis of these results, we typically choose $\delta = 10^{-4}$ in subsequent simulations.

\begin{table}[ht]
\centering
\begin{tabular}{rlrr}
  \hline
 & $-\log_{10}(\delta)$ & mean & variance \\ 
  \hline
1 & 2 & 0.98 & 0.39 \\ 
  2 & 3 & 1.00 & 0.78 \\ 
  3 & 4 & 1.00 & 0.99 \\ 
  4 & 5 & 1.00 & 1.00 \\ 
  5 & exact ($\delta = 0$) & 1.00 & 1.00 \\ 
   \hline
\end{tabular}
\caption{Correlations between estimates of means and varianes of $\beta$ based on pathwise time averages for different values of $\delta$ } \label{tab:DeltaCors}
\end{table}

\section{Analysis of computational cost}

\subsection{Estimating dependence of constants on problem size}
The results of Section \ref{sec:Theorems} prove that the exact algorithm converges toward the posterior at an exponential rate, and give explicit bounds on the approximation error of time averages from $\P_\epsilon$ as a function of path length $n$. Moreover, we know the rate at which the computational complexity of taking one step from $\P$ or $\P_\epsilon$ grows with $N$ and $p$. However, rates are not always informative about the actual computational cost of an algorithm in finite dimensions, since one typically does not have sharp estimates of the constants. In particular, the spectral gap $1-\bar \alpha$ that appears in the results of Section \ref{sec:Theorems} often depends on $N$ and $p$. It is typically very difficult to determine theoretically how $\bar \alpha$ depends on $N,p$ in multistep Gibbs samplers like that in \eqref{eq:ExactAlgorithm} (see e.g. \cite{johndrow2018mcmc} for a very simple example that nonetheless required extensive calculation). 

However, one can conduct an empirical analysis of computational cost in the following way. If we take $\epsilon = 0$ in \eqref{eq:VariationBound}, the bound becomes
\be
\mathbf E \left( \frac1n \sum_{k=0}^{n-1} \varphi(X_k) - \nu^\ast \varphi  \right)^2\le \frac{3}{n} \left( \frac{1 \vee \mu V}{1-\bar \alpha_{(1/2)}} \right)^2 \left(\frac{2(1+K_{\epsilon})}{1-\gamma_\epsilon} \right) + \mathcal{O}\left( \frac{1}{n^2} \right).
\ee
It follows that the asymptotic (in $n$) variance of time averages of geometrically ergodic Markov chains is proportional to $(1-\bar \alpha_{(1/2)})^{-1}$. This term can be thought of as an upper bound on the sum 
\be \label{eq:IntegratedAutocov}
\tau^2_\varphi := \sum_{k=0}^{\infty} \cov(\varphi(X_0),\varphi(X_k)) \le \frac1{1-\bar \alpha_{(1/2)}}
\ee
for worst-case functions $|\varphi| < 1+V$ with $X_0 \sim \nu$. A common approach to study how this constant varies as a function of $N,p$ is thus to choose some collection of functions (usually coordinate projections) and compute an estimate of \eqref{eq:IntegratedAutocov} via plugging in pathwise estimates of the covariances obtained after discarding a burn-in and truncating the sum. This is taken to be an estimate of the asymptotic variance of $\varphi$. Numerous other estimators are available; see \cite{flegal2010batch}. Of course, there is no way to reliably find worst-case functions $\varphi$, but the empirical estimates at least give some sense of how this quantity behaves for statistically ``important'' functions like coordinate projections.

Estimates of $\tau_\varphi$ are referred to as MCMC standard error, and there is a significant literature
on the properties of different estimators (see \cite{flegal2010batch} for a rigorous treatment). 
Several of these estimators are implemented in the \texttt{R} package
\texttt{mcmcse}. We have consistently found the overlapping batch means estimator with the theoretically optimal $n^{1/3}$ 
batch size to perform the best, and we use this estimator throughout the paper. The asymptotic variance should be estimated
after discarding the initial portion of the path; we discard 5,000 scans.

Using estimates of $\tau^2_\varphi$ for
coordinate projections, we empirically analyze the effect of problem size on the required path length as follows. Suppose
that the relationship $\tau^2_\varphi = B N^{a_1} p^{a_2}$ for constants $B, a_1, a_2$ dictates the growth rate of $\tau^2_\varphi$ with $N$ and $p$; that is to say, the asymptotic variance grows like a polynomial in $N,p$. 
Then,  
$$
\log(\tau^2_\varphi) = \log(B) + a_1 \log(N) + a_2 \log(p),
$$
and thus one can obtain a 
rough estimate of the order of $\tau^2_\varphi$ in $p$ and $N$ from a regression of $\log(\hat \tau^2_\varphi)$ on 
$\log(N) + \log(p)$. We propose to compare estimates $\widehat a_1, \widehat a_2$ of $a_1,a_2$ across different algorithms as a way to empirically evaluate the relative computational complexity arising from the growth of the asymptotic variance.

A related pathwise quantity is the effective sample size $n_e$, which is usually defined as
\be \label{eq:Te}
n_e = \frac{\var_{\nu^*}(\varphi) n}{\tau^2_\varphi}, 
\ee
an adjustment to the path length $n$ to reflect how much the asymptotic variance, $\tau^2_\varphi$, is inflated by autocorrelation. 
Clearly, $n_e$ is proportional to the reciprocal of the asymptotic variance, so larger $n_e$ is better.
To estimate $n_e$ from paths of length $n$, we employ the procedure in \texttt{mcmcmse}, again using the overlapping batch means estimator with $n^{1/3}$ batch size and discarding 5,000 initial iterations.   

\subsection{Cost per step}
Table \ref{tab:t_reg} shows estimates of coefficients from a regression of $\log(t)$ on $\log(N)+\log(p)$ for the old, new, and approximate algorithms, where $t$ is computation time in seconds. These estimates are based on 20 simulations from the model in \eqref{eq:simulation} with $N$ sampled uniformly at random from integers between $200$ and $1,000$
and $p$ sampled uniformly at random from integers between $1,000$ and $5,000$. The algorithm was run for 20,000 iterations and total wall clock time
recorded. Computation was performed on multicore hardware with 12 threads, so matrix multiplications contribute less to the wall clock time than do
matrix decompositions, resulting in the lower than expected exponents on $N,p$. Thus, these estimates are meant to reflect the actual performance on modern 
multicore hardware. Moreover, the computation time of the approximate algorithm is likely non-constant
in $N,p$. For larger dimensions, the initial few iterations are likely to dominate the total computation time, since the benefits do not emerge until 
the algorithm locates most of the true nulls. This cost could be largely eliminated by ``warm starting'' the algorithm at, say, the cross-validated 
Lasso solution, which can be computed in nearly linear time in $N,p$. This approach could deliver a significant advantage in cases where lasso and horseshoe largely agree about the set of ``important'' variables, as in the application in Section \ref{sec:GWAS}.

\begin{figure}[ht]
\captionof{table}{Estimates from regression of $\log(t)$ on $\log(N)+\log(p)$.} \label{tab:t_reg}
\begin{tabular}{lrrr}
  \hline
dimension & old & new & approximate \\ 
  \hline
$\log(N)$ & 1.6478 & 1.6847 & 0.5449 \\ 
  $\log(p)$ & 0.7204 & 0.6065 & 0.3392 \\ 
   \hline
\end{tabular}
\end{figure}

\subsection{Cost related to variance of time averages}
To assess the cost due to increased variance of the time-averaging estimator as a function of $N,p$, we conduct another set of simulations. We focus on the performance of the approximate algorithm, since its much lower computational cost per step allows a wider range of values of $N,p$ in the simulation study, improving the reliability of the results (recall that the approximate algorithm is also geometrically ergodic by Theorem \ref{thm:Close}). The results that follow are based on two simulation studies from the setup in \eqref{eq:simulation}, each consisting of 20 independent simulations in which $N$ was sampled uniformly at random from the integers between $1,000$ and $5,000$ and $p$ was sampled uniformly at random from the integers between $10,000$ and $50,000$. In the first simulation study, we use an independent design. In the second simulation study, we use a correlated design with AR-1 structure and autocorrelation 0.9 as described above. The approximate algorithm was run for 20,000 iterations. Calculations of effective sample sizes $n_e$ and standard errors were based on the final 15,000 iterations.  

The left panels of Figure \ref{fig:box_p_ess} shows the distribution of $n_e$ based on the first 100 entries of $\beta$, the corresponding entries of $\eta$, $\log(\xi)$, and $-2\log(\sigma)$ as a function of $p$; each simulation also has a different value of $N$. No variation by $p$ is evident in either the independent or correlated design case.  The right panels of Figure \ref{fig:box_p_ess} shows the analogous result, but as a function of $N$. A slight increase in effective sample size as $N$ increases is possible. There is apparently little difference in $n_e$ when the design matrix is correlated compared to independent design.

\begin{figure}[h]
\begin{tabular}{cc}
 \includegraphics[width=0.5\textwidth]{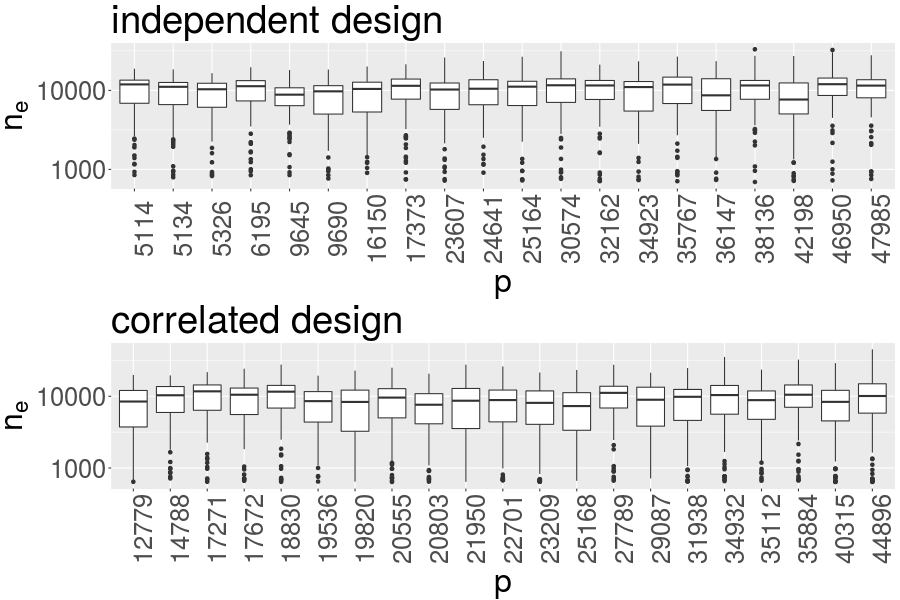} &  \includegraphics[width=0.5\textwidth]{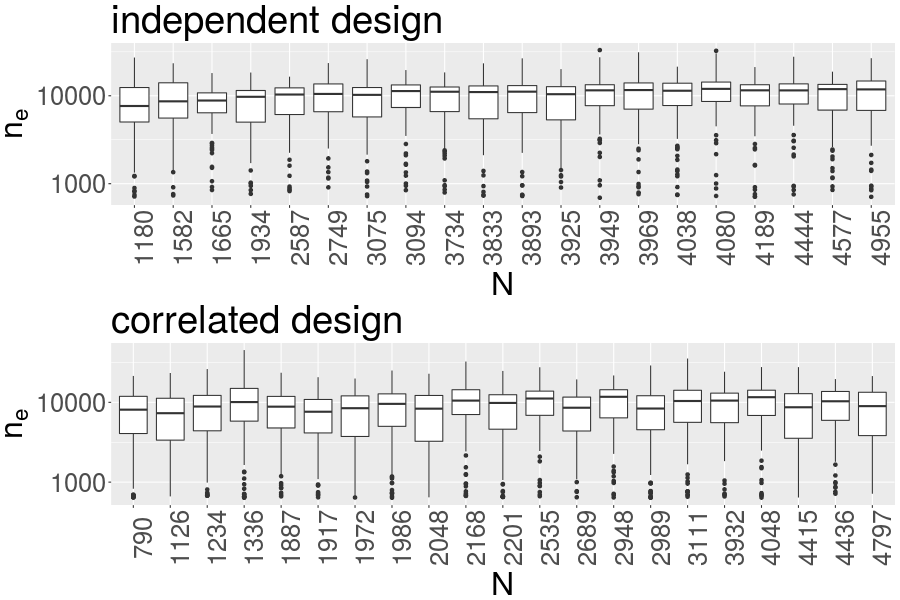}
\end{tabular}
 \caption{Left panels: Effective sample sizes $n_e$ for 100 entries of $\beta$, 100 entries of $\eta$, $\log \xi$ and $-2 \log \sigma$. The 100 entries of $\beta, \eta$ include those corresponding to all of the true signals, and 76 null signals. The horizontal axis indicates the value of $p$ used in each of the 20 simulations. Right panels: analogous to left panels, except that the horizontal axis indicates the value of $N$ used in each of the 20 simulations rather than value of $p$.} \label{fig:box_p_ess}
\end{figure}

Tables \ref{tab:reg_independent} and \ref{tab:reg_dependent} show results of a linear model with specification
\be
\log(\hat{\sigma}^2_{\varphi_j}(i)) = a_0 + a_1 \log(N_i) + a_2 \log(p_i) + b_j + \epsilon_{ij}
\ee
where $\varphi_j(x)=x_j$ is $j$th the coordinate projection of the partial state vector 
\be
x = (\beta_{1:100},\eta_{1:100},\log(\xi),-2\log(\sigma))
\ee
and $i=1,\ldots,20$ indexes the simulation scenario. Clearly, some coordinates tend to mix better than others, 
and the coordinate-specific intercepts allow for this variation. Results for independent design are shown in
Table \ref{tab:reg_independent} and for dependent design in Table \ref{tab:reg_dependent}. The small, negative
coefficient estimates suggest that if anything the Markov chain actually mixes slightly more rapidly as $N,p$ increase. 
Thus, there is little evidence that a longer path is needed to achieve fixed Monte Carlo error as $N,p$ grow. 

\begin{figure}[ht]
\captionof{table}{estimated parameters from regression of $-\log(n_e)$ on $\log(N)+\log(p)$, independent design} \label{tab:reg_independent}
\begin{tabular}{rrrrr}
  \hline
 & Estimate & Std. Error & t value & Pr($>$$|$t$|$) \\ 
  \hline
(Intercept) & -7.65 & 0.32 & -23.65 & 0.00 \\ 
  log(N) & -0.17 & 0.03 & -5.39 & 0.00 \\ 
  log(p) & -0.03 & 0.02 & -1.83 & 0.07 \\ 
   \hline
\end{tabular}
\end{figure}

\begin{figure}[ht]
\captionof{table}{estimated parameters from regression of $-\log(n_e)$ on $\log(N)+\log(p)$, dependent design} \label{tab:reg_dependent}
\begin{tabular}{rrrrr}
  \hline
 & Estimate & Std. Error & t value & Pr($>$$|$t$|$) \\ 
  \hline
(Intercept) & -7.56 & 0.54 & -14.03 & 0.00 \\ 
  log(N) & -0.15 & 0.03 & -5.04 & 0.00 \\ 
  log(p) & -0.06 & 0.04 & -1.47 & 0.14 \\ 
   \hline
\end{tabular}
\end{figure}

\section{Statistical performance}
Because the old algorithm can sometimes become trapped in potential wells (see Figure \ref{fig:beta10}), computational cost is not a complete measure of the difference between the old and new algorithms. In this section, we analyze the performance of the three algorithms in the estimation of $\beta$, which is typically the focus of inference. We again use the simulation setup in \eqref{eq:simulation} with $N$ sampled uniformly at random from the integers between $200$ and $1,000$ and $p$ sampled uniformly at random from the integers between $1,000$ and $5,000$. Mean squared error (MSE) for estimation of $\beta$ by MCMC time averages is shown in the left panel of Figure \ref{fig:mse}. There is no discernible difference between the performance of the new and approximate algorithms, but the old algorithm has about double the MSE at the median over the 20 simulations. Similarly, median empirical coverage of 95 percent credible intervals is about 90 percent for the old algorithm, and in only one case did the empirical coverage achieve 95 percent. In contrast, the new and approximate algorithms have median empirical coverage of about 93 percent, and never exhibited empirical coverage below 90 percent. We know from \citet{van2017uncertainty} that credible intervals for intermediate-sized signals cannot achieve the nominal coverage, even asymptotically. Since our simulation involves a sequence of decreasing signals, undoubtedly some of them fall into this ``intermediate'' categorization. As such, the performance of the new and approximate algorithms with respect to empirical coverage is probably near optimal. 

\begin{figure}[h]
\centering
\begin{tabular}{cc}
 \includegraphics[width=0.4\textwidth]{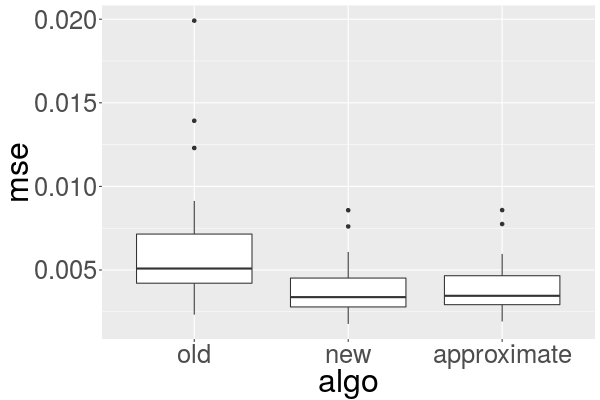} & \includegraphics[width=0.4\textwidth]{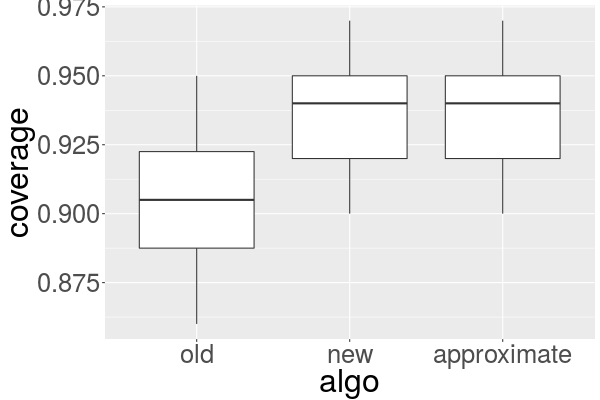} \\
\end{tabular}
 \caption{Mean squared error for estimation of $\beta$ by time averages (left) and coverage of 95 percent equal-tailed credible intervals based on time averaged quantiles (right). Boxplot is over results of 20 simulations.} \label{fig:mse}
\end{figure}

Figures \ref{fig:beta_marginals_small}, \ref{fig:beta_marginals_big}, and \ref{fig:sigma_intervals} in Supplementary Materials also evaluate statistical performance of the approximate algorithm. Figures \ref{fig:beta_marginals_small} and \ref{fig:beta_marginals_big} show posterior marginals for the first 25 entries of $\beta$ for simulations with $N=1,000$, $p=5,000$ and $N=5,000, p=50,000$, respectively, along with the true values of $\beta$. In general, the marginals have single modes centered near the truth for larger true signals, two modes with one centered near the truth and one centered at zero for intermediate sized true signals, and single modes at zero when the true signal is small or identically zero. This is consistent with the expected behavior of the Horseshoe. Figure \ref{fig:sigma_intervals} shows violin plots with indicated 95 percent credible intervals for $\sigma^2$ over 20 independent simulations each with $1,000 \le N \le 5,000$ and $5,000 \le p \le 50,000$. All but two of the intervals cover the true value of $2$. Overall, the approximate algorithm has exhibited excellent statistical performance by every metric we have considered.

\section{GWAS Application} \label{sec:GWAS}
We use the horseshoe with computation by the approximate algorithm to analyze a Genome-Wide Association Study (GWAS) dataset. The data consist of $N=2,267$ observations and $p=98,385$ single nucleotide polymorphisms (SNPs) in the genome of maize. These data have been previously studied by \cite{liu2016iterative} and \cite{zeng2017non}. Each observation corresponds to a different inbred maize line from the USDA Ames seed bank \citep{romay2013comprehensive}. As the response, we use growing degree days to silking, a measure of the average number of days exceeding a certain temperature that are necessary for the maize to ``silk.'' Maize is typically ready to harvest about 60 days after silking, so this is a measure of the length of the growth cycle for a particular line of maize, crudely controlling for temperature. This response is also considered by \cite{zeng2017non}.

We run the approximate algorithm for 30,000 iterations, discarding 5,000 iterations as burn-in. Figure \ref{fig:ess_maize} shows histograms of $n_e$ and $n_e/t$ for 200 entries of $\beta$, the corresponding 200 entries of $\eta$, $\log(\xi)$, and $-2 \log(\sigma)$. The 200 entries of $\beta, \eta$ includes the 100 entries for which the posterior mean is largest in absolute value, as well as 100 other entries. The smallest value of $n_e$ observed was 893, and the smallest value of $n_e/t$ 0.05. The median values were 4531 and 0.24, respectively. Thus the algorithm remains quite efficient, even on a fairly large, real dataset. 

\begin{figure}[h]
\centering
 \begin{tabular}{cc}
  \includegraphics[width=0.7\textwidth]{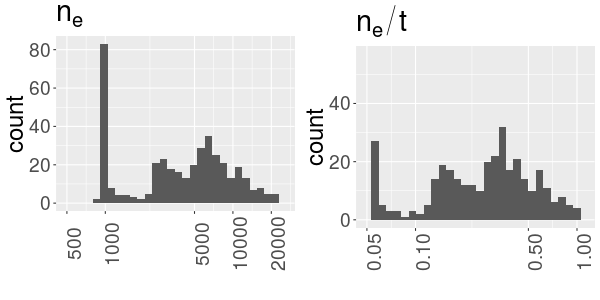}
 \end{tabular}
 \caption{Effective sample size (left) and effective samples per second (right) for maize application.} \label{fig:ess_maize}
\end{figure}

Figure \ref{fig:beta_marginals_maize} shows density plots of samples for the nine entries of $\beta$ with largest estimated absolute posterior mean, as well as the estimated posterior mean. The Lasso estimates for these parameters, with the penalty chosen by 10-fold cross-validation, are also indicated. It is clear that, even for the entries of $\beta$ for which the signal strength is largest, the Horseshoe marginals are typically bimodal, with the weight in the mode centered at zero increasing with decreasing signal strength. This suggests that the bimodal shape of the marginals may be quite common in applications, and gives some sense of the level of uncertainty about which entries correspond to true signals. The Lasso estimates for these relatively large parameters are typically shrunken toward zero relative to the Horseshoe posterior mean, a behavior that has been observed previously (see \cite{bhadra2017lasso}).

\begin{figure}[h]
\centering
 \includegraphics[width=0.7\textwidth]{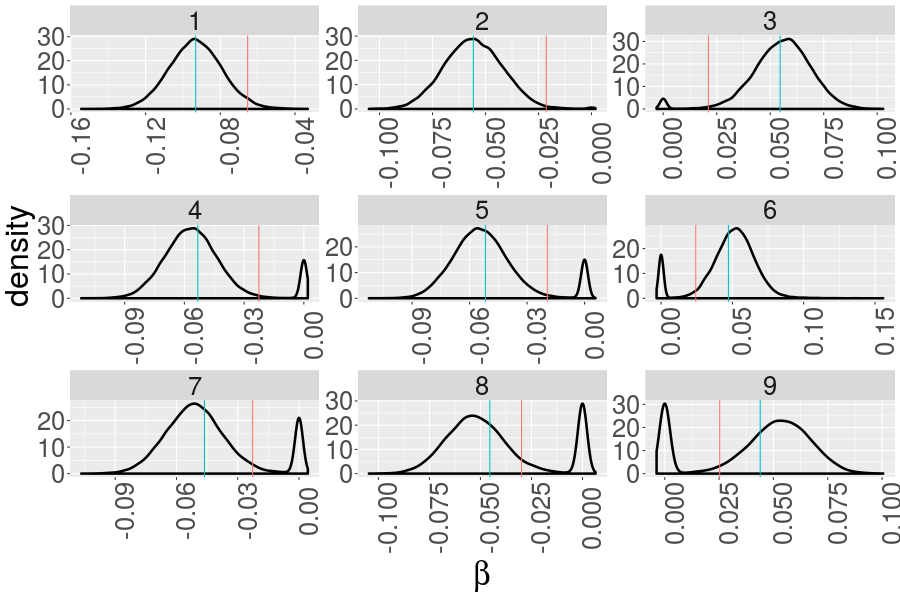}
 \caption{Density plots for the 9 entries of $\beta$ with largest estimated posterior mean along with $\hat{\E}[\beta_j \mid y]$ from Horseshoe (blue) and $\hat{\beta}_j$ from Lasso (red).} \label{fig:beta_marginals_maize}
\end{figure}

Figure \ref{fig:lasso_horse_compare} plots the number of entries of $\beta$ for which the absolute value of the corresponding Lasso or Horseshoe point estimates exceed a threshold   between $0.0005$ and $0.1$. Also shown is the size of the intersection of these two sets. For larger thresholds, the number of Horseshoe point estimates exceeding the threshold is typically larger than that for Lasso, while for smaller thresholds, this trend is reversed. This is again consistent with the tendency of Lasso to overshrink large signals and undershrink small signals \citep{bhadra2017lasso}. The size of the intersection closely tracks the minimum size of the two sets, suggesting that Lasso and Horseshoe largely agree as to which coefficients represent signals, but disagree somewhat about their magnitude. Of course, Lasso provides no notion of uncertainty in the selected variables such as that conveyed by the posterior marginals of Horseshoe.  

\begin{figure}[h]
\centering
 \includegraphics[width=0.4\textwidth]{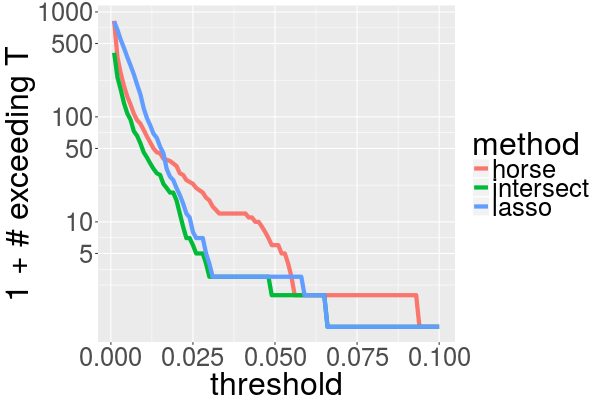}
 \caption{Plot of the number of variables for which $\hat{\E}[\beta_j \mid y]>T$ (Horseshoe) or $\hat{\beta}_j > T$ (where $\hat{\beta}_j$ is the Lasso estimate) vs $T$ (threshold) for $T \ge 0.0005$. } \label{fig:lasso_horse_compare}
\end{figure}

\section{Discussion}
It is now ten years since the Bayesian Lasso and the associated Gibbs sampling algorithm were proposed in \cite{park2008bayesian}, eight years since the Horseshoe prior appeared in \cite{carvalho2010horseshoe}, and 22 years since the landmark Lasso paper of \citet{tibshirani1996regression}. 
While the introduction of the least angle regression algorithm \citep{efron2004least}, and, more recently, coordinate descent algorithms \citep{friedman2010regularization}, have made $L_1$ regularized regression with hundreds of thousands of predictors possible on standalone computing hardware, no existing implementation of Bayes Lasso, Horseshoe, or any other Bayesian global-local shrinkage prior scales to this problem size. This has probably limited the adoption of these attractive Bayesian methods by practitioners, especially in the biological sciences where large $p$ is common. Regardless of the virtues of a statistical procedure, it is of little use practically if it is not computable.

Here we have for the first time offered computational algorithms for Horseshoe that can scale to hundreds of thousands of predictors. The algorithms have strong theoretical convergence and approximation error guarantees. Our approximate algorithm has the same computational cost per step as coordinate descent for elastic net and Lasso when the truth is sparse, though naturally more computation time is required to obtain a Markov chain of the requisite length than to obtain a single path of Lasso solutions. However, one gains more information from the Horseshoe, perhaps most critically some measure of uncertainty in which $\beta_j$ correspond to true signals. The Bayesian community has long recommended against selecting single models without reporting uncertainty in which model is selected, but has often not provided algorithms that implement the recommended methods in large $p$ problems. This has perhaps contributed to the growing importance of selective inference over Bayesian methods, as practitioners have mostly adopted the strategy of selecting a single model. We hope that the computational strategies outlined here will contribute to growth in the use of Bayesian methods in high-dimensional settings, and that exploiting sparsity and other special structure of the target will be more widely adopted as a means to develop accurate approximate MCMC algorithms for modern applications. 

\subsection*{Acknowledgements}
The authors thank Professors Xiaolei Liu and Xiang Zhou for sharing the Maize GWAS data. Prof. Bhattacharya's research is supported by the National Science Foundation (NSF DMS 1613156) and an NSF CAREER award (DMS 1653404).

\appendix
\section{Preliminaries}
We introduce notation and make several observations that are used throughout the appendix. For a square matrix $A$, $\tr(A)$ denotes its trace. We use $I_d$ to denote the $d \times d$ identity matrix. For an $m \times r$ matrix $A$ (with $m > r$), $s_i(A) :\,= s_i = \sqrt{\lambda_i}$ for $i = 1, \ldots, r$ denote the singular values of $A$, where $\lambda_1 \ge \lambda_2 \ge \ldots \ge \lambda_r \ge 0$ are the eigenvalues of $A' A$. The largest and smallest (non-zero) singular values are $s_{\max}(A) = s_1(A)$ and $s_{\min}(A) = s_r(A)$. Unless otherwise stated, $\|A\| :\,=s_{\max}(A)$ denotes the operator norm of a matrix. We often make use of the standard facts $\|A B\| = \|B A\| \le \|A\| \, \|B \|$, $\|A + B\| \le \|A\| + \|B\|$, and $\|A^{-1}\| = 1/s_{\min}(A)$. We use $\succeq$ to denote the partial order on the space of nonnegative definite (nnd) matrices, i.e., $A \succeq B$ implies $(A-B)$ is nnd. We also record the fact that if $A, B$ are nnd matrices of the same size, then $ABA$ is also nnd. 

\ 

For probability measures $P, Q$ on $(\mathcal{X}, \mathcal{B})$ having densities $p$ and $q$ with respect to some dominating measure $\nu$, recall the following equivalent definitions of the total variation distance 
$$
\|P - Q\|_{\TV} = \sup_{B \in \mathcal{B}} \big| P(B) - Q(B) \big| = \frac{1}{2} \, \int_{\mathcal{X}} |p-q| d\nu = \sup_{|\phi| < 1} \int \phi(p-q) d\nu. 
$$
The Kullback--Leibler (KL) divergence $\mbox{KL}(P\,||\,Q) = \int p \log(p/q) d\nu$. From Pinsker's inequality, $\mbox{KL}(P\,||\,Q) \ge 2 \|P-Q\|_{\TV}^2$. 

\

Define the multivariate normal inverse-gamma (MNIG) distribution to be the joint distribution of $(\beta, \sigma^2) \in \mathbb{R}^p \otimes \bb R_+$ defined by the hierarchy 
\begin{align}\label{eq:mnig}
\beta \mid \sigma^2 \sim N(\mu, \sigma^2 \, \Sigma), \quad \sigma^2 \sim \mbox{InvGamma}(a, a'),
\end{align}
where an $\mbox{InvGamma}(a, a')$ distribution has density proportional to \\$x^{-(a+1)} e^{-a'/x} \1_{(0, \infty)}(x)$. 
We denote the above distribution by MNIG$(\mu, \Sigma, a, a')$. 

We record a lemma which calculates the KL divergence between two MNIG distributions with the same shape parameter; a proof is provided in Appendix \ref{sec:KL_lemma}.  
\begin{lemma}\label{lem:KL_MNIG}
Suppose $p_i \sim \mathrm{MNIG}(\mu_i, \Sigma_i, a_i, a'_i)$ for $i = 0, 1$, with $a_0 = a_1$. Then, 
\be
\mathrm{KL}(p_0\,||\,p_1) = & \frac{1}{2} \bigg[ \tr(\Sigma_1^{-1} \Sigma_0 - I_p) - \log |\Sigma_1^{-1} \Sigma_0| + (\mu_1 - \mu_0)'\Sigma_1^{-1}(\mu_1 - \mu_0) \, \frac{a_0}{a'_0} \bigg] \\
& + a_0 \log(a'_0/a'_1) + \frac{(a'_1 - a'_0) a_0}{a'_0}.
\ee
\end{lemma}

\subsection{Derivation of mean and covariance for $\beta$ in the approximate chain}
Let us first derive $\mu_\delta$ in \eqref{eq:m_del}. Recalling the definition of $D_\delta$, we have $\Gamma_\delta W' = (\Gamma_S W_S'; 0_{p-s_\delta\times N})$ and $W \Gamma_\delta W' = W_S \Gamma_S W_S'$. Thus, $\mu_\delta = (\mu_S; 0_{p-s_\delta \times 1})$, with 
$$
\mu_S = \Gamma_S W_S' (I_N + W_S \Gamma_S W_S')^{-1} z = (W_S' W_S + \Gamma_S^{-1})^{-1} W_S' z.
$$
A proof of the second equality can be found in the proof of Proposition 1 in \cite{bhattacharya2016fast}. 

We now derive $\Sigma_\delta$ in \eqref{eq:Sig_del}. Again, using the definition of $D_\delta$, we have $u - \Gamma W' M_\delta^{-1} v = (u_S - \Gamma_S W_S' M_S^{-1} v \, ; \, u_{S^c})$, where $M_S = (I_N + W_S \Gamma_S W_S')$. Also, recall that $v = W u + f = W_S u_S + W_{S^c} u_{S^c} + f$, and $u_S \ci u_{S^c}$ since $\Gamma$ is diagonal, which together imply $\mbox{cov}(u_S, v) = \Gamma_S W_S'$ and $\mbox{cov}(u_{S^c}, v) = \Gamma_{S^c} W_{S^c}'$. We now derive the blocks of $\Sigma_\delta$. \\
1. We have, 
$$
\mbox{cov}(u_S - \Gamma_S W_S' M_S^{-1} v) = \Gamma_S - \Gamma_S W_S' M_S^{-1} W_S \Gamma_S = (W_S' W_S + \Gamma_S^{-1})^{-1}, 
$$
where the proof of the second equality can be found in the proof of Proposition 1 in \cite{bhattacharya2016fast}. \\
2. Next, using $u_S \ci u_{S^c}$, 
$$
\mbox{cov}(u_S - \Gamma_S W_S' M_S^{-1} v, u_{S^c}) = - \Gamma_S W_S' M_S^{-1} W_{S^c} \Gamma_{S^c}.
$$
3. Finally, $\mbox{cov}(u_{S^c}) = \Gamma_{S^c}$. 

\section{Transition densities of the exact and approximate chain}\label{sec:App_exap}
We lay down the transition densities of the exact and approximate chains. Recall $D = \mbox{diag}(\eta_j^{-1})$ and $\Gamma = \xi^{-1} D$. \\[2ex]
{\bf Exact chain.} First, a comment on the state space for the Markov chain(s) of interest. $\P(x,\cdot)$ is not defined for $x$ in the set 
\be
\mc X_0 = \{ x = (\beta,\sigma^2,\xi,\eta) : \beta_j = 0 \text{ for one or more } j\} 
\ee
Thus, we exclude $\mc X_0$ from the state space, and construct a Lyapunov function that is infinite on this set, so that points in this set are not on the boundary of sublevel sets of the Lyapunov function. By \cite[Remark 1.1]{hairer2011yet}, the Lyapunov condition and minorization on its sublevel sets are sufficient to establish exponential convergence toward a unique invariant measure. In particular, \cite[Theorem 3.1]{hairer2011yet} requires only that the state space $\X$ be a measurable space. Since $\mc X_0$ has $\nu \P^k$-measure zero for any $k > 0$ whenever it has $\nu$-measure zero, computational problems are avoided by simply not initializing the Markov chain at a point in $\mc X_0$.

Define $\R_{\setminus 0} = \R \setminus \{0\}$. Consider the Markov transition kernel $\P$ for our exact algorithm on state space $\X = \X_1 \times \X_2 = \R_+^p \times (\R_{\setminus 0}^p  \times \R_+ \times \R_+)$ with state variable $x = (\eta, x_{\backslash \eta})$, where $x_{\backslash \eta} = (\beta, \sigma^2, \xi)$.
Letting $x = (\tilde \eta, \tilde \beta, \tilde \sigma^2, \tilde \xi)$ denote the current state and $y = (\eta,\beta,\sigma^2,\xi)$ the new state, the transition kernel $\P(x,\cdot)$ has density with respect to Lebesgue measure
\be \label{eq:TransDens}
p(x,y) = p\big((\tilde \eta, x_{\backslash \eta}), (\eta, y_{\backslash \eta})\big) = p_1(\eta \mid x_{\backslash \eta}) \, p_2(y_{\backslash \eta} \mid \eta, \tilde \xi),
\ee
where
\be
\begin{aligned}
p_1(\eta \mid x_{\backslash \eta}) &= \prod_{j=1}^p p_1(\eta_j \mid x_{\backslash \eta}), \\
p_1(\eta_j \mid x_{\backslash \eta}) &= \frac{e^{-\tilde m_j}}{\Gamma(0,\tilde m_j+b \tilde m_j)} \, \frac{e^{-\tilde m_j\eta_j}}{1+ \eta_j} \,  \1_{(b, \infty)}(\eta_j),\\
p_2(y_{\backslash \eta} \mid \eta, \tilde \xi) &= p_2(\beta \mid \sigma^2, \xi, \eta) \, p_2(\sigma^2 \mid \xi, \eta) \, p_2(\xi \mid \eta, \tilde \xi),
\end{aligned}
\label{eq:betaeta_reg}
\ee
where $\tilde m_j = \tilde \xi \tilde \beta_j^2/(2 \tilde \sigma^2)$ and $\Gamma(\cdot, \cdot)$ is the incomplete Gamma function defined in \eqref{eq:incomp_gam}. We describe the various components of $p_2$ below. 

The transition kernel of the MH-within-Gibbs update for $\xi$ can be written as 
\begin{align}\label{eq:transer_MH}
p_2(\xi \mid \eta, \tilde \xi) = \alpha_\eta(\tilde \xi, \xi) \, h(\xi \mid \tilde \xi) + r_{\eta}(\tilde \xi) \, \delta_{\tilde \xi}(\xi),
\end{align}
where $h(\cdot \mid \cdot)$ is the log-normal proposal kernel for $\xi$. Here, 
\be
\alpha_\eta(\tilde \xi, \xi) = \min\left\{1, q_\eta(\tilde \xi, \xi) = \frac{p(\xi \mid \eta) \xi}{p(\tilde \xi \mid \eta) \tilde \xi} \right\}
\ee
is the probability of accepting a move to $\xi$ from $\tilde \xi$, with $p(\xi \mid \eta)$ as defined in \eqref{eq:BasicQuantities}, $\delta_{\tilde \xi}(\cdot)$ denotes a point-mass at $\tilde \xi$, and
$$
r_\eta(\tilde \xi) = 1 - \int \alpha_\eta(\tilde \xi, \xi) \, h(\xi \mid \tilde \xi) \, d \xi
$$
is the probability of staying at $\tilde \xi$.  

The full conditional $p_2(\beta \mid \sigma^2, \xi, \eta)$ is $N(\mu, \sigma^2 \Sigma)$ with $\mu = \Sigma W'z$ and $\Sigma = (W'W + (\xi^{-1} D)^{-1})^{-1}$, while $p_2(\sigma^2 \mid \xi, \eta)$ has an InvGamma distribution. Thus, using the definition of MNIG above, the full conditional for $(\beta, \sigma^2)$ from the exact algorithm, $p_2(\beta, \sigma^2 \mid \xi, \eta)$, is distributed as $\mbox{MNIG}(\mu, \Sigma, a, a')$, with 
\begin{align}\label{eq:jt_exact}
\begin{aligned}
\mu & = \Sigma W'z = \Gamma W' M^{-1} z, \ \Sigma = (W' W + \Gamma^{-1})^{-1} = \Gamma - \Gamma W' M^{-1} W \Gamma, \\
a &= (N+\omega)/2, \ a' = (z'M^{-1} z + \omega)/2.
\end{aligned}
\end{align}
In the fist line of the above display, we used various equivalent representations of $\mu$ and $\Sigma$ which follow from the Woodbury matrix identity and are encapsulated in the algorithm \eqref{eq:GaussianSamplingTrick}. 
\\[2ex]
{\bf Approximate chain.} We now describe the transition density of the approximate chain. Noting that the update for $\eta$ remains the same in the approximate algorithm, the approximate Markov kernel $\P_\epsilon(x, \cdot)$ has transition density
\begin{align}\label{eq:transker_full_approx}
p_\epsilon(x, y) = p_1(\eta \mid x_{\backslash \eta}) \, p_{2,\epsilon}(\beta, \sigma^2 \mid \xi, \eta) \, p_{2,\epsilon}(\xi \mid \eta,\tilde \xi)\quad x \in \X,
\end{align}
where $p_{2,\epsilon}(\beta, \sigma^2 \mid \xi, \eta)$ denotes the approximate full conditional of $(\beta, \sigma^2)$ resulting from the approximations of $D W$ and $WDW'$ by $D_\delta W$ and $W D_\delta W'$ described in Section \ref{sec:ApproxAlgo}, which is distributed as $\mbox{MNIG}(\mu_\delta, \Sigma_\delta, a_\delta, a_\delta')$, with 
\begin{align}\label{eq:jt_approx}
\begin{aligned}
&\mu_\delta = \Gamma_\delta W' M_{\delta}^{-1} z, \quad \Sigma_\delta = \Gamma - \big(2 \Gamma W' M_\delta^{-1} W \Gamma_\delta - \Gamma_\delta W' M_\delta^{-1} M M_\delta^{-1} W \Gamma_\delta \big), \\
& a_\delta = (N+\omega)/2, \quad a_\delta' = (z'M_{\delta}^{-1} z + \omega)/2.
\end{aligned}
\end{align}
Also, 
\begin{align}\label{eq:transer_MH}
p_{2, \epsilon}(\xi \mid \tilde \xi, \eta) = \alpha_{\eta, \epsilon}(\tilde \xi, \xi) \, h(\xi \mid \tilde \xi) + r_{\eta, \epsilon}(\tilde \xi) \, \delta_{\tilde \xi}(\xi),
\end{align}
is the approximate MH-within-Gibbs transition density obtained by the approximation of the acceptance probability $\alpha_{\eta, \epsilon}(\xi, \xi') = \min\big\{1, q_{\eta, \delta}(\xi, \xi') \big\}$ where 
\be \label{eq:QEtaDelta}
q_{\eta,\delta} = \frac{p_\epsilon(\xi \mid \eta) \xi}{p_\epsilon(\tilde \xi \mid \eta) \tilde \xi},
\ee
and $p_\epsilon(\xi \mid \eta)$ is obtained by replacing $M_\xi$ by $M_{\xi,\delta}$ in \eqref{eq:BasicQuantities}.

\section{Proof of Theorem 3.5}
We prove the second assertion, that is, prove geometric ergodicity of $\P$ for $b > 0$ by verifying the Lyapunov condition in Assumption \ref{ass:Lyapunov} and the minorization condition in Assumption \ref{ass:Minorization}. An inspection of the verification of the Lyapunov condition will show that the same proof continues to work for $b = 0$ with some minor modifications.

\subsection{Lyapunov condition for the exact chain}
We first show that 
\be \label{eq:LyapunovApp}
V\big(\eta, \beta, \sigma^2, \xi\big) = \frac{\|W \beta\|^2}{\sigma^2} + \sum_{j=1}^p \big[\sigma^{2c} |\beta_j|^{-2c} + \sigma^{-c} \eta_j^c  |\beta_j|^c + \eta_j^c \big] + \xi^2
\ee
is a Lyapunov function for $\P$ for any $c \in (0, 1/2)$. We have, 
\be
(\P V)(\tilde \eta, x_{\backslash \eta})
& = \int V(\eta, y_{\backslash \eta}) \, p\big((\tilde \eta, x_{\backslash \eta}), (\eta, y_{\backslash \eta})\big) d\eta dy_{\backslash \eta} \\
& = \int_{\eta} \bigg[ \int_{y_{\backslash \eta}} V(\eta, y_{\backslash \eta}) p_2(y_{\backslash \eta} \mid \eta, \tilde \xi) \,dy_{\backslash \eta} \bigg] p_1(\eta \mid x_{\backslash \eta}) \,d\eta \\
& = \int V_1(\eta,\tilde \xi) p_1(\eta \mid x_{\backslash \eta}) \,d\eta, \label{eq:Lyapunov_gibbs}
\ee
where 
\be
V_1(\eta,\tilde \xi) &= \E\left(\frac{\|W\beta\|^2}{\sigma^2} \,\big\vert\, \eta, \tilde \xi \right) + \E(\xi^2 \mid \eta, \tilde \xi) \\
&+ \sum_{j=1}^p \big[\E(\sigma^{2c} |\beta_j|^{-2c} \mid \eta, \tilde \xi) + \E(\sigma^{-c} |\beta_j|^c \mid \eta, \tilde \xi) + \eta_j^c \big]. 
\label{eq:V1_def}
\ee
We now aim to bound $V_1(\eta,\tilde \xi)$. We shall make repeated use of the fact that 
\be 
\E[ g(\beta) h(\sigma^2) \mid \eta] = \E \bigg[ h(\sigma^2) \, \E[g(\beta) \mid \sigma^2, \xi, \eta] \,\big\vert\, \eta\bigg]
\ee
for integrable functions $g$ and $h$, using the tower property of conditional expectations. 

\

We begin by integrating over $\beta$ to bound $\E(\|W\beta\|^2 \mid \sigma^2, \xi, \eta)$, $\E(|\beta_j|^{-2c} \mid \sigma^2, \xi, \eta)$, and $\E(|\beta_j|^c \mid \sigma^2, \xi, \eta)$, respectively. We first show that 
\be\label{eq:Wb_bd}
\E(\|W\beta\|^2 \mid \sigma^2, \xi, \eta) \le \|z\|^2 + N \sigma^2.
\ee
To that end, we have, from \eqref{eq:jt_exact},
{\small\be
\E(\|W \beta\|^2 \mid \eta) 
= \mu' W' W \mu + \sigma^2 \, \mbox{tr}[(W'W) \Sigma] 
= \| W \Sigma W' z\|^2 + \sigma^2 \, \mbox{tr}[W \Sigma W']. 
\ee}
Let us now calculate $W \Sigma W'$. Recall, 
\be 
\Gamma = \xi^{-1} D, \quad M = I_N + W \Gamma W', \quad \Sigma = \Gamma - \Gamma W' M^{-1} W \Gamma,
\ee
where the last equality follows from the Woodbury matrix identity. Then, 
\be 
W \Sigma W' 
&= W \Gamma W' - W \Gamma W' M^{-1} W \Gamma W'  = W \Gamma W'  \big[I_N - M^{-1} W \Gamma W' \big] \\
& = W \Gamma W' M^{-1}  = I_N - M^{-1},
\ee
where we have used that $M^{-1} W \Gamma W' = W \Gamma W'  M^{-1} = I_N - M^{-1}$. We then have 
$\| W \Sigma W' z\|^2 = \|(I_N - M^{-1})z\|^2 \le \|z\|^2$, and $\mbox{tr}(W \Sigma W') \le N$, delivering \eqref{eq:Wb_bd}. 

\

We next focus on $\E(|\beta_j|^{-2c} \mid \sigma^2, \xi, \eta)$ and show that for universal constants $0<C_1,C_2 < \infty$,
\be\label{eq:betapowbd1}
\E \big(|\beta_j|^{-2c} \mid \sigma^2, \xi, \eta) \le \sigma^{-2c}(C_1 \eta_j^c + C_2). 
\ee 
A formula for negative absolute moments of Gaussians is available from \cite{gradshteyn2014table} and recorded in equation \eqref{eq:normal_invmom} in Section \ref{sec:NormalHypergeo}. Specifically, let $\mu_j$ and $\sigma_j^2$ respectively denote the $j$th entry of $\mu$ and the $j$th diagonal entry of $\Sigma$ in \eqref{eq:jt_exact}. We then have, from \eqref{eq:normal_invmom}, that 
\be
\E \big(|\beta_j|^{-2c} \mid \sigma^2, \xi, \eta) = \sigma^{-2c} \, \sigma_j^{-2c} \, \frac{2^{-c} \, \Gamma\left( \frac{1-2c}{2} \right)}{\sqrt{\pi}} \, e^{-\frac{\mu_j^2}{2\sigma^2 \sigma_j^2}} \, M\left( \frac{1-2c}2, \frac12; \frac{\mu_j^2}{2\sigma^2 \sigma_j^2} \right),
\ee
where $M(\cdot, \cdot; \cdot)$ is the confluent hypergeometric function of the first kind; see Section \ref{sec:NormalHypergeo} for definition and properties. Since $c \in (0, 1/2)$, the condition of Lemma \ref{lem:conf_dec} is satisfied, so that we can bound
\be 
e^{-\frac{\mu_j^2}{2\sigma^2 \sigma_j^2}} \, M\left( \frac{1-2c}2, \frac12; \frac{\mu_j^2}{2\sigma^2 \sigma_j^2} \right) \le 1. 
\ee  
This implies 
\be\label{eq:betapowbd}
\E \big(|\beta_j|^{-2c} \mid \sigma^2, \xi, \eta) \le \tilde C_1 \sigma^{-2c} \, \sigma_j^{-2c},
\ee
where $\tilde C_1 = \pi^{-1/2} 2^{-c} \Gamma(1/2-c)$. 

To bound the right hand side of \eqref{eq:betapowbd}, we need a lower bound on $\sigma_j^2$. 
To that end, we have $s_{\max}^2(W) \, I_p + (\xi^{-1} D)^{-1} \succeq W'W + (\xi^{-1} D)^{-1}$, where $A \succeq B$ denotes $(A-B)$ is nonnegative definite (nnd). Using the fact that $A \succeq B$ implies $B^{-1} \succeq A^{-1}$, we have $\Sigma \succeq ( s_{\max}^2(W) \, I_p + (\xi^{-1} D)^{-1} )^{-1}$. Next, use the fact that if $A \succeq B$, then $a_{jj} \ge b_{jj}$, since $(a_{jj} - b_{jj}) = e_j' (A-B) e_j \ge 0$ with $e_j$ the $j$th unit vector. This implies
\be 
\sigma_j^2 \ge \frac{1}{s_{\max}^2(W) + \xi \eta_j}, \quad \sigma_j^{-2c} \le (s_{\max}^2(W) + \xi \eta_j)^c \le (s_{\max}^{2c}(W) + b_\xi^c \eta_j^c),
\ee
where in the last step, we used that for $a, b > 0$ and $c \in (0, 1/2)$, $(a+b)^c \le (a^c + b^c)$, and that $\xi \le b_\xi$. Substitute the bound in  \eqref{eq:betapowbd} to obtain \eqref{eq:betapowbd1}.

\

Next, we consider $\E(|\beta_j|^c \mid \sigma^2, \xi, \eta)$, and show that there exist universal constants $0<C_3,C_4<\infty$ such that
\be\label{eq:betapospow}
\sum_{j=1}^p \eta_j^c \E(|\beta_j|^c \mid \sigma^2, \xi, \eta) \le C_3 \sigma^c \,\sum_{j=1}^p (\eta_j^c+1) + C_4.
\ee

We have, using that $x \mapsto x^{2/c}$ is convex for $x > 0$, that $\big[ \E(|\beta_j|^c \mid \sigma^2,\xi,\eta)\big]^{2/c} \le \E(\beta_j^2 \mid \sigma^2, \xi, \eta) = \mu_j^2 + \sigma^2 \sigma_j^2$, and hence, $\E(|\beta_j|^c \mid \sigma^2,\xi,\eta) \le (\mu_j^2 + \sigma^2 \sigma_j^2)^{c/2} \le |\mu_j|^c + \sigma^c (\sigma_j^2)^{c/2}$. Following a similar argument as in the paragraph after \eqref{eq:betapowbd}, $(W'W + (\xi^{-1} D)^{-1})^{-1} \preceq \xi^{-1} D$, implying $\sigma_j^2 \le (\xi \eta_j)^{-1}$. These together imply, 
\be 
\sum_{j=1}^p \eta_j^c \E(|\beta_j|^c \mid \sigma^2,\xi,\eta) \le \sum_{j=1}^p \eta_j^c \{ |\mu|_j^c + \sigma^c \, (\sigma_j^2)^{c/2} \} \le \sum_{j=1}^p \eta_j^c |\mu_j|^c + C \sigma^c \, \sum_{j=1}^p \eta_j^{c/2}. 
\ee
for a universal constant $C$. Next, using H\"{o}lder's inequality, 
\be
\sum_{j=1}^p \eta_j^c |\mu_j|^c \le \Big[\sum_{j=1}^p (\eta_j^c |\mu_j|^c)^{2/c} \Big]^{c/2} \, p^{1-c/2} = p^{1-c/2} \, (\|D^{-1} \mu\|^2)^{c/2}. 
\ee
We have $\|D^{-1} \mu\|^2 = w' D^{-1}\Sigma^2 D^{-1} w$, with $w = W' z$. Since $\Sigma^2 = (W'W + (\xi^{-1} D)^{-1})^{-2} \preceq \xi^2 D^2$, we have $D^{-1} \Sigma^2 D^{-1} \preceq \xi^{-2} I_p$, where we have used that if $B_1 \preceq B_2$, and $B_1, B_2, A$ are positive definite (pd), then $A B_1 A \preceq A B_2 A$. This implies $\|D^{-1} \mu\|^2 \le \xi^{-2} \|w\|^2 \le a_\xi^{-2} \|w\|^2$. 
Cascading this bound through the previous two displays, and using the inequality $x^{c/2} \le (x^c + 1)$ for $x > 0$, \eqref{eq:betapospow} is obtained. 

\

Combining the bounds \eqref{eq:betapospow}, \eqref{eq:betapowbd1}, and \eqref{eq:Wb_bd}, we obtain for universal constants $0<C_5,C_6,C_7,C_8<\infty$ not depending on $\eta$ 
\be 
\E[V \mid \sigma^2, \xi, \eta] \le C_5 \sum_{j=1}^p \eta_j^c + \frac{C_6}{\sigma^2} + \frac{C_7}{\sigma^c} + \xi^2 + \tilde C_7 
\ee
Next, take an expectation w.r.t. $\sigma^2 \mid \xi, \eta$. Note that $\E(1/\sigma^2 \mid \xi, \eta) = (N+a)/(z'M^{-1}z + b) \le (N+a)/b$, and similarly, $\E(1/\sigma^c \mid \xi, \eta)$ is also bounded above by a constant not depending on $\xi, \eta$. 
This leads to 
\be \label{eq:V1_bded}
V_1(\eta,\tilde \xi) = \E[V \mid \eta, \tilde \xi] \le C_5 \sum_{j=1}^p \eta_j^c + C_8, 
\ee
for a universal constant $C_8$ not depending on $\tilde \xi$, where we additionally used that $\xi$ is compactly supported. Notice that while $V_1$ is a function of $\tilde \xi$, the upper bound on the right side is not, a consequence of the fact that $\xi \in [a_\xi, b_\xi]$ for $0<a_\xi<b_\xi<\infty$.

\

We now proceed to bound $\E\big(V_1(\eta,\tilde \xi) \mid x_{\backslash \eta}) = \int V_1(\eta,\tilde \xi) \, p_2(\eta \mid x_{\backslash \eta})$. For a small $\varepsilon > 0$ to be chosen later, bound
\be 
\int \eta_j^c \, p(\eta_j \mid x_{\backslash \eta}) \, d\eta_j
& = \frac{ e^{-\tilde m_j} }{ \Gamma(0, \tilde m_j + b \tilde m_j) } \, \int_b^{\infty} \eta_j^c \, \frac{e^{-\tilde m_j \eta_j}}{1 + \eta_j} \, d\eta_j \\
& \le \frac{ e^{-\tilde m_j} }{ \Gamma(0, \tilde m_j + b \tilde m_j) } \,  \int_b^{\infty} \eta_j^{c-1} e^{-\tilde m_j \eta_j} \, d\eta_j \\
& = \frac{e^{-\tilde m_j}}{m_j^c} \, \frac{\Gamma(c, b \tilde m_j)}{\Gamma(0, \tilde m_j + b \tilde m_j)} \\
& \le \varepsilon \tilde m_j^{-c} + C_{\varepsilon} \\
& \le (a_\xi/2)^{-c} \varepsilon \, \tilde \sigma^{2c} |\tilde \beta_j|^{-2c} + C_{\varepsilon}
\ee
In the first inequality, we used the bound $\eta_j/(1+\eta_j) < 1$ for $\eta_j \in (b, \infty)$, while the penultimate inequality follows from Lemma \ref{lem:imcomp_gamrat2}. From \eqref{eq:V1_bded}, we then obtain 
\be 
\E\big(V_1(\eta,\tilde \xi) \mid x_{\backslash \eta}) \le \sum_{j=1}^p (a_\xi/2)^{-c} C_5 \, \varepsilon \, \sigma^{2c} |\beta_j|^{-2c} + C_9. 
\ee
Now pick $\varepsilon$ such that $(a_\xi/2)^{-c} C_5 \, \varepsilon < 1$, and we have proved that $V$ is Lyapunov. 

\subsection{Minorization condition on sublevel sets}
Consider sublevel sets of the Lyapunov function in \eqref{eq:Lyapunov}:
\be
\mc S(R) := \left\{ x :  \frac{\|W \beta\|^2}{\sigma^2} + \sum_{j=1}^p \big[\sigma^{2c} |\beta_j|^{-2c} + \sigma^{-c} \eta_j^c  |\beta_j|^c + \eta_j^c \big] + \xi^2 < R \right\}
\ee
where $D = \diag(\eta^{-1})$. Consider two points $x,y \in \X$. Observe that
\be
\|\delta_x \P - \delta_y \P\|_{\TV} &= \int |p_1(\eta \mid x_{\setminus \eta}) p_2(z_{\setminus \eta}  \mid \eta, \tilde \xi) - p_1(\eta \mid z_{\setminus \eta}) p_2(z_{\setminus \eta} \mid \eta, \tilde \xi) | d z_{\setminus \eta} d\eta  \\
&= \int  p_2(z_{\setminus \eta} \mid \eta, \tilde \xi) |p_1(\eta \mid x_{\setminus \eta}) - p_1(\eta \mid y_{\setminus \eta}) | d z_{\setminus \eta} d\eta \\
&= \int |p_1(\eta \mid x_{\setminus \eta}) - p_1(\eta \mid y_{\setminus \eta}) | d\eta \\
&= \|\delta_{x_{\setminus \eta}} \P_1 - \delta_{y_{\setminus \eta}} \P_1 \|_{\TV},
\ee
the total variation distance just between the $\eta$ conditionals started at two points $x_{\setminus \eta},y_{\setminus \eta}$. 

Let 
\be
S_{\setminus \eta}(R) =  \{x_{\setminus \eta} \in \X_2 : (x_{\setminus \eta},\eta) \in \mc S(R) \text{ for some } \eta \in \X_1   \}
\ee
Consider a point $x_{\setminus \eta} \in \mc S_{\setminus \eta}(R)$. Any such point must satisfy 
\be
\frac{|\beta_j|^c}{\sigma^c} \eta^c &< R \Rightarrow \frac{\beta_j^2 \xi}{\sigma^2} < b_{\xi} R^{2/c} \eta^{-2} \quad j=1,\ldots,p \\
\frac{\sigma^{2c}}{|\beta_j|^{2c}} &< R \Rightarrow \frac{\beta_j^2 \xi}{\sigma^2} > a_{\xi} R^{-1/c} \quad j=1,\ldots,p \\
\text{if } N \le p \text{ then } \frac{\|W \beta\|^2}{\sigma^2} &< R \Rightarrow \frac{\beta_j^2 \xi}{\sigma^2} < b_\xi R \quad j=1,\ldots,p
\ee
It follows that when $N \le p$, $a_\xi R^{-1/c} < \beta_j^2 \xi \sigma^{-2} < b_\xi R$ for every $x_{\setminus \eta} \in \mc S_{\setminus \eta}(R)$. Moreover, if $p > N$ and the prior on $\eta$ is truncated below by $b$, then we have $a_\xi R^{-1/c} < \beta_j^2 \xi \sigma^{-2} < b_\xi R^{2/c} b^{-2}$ for every $x_{\setminus \eta} \in \mc S_{\setminus \eta}(R)$. 

The remainder of the proof uses the upper bound $b_\xi R^{2/c} b^{-2}$ from the $p > N$ case; the proof for the $p \le N$ case is virtually identical and omitted. Define the interval
\be
I(R) = \left[ \frac12 a_\xi R^{-1/c}, \frac12 b_\xi R^{2/c} b^{-2} \right]
\ee
and collection of densities corresponding to the full conditional of $\eta_j \mid x_{\setminus \eta}$ for a generic $\eta_j$
\be
\mc F(R) &= \left\{f_{m_j}(\eta_j) = \frac{e^{-m_j}}{\Gamma(0,m_j(1+b))} \, \frac{e^{-m_j \eta_j}}{1+ \eta_j} \1\{\eta_j > b\}, m_j \in I(R)  \right\},
\ee
and recall that $m_j = \beta_j^2 \xi \sigma^{-2}$. We have that for any $m_j \in I(R)$
\be
\frac{e^{-m_j}}{\Gamma(0,m_j(1+b))} \, \frac{e^{-m_j \eta_j}}{1+ \eta_j} \1\{\eta_j > b\} \ge 
\frac{e^{-\frac12 b_\xi R^{2/c} b^{-2}}}{\Gamma(0,\frac12 a_\xi R^{-1/c}(1+b))} \, \frac{e^{-\frac12 b_\xi R^{2/c} b^{-2} \eta_j}}{1+ \eta_j} \1\{\eta_j > b\}
\ee
and since the function $e^{-cx}/(1+x)$ for $c>0$ is monotone decreasing in $x$, it follows
\be
\inf_{\substack{\eta_j \in (b,b+1] \\ m_j \in I(R)}} f_{m_j}(\eta_j) \ge \frac{e^{-\frac12 b_\xi R^{2/c} b^{-2}}}{\Gamma(0,\frac12 a_\xi R^{-1/c}(1+b))} \, \frac{e^{-\frac12 b_\xi R^{2/c} b^{-2} (b+1)}}{2+b} \1\{\eta_j > b\}.
\ee
Now since the transition density corresponding to $\P_1$ can be written $p_1(x_{\setminus \eta}, \eta) = \prod_{j=1}^p f_{m_j}(\eta_j)$, we have, with $m = (m_1,\ldots,m_p)$,
\be
\inf_{\substack{\eta_j \in (b,b+1]^p \\ m_j \in I(R)^p}} p_1(x_{\setminus \eta},\eta) \ge \frac{e^{-\frac{p}2 b_\xi R^{2/c} b^{-2}}}{\Gamma^p(0,\frac12 a_\xi R^{-1/c}(1+b))} \, \frac{e^{-\frac{p}2 b_\xi R^{2/c} b^{-2} (b+1)}}{(2+b)^p} \equiv C(R)>0.
\ee
Define
\be
\mc M(R) = \{ m : m_j = \beta_j^2 \xi \sigma^{-2} \text{ for some } (\beta,\xi,\sigma^2) \in \mc S_{\setminus \eta}(R) \},
\ee
and observe that $\mc M(R) \subset I(R)^p$. It follows that
\be
\inf_{x_{\setminus \eta}, y_{\setminus \eta} \in \mc S_{\setminus \eta}(R)} \int_{(b,b+1]^p} (p_1(x_{\setminus \eta},\eta) \wedge p_1(y_{\setminus \eta},\eta)) d\eta \ge C(R) \int_{(b,b+1]^p} d\eta = C(R),
\ee
so
\be
\sup_{x_{\setminus \eta},y_{\setminus \eta} \in \mc S_{\setminus \eta}(R)} \|\delta_{x_{\setminus \eta}} \P_1 - \delta_{y_{\setminus \eta}} \P_1\|_{\TV} &= 1 - \inf_{x_{\setminus \eta}, y_{\setminus \eta} \in \mc S_{\setminus \eta}(R)} \int_{(b,\infty)^p} (p_1(x_{\setminus \eta},\eta) \wedge p_1(y_{\setminus \eta},\eta)) d \eta \\
&\le 1 - \inf_{x_{\setminus \eta}, y_{\setminus \eta} \in \mc S_{\setminus \eta}(R)} \int_{(b,b+1]^p} (p_1(x_{\setminus \eta},\eta) \wedge p_1(y_{\setminus \eta},\eta)) d \eta \\
&\le 1-C(R) < 1,
\ee
completing the proof. 

\section{Proof of Theorem 3.13} \label{sec:Theorem1Proof}
We first show that $V$ continues to define a Lyapunov function for the approximate chain $\P_\epsilon$, and then bound the total variation distance between the exact and approximate transition densities. 
\subsection{Lyapunov condition for approximate chain}
The proof of this part to a large extent closely resembles the first part of the proof of Theorem \ref{thm:Gergo}, and we only point out the key features. The only place where one requires more work is to bound the trace term of $E(\|W \beta\|^2 \mid \sigma^2, \xi, \eta) = \mu_\delta' W'W \mu_\delta + \sigma^2 \, \tr(W \Sigma_\delta W')$ under $\P_\epsilon$. Proceeding as before, we can show $\mu_\delta' W'W \mu_\delta = \|(I_N - M_\delta^{-1})z\|^2 \le \|z\|^2$. Write $\tr(W \Sigma_\delta W') = \tr(W \Sigma W') + \tr(W \Delta W')$, where $\Delta = \Sigma_\delta - \Sigma$. We have already showed in the proof of Theorem \ref{thm:Gergo} that $\tr(W \Sigma W')$ is small. For the other term, write $\tr(W \Delta W') = \tr(W \Sigma^{1/2} \Sigma^{-1/2} \Delta \Sigma^{-1/2} \Sigma^{1/2} W')$. We prove in the next subsection (see equation \ref{eq:siginvdel_final}) that
\be 
\|\Sigma^{-1} \Delta\|_2 \le 8 \|W\|^2 \delta + O(\delta^2).
\ee
Since $\Sigma^{-1/2}\Delta \Sigma^{-1/2}$ is similar to $\Sigma^{-1} \Delta$, this means $\Sigma^{-1/2}\Delta \Sigma^{-1/2} \preceq C I_p$ for some constant $C>0$. This means $\tr(W \Sigma^{1/2} \Sigma^{-1/2} \Delta \Sigma^{-1/2} \Sigma^{1/2} W') \le C \tr(W \Sigma W')$, which we already know is bounded above by a constant. 

The other fact used to complete the proof is that the same two-sided bound for $\sigma_j^2$ continues to hold as before. To see this, for $j \notin S$, $\sigma_j^2 = \xi^{-1} \eta_j^{-1}$, while for $j \in S$, $\sigma_j^2 \ge 1/(s_{\max}^2(W_S) + \xi \eta_j) \ge 1/(s_{\max}^2(W) + \xi \eta_j)$, and the upper bound $\sigma_j^2 \le (\xi \eta_j)^{-1}$ holds for all $j$.

\subsection{Proof of uniform total variation bound in \eqref{eq:PointwiseEtaBeta}}
Recall we denote $x = (\beta, \sigma^2, \xi, \eta)$ for the entire state vector. We shall also call $\theta = (\beta, \sigma^2)$. The various pieces of the transition density for the exact algorithm is given in \eqref{eq:TransDens} -- \eqref{eq:jt_exact}, while the same for the approximate algorithm is given in \eqref{eq:transker_full_approx} -- \eqref{eq:QEtaDelta}.

We now proceed to bound the total variation distance between $\P(x, \cdot)$ and $\P_\epsilon(x, \cdot)$. We have, for a fixed $x \in \X$, 
\begin{align*}
& 2 \|\P(x, \cdot) - \P_\epsilon(x, \cdot) \|_{\TV} = \int | p(x, y) - p_\epsilon(x, y) | dy \\
=& \int p_1(\eta \mid x_{\setminus \eta}) \left\{ \int |p_2(y_{\setminus \eta} \mid \eta, \tilde \xi) - p_{2,\epsilon}(y_{\setminus \eta} \mid \eta, \tilde \xi)| d y_{\setminus \eta} \right\} d\eta \\
\le & \sup_\eta \int |p_2(y_{\setminus \eta} \mid \eta, \tilde \xi) - p_{2,\epsilon}(y_{\setminus \eta} \mid \eta, \tilde \xi)| d y_{\setminus \eta} \\
= & \sup_{\eta} \int \big|~ p_2(\theta \mid \xi, \eta) \, p_2(\xi \mid \eta, \tilde \xi) - p_{2,\epsilon}(\theta \mid \xi,\eta) \, p_{2,\epsilon}(\xi \mid \eta, \tilde \xi) ~\big| \, d y_{\setminus \eta}'   \\
\stackrel{(i)}{\le}& \sup_{\eta} \bigg[ \int \left\{ \int \left| p_2(\theta \mid \xi, \eta) - p_{2,\epsilon}(\theta \mid \xi, \eta)  \right| ~ d\theta \right\} \, p_2(\xi \mid \eta, \tilde \xi) \, d \xi \\
+ & \int \big| p_2(\xi \mid \eta, \tilde \xi) - p_{2, \epsilon}(\xi \mid \eta, \tilde \xi) \big| \, d \xi \bigg] \\
\stackrel{(ii)}{\le} & 2 \sup_{\xi, \eta} \| p_2(\theta \mid \xi, \eta) - p_{2,\epsilon}(\theta \mid \xi, \eta) \|_{\TV} + 2 \sup_{\xi, \tilde \xi, \eta} \big| \alpha_\eta(\tilde \xi, \xi) - \alpha_{\eta, \epsilon}(\tilde \xi, \xi) \big|.  
\end{align*}
For (i), we used triangle inequality and that $\int p_{2,\epsilon}(\theta \mid \xi, \eta) \, d\theta = 1$. For (ii), we used that 
\begin{align*}
& \int \big| p_2(\xi \mid \eta, \tilde \xi) - p_{2, \epsilon}(\xi \mid \eta, \tilde \xi) \big| \, d \xi  \\
\le & \int \big| \alpha_\eta(\tilde \xi, \xi) - \alpha_{\eta, \epsilon}(\tilde \xi, \xi) \big| \, h(\xi \mid \tilde \xi) \, d\xi + \big|r_\eta(\tilde \xi) - r_{\eta, \epsilon}(\tilde \xi) \big| \\
\le & 2 \sup_{\xi, \tilde \xi, \eta} \big| \alpha_\eta(\tilde \xi, \xi) - \alpha_{\eta, \epsilon}(\tilde \xi, \xi) \big|.
\end{align*}
Since the bound in (ii) is independent of $x$, we conclude that 
\be
\sup_{x \in \X} \|\delta_x \P - \delta_x \P_\epsilon \|_{\TV} &\le \underbrace{\sup_{\xi, \eta} \| p_2(\theta \mid \xi, \eta) - p_{2,\epsilon}(\theta \mid \xi, \eta) \|_{\TV}}_{\TV_1} \\
&+ \underbrace{\sup_{\xi, \tilde \xi, \eta} \big| \alpha_\eta(\tilde \xi, \xi) - \alpha_{\eta, \epsilon}(\tilde \xi, \xi) \big|}_{\TV_2}. \label{eq:TVCombine}
\ee
We now separately bound $\TVar_1$ and $\TVar_2$. We show that 
\begin{align*}
\TVar_1^2 & = 4 \|W\|^2 \delta + \frac{N+\omega}{\omega} \|W\|^2 \delta + \frac{N}2 \, \frac{\|z\|^2}{\omega} \, \|W\|^2 \delta +  \mathcal O(\delta^2), \\
\TVar_2 & = N \, \|W\|^2 \, (1 + \|z\|^2/\omega)  \delta + \mathcal O(\delta^2),
\end{align*}
for sufficiently small $\delta$, which produce the desired bound. Since the derivations to obtain these bounds are somewhat lengthy, we split them into two different sections below.

\subsection{Bounding $\TVar_2$: MH ratio approximations for $\xi$}
We first record a couple of useful auxiliary results. The first result is a well-known eigenvalue perturbation bound due to Weyl. 
\begin{lemma}[Weyl]\label{lem:Weyl}
Let $A, E$ be $n \times n$ Hermitian matrices. Then, for $i = 1, \ldots, n$, 
$$
| \nu_i(A+E) - \nu_i(A) | \le \|E\|,
$$
where $\nu_i(A)$ denotes the $i$th eigenvalue of $A$, and $\|\cdot\|$ denotes the operator norm of a matrix. 
\end{lemma}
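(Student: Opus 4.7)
The plan is to prove Weyl's inequality via the Courant--Fischer min--max characterization of eigenvalues of Hermitian matrices. First I would recall that for any Hermitian $M \in \mathbb{C}^{n \times n}$ with eigenvalues ordered as $\nu_1(M) \geq \nu_2(M) \geq \ldots \geq \nu_n(M)$, one has the variational representation
\begin{equation*}
\nu_i(M) \;=\; \max_{\substack{V \subseteq \mathbb{C}^n \\ \dim V = i}} \;\min_{\substack{v \in V \\ \|v\|=1}} v^* M v,
\end{equation*}
and symmetrically with $\min$ over $(n-i+1)$-dimensional subspaces and $\max$ over unit vectors inside them. I will take this as given, since it is standard linear algebra and is not really the content of the lemma.

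The key reduction is that for any unit vector $v \in \mathbb{C}^n$,
\begin{equation*}
v^*(A+E)v \;=\; v^* A v + v^* E v, \qquad \bigl| v^* E v \bigr| \;\leq\; \|E\|,
\end{equation*}
where the second inequality holds since $E$ is Hermitian, so $\|E\|$ equals the spectral radius and hence bounds the Rayleigh quotient $|v^* E v|$. Substituting this two-sided bound into the min--max formula for $\nu_i(A+E)$, I get, for every $i$-dimensional subspace $V$,
\begin{equation*}
\min_{v \in V,\ \|v\|=1} v^*(A+E) v \;\in\; \Bigl[\, \min_{v \in V,\ \|v\|=1} v^* A v - \|E\|, \;\; \min_{v \in V,\ \|v\|=1} v^* A v + \|E\| \,\Bigr].
\end{equation*}
Taking the maximum over all such subspaces $V$ preserves the two-sided bound and yields $\nu_i(A) - \|E\| \leq \nu_i(A+E) \leq \nu_i(A) + \|E\|$, which is exactly the claim.

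There is no real obstacle here; the entire content is picking the right variational formula and noting that perturbing the matrix inside a Rayleigh quotient perturbs the quotient by at most $\|E\|$ uniformly in $v$. The only minor care point is matching the eigenvalue-ordering convention (non-increasing here, per the paper's notation section), but that is purely bookkeeping and does not affect the argument. An alternative route would be through the interlacing/Wielandt--Hoffman type arguments, but the Courant--Fischer proof is the shortest and most transparent.
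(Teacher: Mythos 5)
Your proof is correct, but note that the paper itself offers no proof of this lemma to compare against: it is stated as ``a well-known eigenvalue perturbation bound due to Weyl'' and invoked as a black box. Your Courant--Fischer argument is the standard textbook proof and is sound as written: the uniform bound $|v^* E v| \le \|E\|$ passes through the inner minimum over unit vectors in a fixed subspace $V$ and then through the outer maximum over $i$-dimensional subspaces, since a pointwise two-sided perturbation by a constant perturbs both extrema by at most that constant. One small observation: the bound $|v^* E v| \le \|E\|$ follows already from Cauchy--Schwarz, $|v^* E v| \le \|Ev\|\,\|v\| \le \|E\|$, without invoking that the operator norm of a Hermitian matrix equals its spectral radius; Hermitianness of $A$, $E$, and $A+E$ is what you actually need for the eigenvalues to be real and for the min--max representation to apply, and your bookkeeping of the non-increasing ordering is consistent with the paper's convention.
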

Next, we present a simple yet useful result to bound the difference between MH acceptance probabilities. 
\begin{lemma}\label{lem:MHprob_diff}
For any $a, b > 0$, 
$$
| \min(a, 1) - \min(b,1) | \le \max \big \{ |(a/b) - 1|, \, |(b/a) - 1| \big\} \le e^{|\Delta|} - 1, 
$$
where $\Delta = \log(a/b)$. 
\end{lemma}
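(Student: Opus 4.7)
My plan is to prove the two inequalities in Lemma \ref{lem:MHprob_diff} separately, both by elementary manipulations. Throughout I will assume without loss of generality that $a \le b$; the general case follows by symmetry since swapping $a$ and $b$ leaves both sides of each inequality unchanged (the max term is symmetric in $a,b$, and $|\log(a/b)| = |\log(b/a)|$).

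For the first inequality, the key observation is that the map $x \mapsto \min(x,1)$ is $1$-Lipschitz on $(0,\infty)$, so $|\min(a,1) - \min(b,1)| \le |a-b| = b-a$. I then split into two cases. If $a \ge 1$, then $\min(a,1) = \min(b,1) = 1$ and the left-hand side is $0$, so the bound is trivial. If $a < 1$, I use the identity $b - a = a \, (b/a - 1)$ together with $1/a > 1$ to obtain
\begin{equation*}
|\min(a,1) - \min(b,1)| \;\le\; b - a \;\le\; \frac{b-a}{a} \;=\; \Bigl|\frac{b}{a} - 1\Bigr| \;\le\; \max\Bigl\{\Bigl|\frac{a}{b} - 1\Bigr|,\, \Bigl|\frac{b}{a} - 1\Bigr|\Bigr\}.
\end{equation*}
This is the desired first inequality.

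For the second inequality, setting $\Delta = \log(a/b)$ gives $a/b = e^{\Delta}$ and $b/a = e^{-\Delta}$, so the middle quantity is $\max\{|e^{\Delta} - 1|,\, |e^{-\Delta} - 1|\}$. It suffices to show $|e^{t} - 1| \le e^{|\Delta|} - 1$ for $t \in \{\Delta, -\Delta\}$. When $t \ge 0$ this is an equality at $t = |\Delta|$ and otherwise follows from monotonicity of $e^t - 1$. When $t < 0$, I need $1 - e^{t} \le e^{-t} - 1$, i.e.\ $e^{t} + e^{-t} \ge 2$, which is immediate from AM--GM applied to $e^{t}, e^{-t} > 0$ (or convexity of the exponential at $0$). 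Taking the maximum over the two choices of $t$ yields $\max\{|e^{\Delta} - 1|,\, |e^{-\Delta} - 1|\} \le e^{|\Delta|} - 1$, completing the proof.

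Neither step presents a real obstacle — the entire lemma is a short elementary exercise whose only subtlety is the case split on whether $a < 1$ or $a \ge 1$ in the first inequality, which is needed because the bound $b - a \le (b-a)/a$ only holds when $a \le 1$. The lemma's utility lies downstream (bounding $TV_2$ by converting multiplicative approximation error in the MH ratio into additive error in the acceptance probability), but its own proof is purely mechanical.
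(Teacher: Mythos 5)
Your proof is correct and follows essentially the same route as the paper's: both reduce the first inequality to $|\min(a,1)-\min(b,1)| \le |a-b|$ (your Lipschitz observation replaces the paper's four-case enumeration), then factor $|a-b| = \min(a,b)\,\bigl|\max(a,b)/\min(a,b) - 1\bigr|$ and use $\min(a,b) \le 1$ in the nontrivial cases, with the trivial case $a,b \ge 1$ handled separately. For the second inequality the paper merely invokes monotonicity of $x \mapsto e^x$, while you supply the slightly fuller justification $e^t + e^{-t} \ge 2$ for the negative-exponent case, which is a welcome but minor elaboration of the same argument.
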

\begin{proof}
First observe that $|\min(a, 1) - \min(b, 1)| \le |a - b|$, which can be verified by enumerating the 4 different cases (i) $a, b < 1$, (ii) $a < 1 < b$, (iii) $b < 1 < a$, and (iv) $a, b > 1$. In case (iv), the left hand side is 0 and the claimed bound is trivially satisfied. In the remaining cases, bound 
\be
|a - b| &= | \{ \max(a,b)/\min(a,b) \} - 1 | \, \min(a, b) \le | \{ \max(a,b)/\min(a,b) \} - 1 | \\
&\le \max \big \{ |(a/b) - 1|, \, |(b/a) - 1| \big\}. 
\ee
This proves the first part. The second part simply follows from the monotonicity of $x \mapsto e^x$. 
\end{proof}

\

As noted in Appendix \ref{sec:App_exap}, we have that 
$$
\alpha_\eta(x,y) = \min\{1, q_\eta(x,y)\}, \quad \alpha_{\eta,\epsilon}(x,y) = \min\{1, q_{\eta,\delta}(x,y)\}
$$ 
with 
\begin{align*}
q_\eta(x,y) = \frac{ |M_y|^{-1/2} \, (\omega + z' M_y^{-1} z)^{-(N+\omega)/2} }{ |M_x|^{-1/2} \, (\omega + z' M_x^{-1} z)^{-(N+\omega)/2 }} \ \frac{y \sqrt{x} \, (1+x)}{x \sqrt{y} \, (1+y)},
\end{align*}
and $q_{\eta,\delta}(x, y)$ is obtained by replacing $M_t$ by $M_{t, \delta}$, 
where, recall that 
\begin{align*}
M_t = I_N + t^{-1} \, WDW', \ M_{t, \delta} = I_N + t^{-1} \, W D_\delta W', \quad t \in \{x, y\}.
\end{align*}
It then follows from Lemma \ref{lem:MHprob_diff} that 
\begin{align*}
\big| \alpha_\eta(x,y) - \alpha_{\eta, \epsilon}(x,y) \big| \le \exp(|\Delta|) - 1,
\end{align*}
where 
\begin{align*}
& \Delta = \log \frac{ q_{\eta,\delta}(x,y)}{q_\eta(x,y)} = \Delta_1 + \Delta_2, \\
& \Delta_1 = \Delta_{1,y} - \Delta_{1,x}, \quad \Delta_{1, t} = - \frac{1}{2} \, \big[ \log |M_{t, \delta}| - \log |M_t| \big], \ t \in \{x, y\}, \\
& \Delta_2 = \Delta_{2,y} - \Delta_{2,x}, \quad \Delta_{2, t} =  - \frac{n+\omega}{2} \, \big[ \log(1 + z' M_{t, \delta}^{-1} z/\omega) - \log(1 + z' M_t^{-1} z/\omega) \big], \ t \in \{x, y\}.
\end{align*}
We shall prove below that
\begin{align}\label{eq:del_bd}
|\Delta| \le N \, \|W\|^2 \, (1 + \|z\|^2/\omega)  \delta.
\end{align}
Observe the right hand side is independent of $\xi$ and $\eta$.  

\

To establish \eqref{eq:del_bd}, we bound 
\begin{align}\label{eq:Delta_bd}
|\Delta| \le \sum_{t \in \{x, y\}} [|\Delta_{1,t}| + |\Delta_{2,t}|]. 
\end{align}
We now proceed to individually bound $|\Delta_{1,t}|$ and $|\Delta_{2,t}|$ for $t \in \{x, y\}$. 

For $t \in \{x, y\}$, we have 
\begin{align*}
\big \vert \log |M_t| - \log |M_{t, \delta}| \big \vert  
& = \bigg \vert \sum_{i=1}^N \big[\log\{1 + t^{-1}\nu_i(W D W')\} - \log\{1 + t^{-1} \nu_i(W D_{\delta}W')\} \big] \bigg \vert \\
& \le \sum_{i=1}^N \big \vert \log\{1 + t^{-1}\nu_i(W D W')\} - \log\{1 + t^{-1} \nu_i(W D_{\delta}W')\} \big \vert \\
& \le \sum_{i=1}^N \big \vert t^{-1}\nu_i(W D W') - t^{-1} \nu_i(W D_{\delta}W') \big \vert,
\end{align*}
where the last step uses the fact that the map $u \mapsto \log(1+u)$ for $u > 0$ is Lipschitz. Write 
$$
t^{-1} WDW' = t^{-1} WD_{\delta}W' + t^{-1} WD_{<\delta}W',
$$
where $D_{<\delta} = \mbox{diag}\big( (\eta_j^{-1}) \, \mathbf{1}(j \in \mathcal{I}^c)\big)$ retains the entries of $D$ which are thresholded. By Weyl's perturbation bound (see Lemma \ref{lem:Weyl}), for any $i = 1, \ldots, N$, 
$$
\big \vert t^{-1}\nu_i(W D W') - t^{-1} \nu_i(W D_{\delta}W') \big \vert \le t^{-1} \|W D_{< \delta} W'\| \le \delta \|W\|^2, 
$$
where we use the fact that, given our thresholding rule, all non-zero diagonal entries of the matrix $t^{-1} D_{< \delta}$ is bounded by $\delta$ for $t \in \{x, y\}$. Substituting the bound, we obtain, 
\begin{align}\label{eq:Delta1_bd}
\sum_{t \in \{x, y\}} |\Delta_{1,t}| \le N \|W\|^2 \, \delta. 
\end{align}

Next, we bound $|\Delta_{2,t}|$ for $t \in \{x, y\}$. To that end, once again using that the map $u \mapsto \log(1+u)$ is Lipshcitz, bound 
\begin{align*}
\big \vert \log(1 + z' M_t^{-1} z/\omega) - \log(1 + z' M_{t, \delta}^{-1} z/\omega) \big \vert  &\le  \big \vert z'(M_t^{-1} - M_{t, \delta}^{-1})z/\omega \big \vert \\
&\le (\|z\|^2/\omega) \, \| M_t^{-1} - M_{t, \delta}^{-1} \| \\
&\le (\|z\|^2/\omega) \, \| M_t^{-1} (M_{t, \delta} - M_t) M_{t, \delta}^{-1} \| \\
&\le (\|z\|^2/\omega) \, \|M_{t, \delta} - M_t \|,
\end{align*}
where we have used the identity $A^{-1} - B^{-1} = A^{-1} (B-A) B^{-1}$, the bound $\|AB\| \le \|A\| \|B\|$, and the fact that both $\|M_t^{-1}\|$ and $\|M_{t, \delta}^{-1}\|$ are bounded above by 1. Continuing from the last line of the display, $\|M_t - M_{t,\delta}\| = \| t^{-1} WD_{<\delta} W' \| \le \delta \|W\|^2$ using the same argument as in the bound for $\Delta_{1,t}$. Substituting this bound, we obtain, 
\begin{align}\label{eq:Delta2_bd}
\sum_{t \in \{x, y\}} |\Delta_{2,t}| \le (N + \omega) \,(\|z\|^2/ \omega)\, \|W\|^2 \, \delta. 
\end{align}
Substituting \eqref{eq:Delta1_bd} and \eqref{eq:Delta2_bd} in \eqref{eq:Delta_bd}, we obtain \eqref{eq:del_bd}. Now, making a Taylor expansion of $e^{x}-1$ about zero, we obtain for $0< x < 1$
\be
e^{x}-1 = (1 + x + \mathcal O(x^2))-1 = x + \mathcal O(x^2),
\ee
which gives
\be \label{eq:TV2_bound}
\TVar_2 = N \, \|W\|^2 \, (1 + \|z\|^2/\omega)  \delta + \mathcal O(\delta^2)
\ee
for sufficiently small $\delta$. 

\subsection{Bounding $TV_1$}
To bound the total variation distance between $p_2(\cdot \mid \xi, \eta)$ and $p_{2,\epsilon}(\cdot \mid \xi, \eta)$, we use Pinsker's inequality, 
\begin{align}\label{eq:pinsker}
\|p_2(\theta \mid \xi, \eta) - p_{2,\epsilon}(\theta \mid \xi, \eta)\|_{\TV}^2 \le \frac{1}{2} \, \mathrm{KL}\big(p_{2,\epsilon}(\theta \mid \xi, \eta)\,||\,p_2(\theta \mid \xi, \eta)\big), 
\end{align}
and subsequently use the expression for KL between two MNIGs derived in Lemma \ref{lem:KL_MNIG}; note that the shape parameters $a_\delta = a = (N+\omega)/2$ and hence the Lemma applies. 

\

Let us define
\begin{align}\label{eq:KL_terms}
\begin{aligned}
& \mbox{KL}_1 = \tr(\Sigma^{-1} \Sigma_\delta - I_p) - \log |\Sigma^{-1} \Sigma_\delta|,  \\
& \mbox{KL}_2 = (\mu- \mu_\delta)' \Sigma^{-1} (\mu - \mu_\delta) \, \frac{a_\delta}{a'_\delta}, \\
& \mbox{KL}_3 = a_\delta \log(a'_\delta/a') + (a' - a'_\delta)  \, \frac{a_\delta}{a'_\delta}
\end{aligned}
\end{align}
so that 
$$
\mathrm{KL}\big(p_{2,\epsilon}(\cdot \mid \xi, \eta)\,||\,p_2(\cdot \mid \xi, \eta)\big) = 0.5(\mbox{KL}_1 + \mbox{KL}_2) + \mbox{KL}_3. 
$$
We now proceed to bound each of the terms subsequently. 

\subsubsection{Bounds for $\mbox{KL}_1$}

The matrix $\Sigma^{-1} \Sigma_\delta$ is similar to the positive definite matrix $\Sigma^{-1/2} \Sigma_\delta \Sigma^{-1/2}$, and hence its eigenvalues $\{\zeta_j\}_{j=1}^p$ are all positive. Expressing the trace and determinant in terms of the eigenvalues, we obtain, 
\begin{align}\label{eq:KL_1a}
\mbox{KL}_1 = \sum_{j=1}^p( \zeta_j - 1 - \log \zeta_j).  
\end{align}
Now, write 
\begin{align*}
\Sigma_\delta = \Sigma + \Delta, \quad \Delta = \Gamma W' M^{-1} W \Gamma - \big(2 \Gamma W'  - \Gamma_\delta W' M_\delta^{-1} M \big) \, M_\delta^{-1} W \Gamma_\delta, 
\end{align*}
and 
\begin{align*}
\Sigma^{-1} \Sigma_\delta = I_p + \Sigma^{-1} \Delta. 
\end{align*}
Using $\mbox{rank}(B_1 B_2) \le \min\{\mbox{rank}(B_1), \mbox{rank}(B_2)\}$, $\Delta$ is the difference of two matrices with rank at most $N$ each, and using $\mbox{rank}(B_1+B_2) \le \mbox{rank}(B_1) + \mbox{rank}(B_2)$, we can bound $\mbox{rank}(\Delta) \le 2 N$, which then implies $\mbox{rank}(\Sigma^{-1} \Delta) \le 2 N$. Letting $\{\bar{\zeta}_j\}_{j=1}^p$ denote the eigenvalues of $\Sigma^{-1} \Delta$, it then follows that $\bar{\zeta}_j = 0$ for $j \ge 2N$. Since $\zeta_j = 1 + \bar{\zeta}_j$, we conclude that $\zeta_j = 1$ for $j \ge 2N$, and 
\begin{align}\label{eq:KL_1b}
\mbox{KL}_1 = \sum_{j=1}^{2N} \big( \zeta_j - 1 - \log \zeta_j \big) = \sum_{j=1}^{2N} \big[\bar{\zeta}_j - \log(1 + \bar{\zeta}_j) \big].
\end{align} 
Observe that the right hand side is a positive quantity, since $\log(1+x) \le x$ for $x > -1$ and $\bar{\zeta}_j > -1$ for all $j$ (since $\zeta_j > 0$ for all $j$). Using Taylor expansion, it can be further shown that $x - \log(1+x) < x^2$ whenever $|x| \le 1/2$. Using that the magnitude of the eigenvalues of a matrix are bounded by its operator norm, we have $|\bar{\zeta}_j| \le \|\Sigma^{-1} \Delta\|$ for all $j = 1, \ldots, 2N$. Hence, if we can show that $\|\Sigma^{-1} \Delta\|$ is small, we can bound
\begin{align}\label{eq:KL_1c}
\mbox{KL}_1 \le \sum_{j=1}^{2N} |\bar{\zeta}_j|^2 \le 2 N \, \|\Sigma^{-1} \Delta\|^2. 
\end{align}
With this motivation, we now proceed to bound $\|\Sigma^{-1} \Delta\|$. 
To facilitate our bounds, we decompose 
$$
(\Sigma_\delta - \Sigma) = (\Sigma_\delta - \Sigma_\ast) + (\Sigma_\ast - \Sigma),
$$
where 
$$
\Sigma_\ast = \Gamma - \Gamma W'(2 M_\delta^{-1} - M_\delta^{-1} M M_\delta^{-1})W \Gamma.
$$
$\Sigma_\ast$ itself is a covariance matrix, although this isn't used in the subsequent analysis. 
Letting $A = M_\delta^{-1} M M_\delta^{-1}$, 
\begin{align*}
\Gamma_\delta W' A W \Gamma_\delta  - \Gamma W' A W \Gamma 
= \Gamma_\delta W' A W (\Gamma_\delta - \Gamma) + (\Gamma_\delta - \Gamma) W' A W \Gamma. 
\end{align*}
Hence, 
\begin{align*}
\Sigma_\delta - \Sigma_\ast = \underbrace{ 2 \Gamma W' M_\delta^{-1} W (\Gamma-\Gamma_\delta) }_{T_1} + \underbrace{ \Gamma_\delta W' A W (\Gamma_\delta - \Gamma) }_{T_2} + \underbrace{ (\Gamma_\delta - \Gamma) W' A W \Gamma }_{T_3}.
\end{align*}
Recall that $\Sigma^{-1} = (W' W + \Gamma^{-1})$. Let us now calculate 
$$
\Sigma^{-1}(\Sigma_\delta - \Sigma_\ast) = (W' W + \Gamma^{-1}) (T_1 + T_2 + T_3) = H_1 + H_2 + H_3, 
$$
with 
\begin{align}\label{eq:H1_3}
\begin{aligned}
H_1 & = 2 \big[ W' \, W\Gamma W' \, M_\delta^{-1} \, W (\Gamma-\Gamma_\delta) + W' M_\delta^{-1} W (\Gamma - \Gamma_\delta) \big] \\
H_2 & = \big[W' \, W\Gamma_\delta W' \, A \, W (\Gamma-\Gamma_\delta) + \Gamma^{-1} \Gamma_\delta \, W' A W (\Gamma_\delta - \Gamma) \big] \\
H_3 & = \big[W'W \, (\Gamma_\delta - \Gamma) W' A W \Gamma + \Gamma^{-1} \, (\Gamma_\delta - \Gamma) W' A W \Gamma\big]. 
\end{aligned}
\end{align}
Next, 
\begin{align*}
\Sigma_\ast - \Sigma  = \Gamma W' \, \underbrace{ \big[M^{-1} + M_\delta^{-1} M M_\delta^{-1} - 2 M_\delta^{-1} \big]}_{E} \,W \Gamma.
\end{align*}
Hence, 
\begin{align}\label{eq:H_4}
H_4 :\,= \Sigma^{-1}(\Sigma_\ast - \Sigma) = (W' W + \Gamma^{-1}) \Gamma W' E W \Gamma.
\end{align}
Combining \eqref{eq:H1_3} and \eqref{eq:H_4} and using the triangle inequality for the operator norm,
\begin{align*}
\| \Sigma^{-1} \Delta \| = \| \Sigma^{-1}(\Sigma_\delta - \Sigma) \| = \| H_1 + H_2 + H_3 + H_4\| \le \sum_{i=1}^4 \|H_i\|. 
\end{align*}
We now record a Lemma which collects various results required to bound the operator norms of the $H_i$s; a proof is provided in Appendix \ref{sec:pf_lem_var}. 
\begin{lemma}\label{lem:various}
The following inequalities hold: \\
$\mathrm{(i)}$ $\max \big\{ \|M^{-1}\|, \|M_\delta^{-1}\| \big\} \le 1$. \\
$\mathrm{(ii)}$ $\max \big\{ \|M - M_\delta\|, \|M^{-1} M_\delta - I_N\|, \|M_\delta M^{-1} - I_N\|, \|M_\delta^{-1} M - I_N\|, 
\|M M_\delta^{-1} - I_N \| \big\} \le  \|W\|^2 \delta$. \\
$\mathrm{(iii)}$ $ \max \big\{ \| W \Gamma W' \, M^{-1} \|, \|W \Gamma_\delta W' \, M_\delta^{-1} \| \big\} \le 1$. \\
$\mathrm{(iv)}$ $\| W \Gamma W' \, M_\delta^{-1}\| \le 1 + \|W\|^2 \delta$. \\
$\mathrm{(v)}$ Recalling that $A = M_\delta^{-1} M M_\delta^{-1}$, we have $\|A\| \le (1 + \|W\|^2 \delta)$. Further, $\|W \Gamma_\delta W' \, A\| \le (1 + \|W\|^2 \delta)$ and $\|A \, W \Gamma W'\| \le (1 + \|W\|^2 \delta)^2$. 
\end{lemma}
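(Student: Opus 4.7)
The plan is to run everything off a single identity together with submultiplicativity of the operator norm. Write $M - M_\delta = \xi^{-1} W (D - D_\delta) W' = \xi^{-1} W D_{<\delta} W'$, where $D_{<\delta}$ collects the thresholded diagonal entries of $D$. By the thresholding rule, every nonzero diagonal entry of $\xi^{-1} D_{<\delta}$ is at most $\delta$, so $\|\xi^{-1} D_{<\delta}\| \le \delta$, and then $\|M - M_\delta\| \le \|W\|^2 \delta$ via $\|A B A'\| \le \|A\|^2 \|B\|$. Part (i) is then immediate from the positive semidefiniteness of $W D W'$ and $W D_\delta W'$: both $M$ and $M_\delta$ dominate $I_N$ in the Loewner order, so their inverses have operator norm at most one.

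The remaining four norms in (ii) follow by pulling out one of $M^{-1}$ or $M_\delta^{-1}$: for instance, $M^{-1} M_\delta - I_N = M^{-1}(M_\delta - M)$, so $\|M^{-1} M_\delta - I_N\| \le \|M^{-1}\| \, \|M - M_\delta\| \le \|W\|^2 \delta$, and the other three work identically after a symmetric rewrite such as $M M_\delta^{-1} - I_N = (M - M_\delta) M_\delta^{-1}$. For (iii), I would use the algebraic identity $W \Gamma W' = \xi^{-1} W D W' = M - I_N$, which gives $W \Gamma W' M^{-1} = I_N - M^{-1}$; since the eigenvalues of $M^{-1}$ lie in $(0,1]$, those of $I_N - M^{-1}$ lie in $[0,1)$, so the operator norm is at most one. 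The second bound in (iii) follows analogously from $W \Gamma_\delta W' = M_\delta - I_N$.

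For (iv) and (v) the strategy is to manufacture factorizations that reduce everything to (i)--(iii). For (iv), write $W \Gamma W' M_\delta^{-1} = (W \Gamma W' M^{-1})(M M_\delta^{-1})$ and combine (iii) with $\|M M_\delta^{-1}\| \le 1 + \|W\|^2 \delta$ from (ii). For (v), factor $A = M_\delta^{-1}(M M_\delta^{-1})$ and bound via (i) and (ii); similarly, $W \Gamma_\delta W' A = (W \Gamma_\delta W' M_\delta^{-1})(M M_\delta^{-1})$ is controlled by (iii) and (ii). For $\|A W \Gamma W'\|$, use symmetry of $A$ and $W \Gamma W'$ to pass to the transpose $W \Gamma W' A = (W \Gamma W' M_\delta^{-1})(M M_\delta^{-1})$ and apply (iv) together with (ii), which yields the claimed $(1 + \|W\|^2 \delta)^2$. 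No serious obstacle is anticipated; the only subtle point is reading the thresholding rule with $\xi_{\max}^{-1} = \max\{\xi^{-1}, (\xi^*)^{-1}\}$ correctly, so that whenever an index is dropped the entry $t^{-1} \eta_j^{-1}$ is genuinely at most $\delta$ simultaneously for both the current $\xi$ and any proposed $\xi^*$ that might appear in the MH step -- this is what makes the bound in (ii) truly state-independent and propagates cleanly through the rest of the lemma.
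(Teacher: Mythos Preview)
Your proposal is correct and follows essentially the same line as the paper's proof: both use $M, M_\delta \succeq I_N$ for (i), the identity $W\Gamma W' M^{-1} = I_N - M^{-1}$ for (iii), the bound $\|M-M_\delta\|\le\|W\|^2\delta$ for (ii), and then reduce (iv)--(v) to (i)--(iii) via submultiplicativity. Your factorizations in (iv) and (v) are in fact slightly more direct than the paper's triangle-inequality splits---e.g.\ your $A = M_\delta^{-1}(M M_\delta^{-1})$ yields the stated $1+\|W\|^2\delta$ immediately, whereas the paper's decomposition $A = M_\delta^{-1} + M_\delta^{-1}M(M_\delta^{-1}-M^{-1})$ picks up an extra $(\|W\|^2\delta)^2$ term---but the two arguments are interchangeable.
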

Using Lemma \ref{lem:various}, we now proceed to bound the $\|H_i\|$s; that $\|\Gamma - \Gamma_\delta\| < \delta$ is used throughout, along with the facts $\|B_1 B_2\| = \|B_2 B_1\|$ and $\|B_1 + B_2\| \le \|B_1\| + \|B_2\|$.

\

\noindent {\bf Bound for $\|H_1\|$.} We obtain, using (i) and (iv) in Lemma \ref{lem:various}, 
\begin{align*}
\|H_1\| \le 2 \|W\|^2 \delta \, \big[ \|W \Gamma W' \, M_\delta^{-1}\| + \|M_\delta^{-1}\| \big] \le 2 \|W\|^2 \delta \big[2 + \|W\|^2 \delta \big]. 
\end{align*}

\

\noindent {\bf Bound for $\|H_2\|$.} We obtain, using (v) in Lemma \ref{lem:various} and the fact that $\|\Gamma^{-1} \Gamma_\delta\| \le 1$, 
\begin{align*}
\|H_2\| \le \|W\|^2 \delta \, \big[\|W \Gamma_\delta W' \, A\| + \|A\| \big] \le 2 \|W\|^2 \delta \, \big[1 + \|W\|^2 \delta\big]. 
\end{align*}

\

\noindent {\bf Bound for $\|H_3\|$.} We obtain, using (v) in Lemma \ref{lem:various} and the fact that $\|\Gamma^{-1} \Gamma_\delta\| \le 1$, 
\begin{align*}
\|H_3\| \le \|W\|^2 \delta \, \big[\|A \, W \Gamma W' \| + \|A\| \big] \le  \|W\|^2 \delta \, \big[(1 + \|W\|^2 \delta)^2 +  (1 + \|W\|^2 \delta)\big]. 
\end{align*}

\

\noindent {\bf Bound for $\|H_4\|$.} We have, 
\begin{align*}
\| H_4\| 
& = \| W \Gamma (W' W + \Gamma^{-1}) \Gamma W' \, E\| \\
& = \| W \Gamma W' \, (I_N + W \Gamma W') \, E\| \\
& = \| W \Gamma W' \, ME\|.
\end{align*}
Now, $M E = I_N + M M_\delta^{-1} M M_\delta^{-1} - 2 M M_\delta^{-1} = (M M_\delta^{-1} - I_N)^2$. Substituting in the above display, and once again invoking Lemma \ref{lem:various}, 
\begin{align*}
\|H_4\| 
& = \| M_\delta^{-1} W \Gamma W' \, (M - M_\delta)M_\delta^{-1} (M - M_\delta) \| \\
& \le \|W \Gamma W' M_\delta^{-1} \| \, \|M - M_\delta\|^2 \\
& \le (1 + \|W\|^2 \delta) \, (\|W\|^2 \delta)^2. 
\end{align*}

\

\noindent {\bf Bound for $\|\Sigma^{-1} \Delta\|$.} Collecting the bounds for $\|H_i\|$ and substituting in the display before Lemma \ref{lem:various} plus some simplifying algebra yields, 
\begin{align}\label{eq:siginvdel_final}
\|\Sigma^{-1} \Delta\| \le (\|W\|^2 \delta) \, \big[3 + 3(1 + \|W\|^2 \delta) + 2(1 + \|W\|^2 \delta)^2 \big] = 8 \|W\|^2 \delta + \mathcal O(\delta^2) 
\end{align}
for sufficiently small $\delta$.

\subsubsection{Bound for $\mbox{KL}_2$}

Focus first on $(\mu - \mu_\delta)' \Sigma^{-1} (\mu - \mu_\delta)$. We have $\mu - \mu_\delta = (\Gamma W' M^{-1} - \Gamma_\delta W' M_\delta^{-1})z$. Write 
$$
\Gamma W' M^{-1} - \gamma_\delta W' M_\delta^{-1} = \underbrace{ (\Gamma - \Gamma_\delta)W' M^{-1}}_{U} + \underbrace{ \Gamma_\delta W' M^{-1} (I_N - M M_\delta^{-1}) }_V.
$$
We can now write
\begin{align*}
(\mu - \mu_\delta)' \Sigma^{-1} (\mu - \mu_\delta) 
& = z' (U+V)'\Sigma^{-1}(U+V) z \\
& \le \|(U+V)'\Sigma^{-1}(U+V)\| \, \|z\|^2 \\
& \le \| \Sigma^{-1/2}(U+V)\|^2 \, \|z\|^2 \\
& \le 2 (\| \Sigma^{-1/2} U\|^2 + \| \Sigma^{-1/2} V\|^2) \, \|z\|^2 \\
& = 2 (\|U' \Sigma^{-1} U\| + \|V' \Sigma^{-1} V\|) \, \|z\|^2,
\end{align*}
where we used the inequality $\|B_1 + B_2\|^2 \le 2(\|B_1\|^2 + \|B_2\|^2)$. Next, 
\begin{align*}
\|U' \Sigma^{-1} U\| 
& = \| M^{-1} W (\Gamma - \Gamma_\delta) (W'W + \Gamma^{-1}) (\Gamma - \Gamma_\delta) W' M^{-1} \| \\
& = \|(\Gamma - \Gamma_\delta) (W'W + \Gamma^{-1}) \,  (\Gamma - \Gamma_\delta) W' M^{-2} W\| \\
& \le \|(\Gamma - \Gamma_\delta) (W'W + \Gamma^{-1}) \| \, \|W\|^2  \delta\\
& \le \|W\|^2  \delta (1 + \|W\|^2  \delta) =  \|W\|^2  \delta + \mathcal O(\delta^2), 
\end{align*}
for sufficiently small $\delta$, where we have used conclusions of Lemma \ref{lem:various} in multiple places and in the last step, we used $\|(\Gamma - \Gamma_\delta) \Gamma^{-1} \| \le 1$ since it is a diagonal matrix with zeros and ones on the diagonal. Similarly, it can be verified that $\|V' \Sigma^{-1} V \| \le \|W\|^2  \delta (1 + \|W\|^2  \delta)$. So then it follows
\be
\KL_2 \le  \left(2 \|W\|^2  \delta + \mathcal O(\delta^2) \right) \frac{N+\omega}{\omega} = 2 \frac{N+\omega}{\omega} \|W\|^2 \delta + \mathcal O(\delta^2).
\ee
where the last factor is an upper bound on $a_\delta/a'_\delta$ which originates from bounding $a'_\delta$ below by $\omega/2$. 

\subsubsection{Bound for $\mbox{KL}_3$}
Using $\log(x) \le (x-1)$ for $x > 0$, we have, 
\begin{align*}
\mbox{KL}_3 
& \le a_\delta \big\{ (a'_\delta/a' - 1) + (a'/a'_\delta - 1) \big\} \\
& \le 2 a_\delta |a' - a'_\delta|,
\end{align*}
since $a', a'_\delta > 1$. Since $|a' - a'_\delta| = |z' (M^{-1} - M_\delta^{-1})z|/\omega \le (\|z\|^2/\omega) \, \|W\|^2 \delta$, we have 
\begin{align*}
\mbox{KL}_3 \le N \, (\|z\|^2/\omega) \, \|W\|^2 \delta.
\end{align*}

\subsubsection{Summing up}
We now combine the bounds to obtain the final result. We have
\be
\KL_1+\KL_2 +\KL_3 \le 8 \|W\|^2 \delta + 2 \frac{N+\omega}{\omega} \|W\|^2 \delta + N \, (\|z\|^2/\omega) \, \|W\|^2 \delta +  \mathcal O(\delta^2)
\ee
so by Pinsker's inequality
\be
\TVar_1^2 \le 4 \|W\|^2 \delta + \frac{N+\omega}{\omega} \|W\|^2 \delta + \frac{N}2 \, \frac{\|z\|^2}{\omega} \, \|W\|^2 \delta +  \mathcal O(\delta^2)
\ee
and so finally, combining with \eqref{eq:TV2_bound} -- which contributes only factors of order $\delta^2$ or smaller after squaring -- via \eqref{eq:TVCombine}, we obtain
\be
\sup_x \| \delta_x \P - \delta_x \P_\epsilon\|_{\TVar} &= \sqrt{4 \|W\|^2 \delta + \frac{N+\omega}{\omega} \|W\|^2 \delta + \frac{N}2 \, \frac{\|z\|^2}{\omega} \, \|W\|^2 \delta} +  \mathcal O(\delta) \\
&= \sqrt{\delta} \|W\| \sqrt{ 4 + \frac{N+\omega}{\omega} + \frac{N}2 \, \frac{\|z\|^2}{\omega} } + \mathcal O(\delta), 
\ee
since none of the bounds depend upon the remaining state variable $\eta$, giving the result.


\section{Some integrals, inequalities, \& proofs of auxiliary lemmas}

\subsection{Incomplete Gamma function}
The incomplete Gamma function $\Gamma(a,x)$ for $x > 0$ is defined as 
\be\label{eq:incomp_gam}
\Gamma(a,x) = \int_x^\infty t^{a-1} e^{-t} dt. 
\ee
When $a = 0$, this reduces to the exponential integral function $\En_1(x) = \int_x^\infty t^{-1} e^{-t} dt$. 

\

We record an integral from Gradstheyn and Ryzhik (GR 3.383.10), 
\be\label{eq:gr_383_10}
\int_{0}^{\infty} \frac{x^{\nu - 1} e^{-\mu x}}{x + \beta} \,dx = \beta^{\nu-1} e^{\beta \mu} \, \Gamma(\nu) \, \Gamma(1-\nu, \beta \mu),
\ee
for $\nu, \mu, \beta > 0$.

\

We record a result relating the ratio of certain incomplete gamma functions. 
\begin{lem}\label{lem:imcomp_gamrat2}
Fix $c \in (0, 1/2]$ and $b \in (0, 1)$. For any small $\varepsilon > 0$, there exists a positive constant $C_\varepsilon$ such that 
\be
r_{b,c}(x) :\,= \frac{e^{-x}}{x^c} \, \frac{\Gamma(c, bx)}{\Gamma(0, x+bx)} \le \varepsilon x^{-c} + C_\varepsilon, \quad \forall \, x \in (0, \infty). 
\ee
\end{lem}
\begin{proof}
For $x \ge 1/2$, bound $\Gamma(0, x) \ge \int_x^{2x} e^{-t}/t \,dt \ge e^{-x} (1 - e^{-x})/(2x) \ge e^{-x}/(8x)$, where we used that for $x \ge 1/2$, $1 - e^{-x} \ge 1/4$. Also bound $\Gamma(c, x) = \int_x^\infty t^{c-1} e^{-t} dt \le x^{c-1} \int_x^\infty e^{-t} dt = x^{c-1} e^{-x}$. Substituting these bounds, we have for $x \ge 1/2$ that 
\be\label{eq:inv_abs_mom}
\frac{e^{-x}}{x^c} \, \frac{\Gamma(c, bx)}{\Gamma(0, x+bx)} \le 8 b^{c-1} (1+b). 
\ee
We also have that $\lim_{x \to 0} \frac{\Gamma(c, bx)}{\Gamma(0, x+bx)} = 0$. Pick $\delta > 0$ such that 
$\frac{\Gamma(c, bx)}{\Gamma(0, x+bx)} < \varepsilon$ for all $x < \delta$. We can then bound 
\be 
r_{b,c}(x) \le \varepsilon x^{-c} + \max\{r_{b,c}(\delta), 8 b^{c-1} (1+b)\}, \quad \forall \, x\in (0, \infty). 
\ee
\end{proof}

\subsection{Normal inverse moments \& Hypergeometric function} \label{sec:NormalHypergeo}
We state a formula for inverse absolute moments of a normal distribution. Let $X \sim \No(\mu, \sigma^2)$. One has, for $\nu > -1$, 
\be\label{eq:normal_invmom}
E(|X|^\nu) = \sigma^\nu 2^{\nu/2} \frac{\Gamma\left( \frac{\nu+1}{2} \right)}{\sqrt{\pi}} \, e^{-\frac{\mu^2}{2\sigma^2}} \, M\left( \frac{\nu+1}2, \frac12; \frac{\mu^2}{2\sigma^2} \right), 
\ee
where
\be
M(\alpha,\gamma;z) &= \mathstrut_1 F_1(\alpha;\gamma;z) :\,= \sum_{n=0}^\infty \frac{(\alpha)_n}{(\gamma)_n n!} z^n
\ee
is the confluent hypergeometric function of the first kind, with  
$$
(x)_n \equiv \frac{\Gamma(x+n)}{\Gamma(x)},
$$
the ascending factorial. Letting
\be
\mathbf{M}(\alpha,\gamma;z) &= \frac{M(\alpha,\gamma;z)}{\Gamma(\gamma)}, 
\ee
one has, if $\gamma > \alpha > 0$ then
\be
\mathbf{M}(\alpha,\gamma;z) &= \frac{1}{\Gamma(\alpha)\Gamma(\gamma-\alpha)} \int_0^1 e^{zt} t^{\alpha-1} (1-t)^{\gamma-\alpha-1} dt. 
\ee
We state a useful result below. 
\begin{lem}\label{lem:conf_dec}
Fix $\gamma > \alpha > 0$. The function
\be 
z \mapsto e^{-z} M(\alpha, \gamma; z)
\ee
is a non-increasing function of $z$ for $z \ge 0$. In particular, $e^{-z} M(\alpha, \gamma; z) \le 1$ for any $z > 0$. 
\end{lem}
\begin{proof}
We can write, based on the above integral representation, 
\be
e^{-z} M(\alpha, \gamma; z) 
&= \frac{\Gamma(\gamma)}{\Gamma(\alpha)\Gamma(\gamma-\alpha)} \,
\int_0^1 e^{-z(1-t)} t^{\alpha-1} (1-t)^{\gamma-\alpha-1} dt \\
&= \frac{\Gamma(\gamma)}{\Gamma(\alpha)\Gamma(\gamma-\alpha)} \,
\int_0^1 e^{-zt} t^{\gamma-\alpha-1} (1-t)^{\alpha-1} dt,
\ee
which is a non-increasing function of $z$ on $[0, \infty)$. 

The second part follows from evaluating the last line of the above display at $z=0$, which reduces to the integral of a beta pdf at $0$, so that $e^{-z} M(\alpha, \gamma; z)\vert_{z=0} = 1$. The bound then follows. 
\end{proof}

\subsection{Proof of Lemma \ref{lem:KL_MNIG}}\label{sec:KL_lemma}
We have 
\begin{align*}
& \int p_0(\beta, \sigma^2) \log \frac{p_0(\beta, \sigma^2)}{p_1(\beta, \sigma^2)} \, d \beta d \sigma^2 \\
& = \int p_0(\beta \mid \sigma^2) \, p_0(\sigma^2) \bigg[ \log \frac{p_0(\beta \mid \sigma^2)}{p_1(\beta \mid \sigma^2)} + \log \frac{p_0(\sigma^2)}{p_1(\sigma^2)} \bigg] \,d\beta d \sigma^2 \\
& = \int \mbox{KL}\bigg(p_0(\cdot \mid \sigma^2)\,||\,p_1(\cdot \mid \sigma^2)\bigg) \, p_0(\sigma^2) \, d\sigma^2 + \int p_0(\sigma^2) \log \frac{p_0(\sigma^2)}{p_1(\sigma^2)} \, d\sigma^2. 
\end{align*}
First, using normality of $p_i(\cdot \mid \sigma^2)$ and the standard expression for the KL divergence between two multivariate normals, 
\begin{align*}
\mbox{KL}\bigg(p_0(\cdot \mid \sigma^2)\,||\,p_1(\cdot \mid \sigma^2)\bigg) 
= \frac{1}{2} \bigg[ \tr(\Sigma_1^{-1} \Sigma_0 - I_p) - \log |\Sigma_1^{-1} \Sigma_0| + \frac{(\mu_1 - \mu_0)'\Sigma_1^{-1}(\mu_1 - \mu_0)}{\sigma^2} \bigg]. 
\end{align*}
Thus, 
\begin{align*}
& \int \mbox{KL}\bigg(p_0(\cdot \mid \sigma^2)\,||\,p_1(\cdot \mid \sigma^2)\bigg) \, p_0(\sigma^2) \, d\sigma^2 \\
& = \frac{1}{2} \bigg[ \tr(\Sigma_1^{-1} \Sigma_0 - I_p) - \log |\Sigma_1^{-1} \Sigma_0| + (\mu_1 - \mu_0)'\Sigma_1^{-1}(\mu_1 - \mu_0) \, \frac{a_0}{a'_0} \bigg]. 
\end{align*}
Next, using that $a_0 = a_1$, 
\begin{align*}
& \int p_0(\sigma^2) \log \frac{p_0(\sigma^2)}{p_1(\sigma^2)} \, d\sigma^2 \\
& = \int p_0(\sigma^2) \, [ (a_0 \log a'_0 - a_0 \log a'_1) + (a'_1 - a'_0) (\sigma^2)^{-1} ] \, d\sigma^2 \\
& = a_0 \log(a'_0/a'_1) + \frac{(a'_1 - a'_0) a_0}{a'_0}.
\end{align*}

\subsection{Proof of Lemma \ref{lem:various}}\label{sec:pf_lem_var}
We make multiple usage of the following facts. For matrices $A$ and $B$ of compatible size, $\|AB\| = \|BA\| \le \|A\|\, \|B\|$ and $\|A+B\| \le \|A\| + \|B\|$. For invertible $A$, $\|A^{-1}\| = 1/s_{\min}(A)$. For a symmetric p.d. matrix $A$, its eigenvalues and singular values are identical. \\[1ex]
(i) Follows since $s_{\min}(M)$ and $s_{\min}(M_\delta)$ are both bounded below by 1. \\[1ex]
(ii) First, $\|M - M_\delta\| = \| W (\Gamma - \Gamma_\delta) W'\| \le \|W\|^2 \delta$ since $\Gamma - \Gamma_\delta$ is a diagonal matrix with the non-zero entries bounded by $\delta$. The remaining 4 inequalities have near identical proofs so we only prove one of them. We have $\| M_\delta^{-1} M - I_N \| = \| M_\delta^{-1} (M_\delta - M)\| \le \|M - M_\delta\|$ by (i). \\[1ex]
(iii) Writing $W \Gamma W' \, M^{-1} = I_N - M^{-1}$, all its eigenvalues are bounded above by 1. Similarly for the second part. \\[1ex]
(iv) Bound $\| W \Gamma W' \, M_\delta^{-1}\| \le \|W \Gamma W' \, M^{-1} \| + \|W \Gamma W' \, M^{-1} \, (I_N - M M_\delta^{-1})\| \le \|W \Gamma W' \, M^{-1} \| \, \big( 1 + \|I_N - M M_\delta^{-1}\| \big)$. Conclude from (ii) and (iii). 
\\[1ex]
(v) First, bound $\| A \| \le \| M_\delta^{-1} M \| \, \|M_\delta^{-1} - M^{-1} \| + \|M_\delta^{-1}\|$. The bound then follows from (i) and (ii) and noting that $(M_\delta^{-1} - M^{-1}) = M_\delta^{-1} (M - M_\delta) M^{-1}$. For the third bound, write 
$\|A \, W \Gamma W'\| = \|M_\delta^{-1} M \, M_\delta^{-1} W\Gamma W'\| \le \|M_\delta^{-1} M\| \, \|M_\delta^{-1} W\Gamma W'\|$. The bound then follows from (ii) and (iii). The bound for $\|W \Gamma_\delta W' \, A\|$ follows similarly. 

\bibliographystyle{apalike}
\bibliography{lg-algos}

\newpage

\beginsupplement
\noindent\begin{LARGE}\textbf{Supplementary Materials}\end{LARGE} \\

\noindent This supplement contains derivation of the rejection sampler and some additional figures as described in the main text.

\section{Rejection sampler for local scales}
Fix $\varepsilon \in (0, 1)$ and consider sampling from the density  
\be
h_\varepsilon(t) = C_\varepsilon \, \frac{e^{- \varepsilon t}}{1+t}, \quad t > 0, 
\ee
where the normalizing constant $C_\varepsilon = e^{-\varepsilon}/\mbox{Ei}(\varepsilon)$, with $\mbox{Ei}(x) = \int_x^\infty e^{-t}/ t \, dt = \Gamma(0,x)$ the exponential integral function. The constant $C_\varepsilon$ is a decreasing function of $\varepsilon$, with $C_1 \approx 1.6$ and $C_\varepsilon < 1$ for $\varepsilon < 0.40$. 

First we record useful fact about the density $h_\varepsilon$. If $X \sim \mbox{Expo}(\varepsilon)$ with $E(X) = 1/\varepsilon$, then $P(X > b/\varepsilon) = e^{-b}$ for any $b > 0$. We show a similar upper bound for 
$h_\varepsilon$. Bound 
\begin{align*}
C_\varepsilon \, \int_{b/\varepsilon}^{\infty} \frac{e^{- \varepsilon t}}{1+t} \, dt 
\le \frac{C_\varepsilon}{1 + b/\varepsilon} \, \int_{b/\varepsilon}^{\infty} e^{- \varepsilon t} \, dt 
= \frac{C_\varepsilon}{1 + b/\varepsilon} \, \frac{1}{\varepsilon} \, e^{-b}  \le C_\varepsilon \, \frac{e^{-b}}{b}. 
\end{align*}
Let 
$$
f(x) :\,= f_\varepsilon(x) = \varepsilon x + \log(1+x), \quad x > 0, 
$$
be the negative log-density up to constants. It is easily verified that $f$ is an increasing concave function on $(0, \infty)$. We now develop a lower bound to $f$. 

For any real-valued function $g$ and an interval $[\munderbar{v}, \bar{v}] \subset \mbox{dom}(g)$, recall that the line segment on the interval $[\munderbar{v}, \bar{v}]$ joining $g(\munderbar{v})$ and $g(\bar{v})$ is given by  
$$
x \mapsto g(\munderbar{v}) + \frac{g(\bar{v}) - g(\munderbar{v})}{\bar{v}-\munderbar{v}} \, (x - \munderbar{v}), \quad x \in [\munderbar{v}, \bar{v}]. 
$$
Fix $0 < a < 1 < b$, and set 
$$
A = f(a/\varepsilon), \ I = f(1/\varepsilon), \ B = f(b/\varepsilon). 
$$
Also, set 
$$
\lambda_2 = \frac{I-A}{(1-a)/\varepsilon}, \ \lambda_3 = \frac{B-I}{(b-1)/\varepsilon}. 
$$
With these notations, set
\begin{align*}
f_{L, \varepsilon}(x) :\, = f_L(x) = 
\begin{cases}
\log(1+x) &  x \in [0, a/\varepsilon),\\
A + \lambda_2(x - a/\varepsilon) & x \in [a/\varepsilon, 1/\varepsilon), \\
I + \lambda_3 (x - b/\varepsilon) & x \in [1/\varepsilon, b/\varepsilon), \\
B + \varepsilon(x - b/\varepsilon) & x \ge b/\varepsilon. 
\end{cases}
\end{align*}
Some comments about the approximation $f_L$. First, $f_L$ is an increasing function and is piecewise linear on $[a/\varepsilon, \infty)$. It has a jump discontinuity at $a/\varepsilon$ and is continuous everywhere else.
$f_L$ is identical to $\log(1+x)$ on $[0, a/\varepsilon)$, linearly interpolates between (i) $f(a/\varepsilon)$ and  $f(1/\varepsilon)$ on $[a/\varepsilon, 1/\varepsilon)$ and (ii) $f(1/\varepsilon)$ and  $f(b/\varepsilon)$ on $[1/\varepsilon, b/\varepsilon)$, and equals $\varepsilon x + \log(1+b/\varepsilon)$ on $[b/\varepsilon, \infty)$. By construction, $f_L \le f$ on $[0, a/\varepsilon)$, and the concavity of $f$ implies $f_L \le f$ on $[a/\varepsilon, \infty)$, implying that $f_L$ is globally bounded from above by $f$. 

\begin{figure}[h!]
\hspace{-0.2in}
\includegraphics[width=0.9\columnwidth]{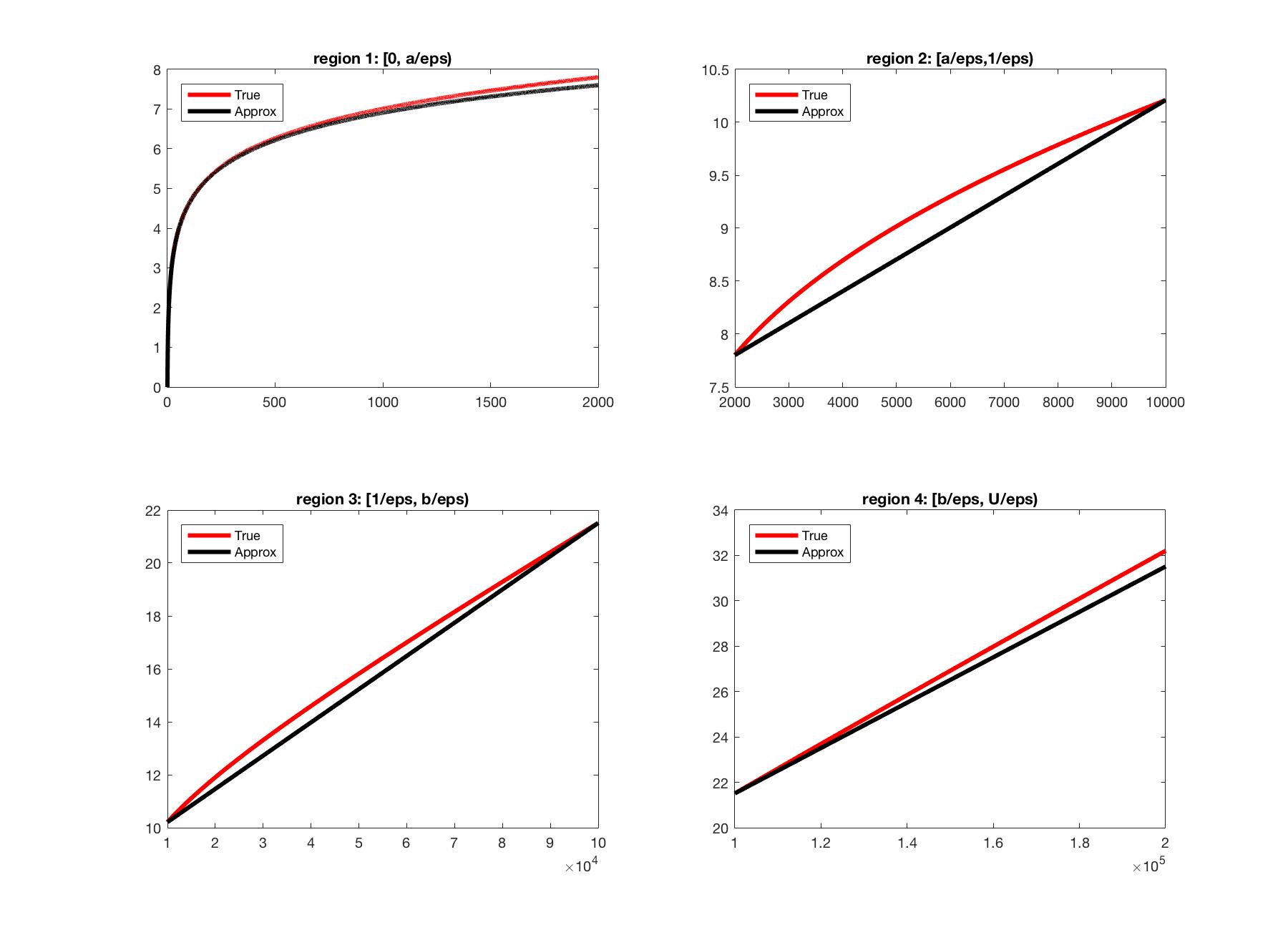}
\caption{Comparison of $f$ and $f_L$ with $\varepsilon = 10^{-4}$, $a = 1/5$, and $b = 10$. }
\end{figure}

Let $h_L(x) = e^{-f_L(x)}/\nu$ for $x \in (0, \infty)$, with $\nu = \int_0^\infty e^{- f_L(x)} \, dx$. A rejection sampling algorithm to sample from $h$ proceeds as follows: \\[1ex]
(i) draw $z \sim h_L$ and $u \sim U(0, 1)$ independently. \\
(ii) Accept $z$ as a sample from $h$ if $u < e^{-(f - f_L)(z)}$. Otherwise, back to step (i). \\

We now describe sampling from $h_L$. To that end, let us first calculate the normalizing constant $\nu$. We have, 
\begin{align*}
\nu & = \nu_1 + \nu_2 + \nu_3 + \nu_4, \\
\nu_1 & = \int_0^{a/\varepsilon} \frac{dx}{1+x} = \log(1 + a/\varepsilon), \\ 
\nu_2 & = e^{- A} \, \int_{a/\varepsilon}^{1/\varepsilon} e^{- \lambda_2 (x - a/\varepsilon) } \, dx
= \lambda_2^{-1} \, e^{- A} \, \big[1 - e^{- (I - A)} \big], \\
\nu_3 & = e^{-I} \, \int_{1/\varepsilon}^{b/\varepsilon} e^{- \lambda_3 (x - 1/\varepsilon) } \, dx
= \lambda_3^{-1} \, e^{- I} \, \big[1 - e^{- (B - I)} \big], \\
\nu_4 & = e^{- B} \, \int_{b/\varepsilon}^{\infty} e^{- \varepsilon (x - b/\varepsilon) } \, dx = \varepsilon^{-1} \, e^{- B}.
\end{align*}

We can thus write $h_L$ as a mixture of four densities,
\begin{align*}
h_L = (\nu_1/\nu) \, h_1 + (\nu_2/\nu) \, h_2 + (\nu_3/\nu) \, h_3 + (\nu_4/\nu) \, h_4, 
\end{align*}
where 
\begin{align*}
h_1(x) &= \frac{1}{\nu_1} \, \frac{\ind_{[0, a/\varepsilon)}(x)}{1+x} , \\
h_2(x) &= \frac{1}{\nu_2} \, e^{- A} \, e^{- \lambda_2 (x - a/\varepsilon)} \, \ind_{[a/\varepsilon, 1/\varepsilon)}(x), \\
h_3(x) &= \frac{1}{\nu_3} \, e^{- I} \, e^{- \lambda_3 (x - 1/\varepsilon)} \, \ind_{[1/\varepsilon, 1b/\varepsilon)}(x), \\
h_4(x) &=\frac{1}{\nu_4} \, e^{- B} \, e^{- \varepsilon (x - b/\varepsilon)} \, \ind_{[b/\varepsilon, \infty)}(x). \\
\end{align*}
Observe that $h_2, h_3$ and $h_4$ are truncated exponential densities. We now describe the inverse cdf method to sample from a truncated exponential. 
\\[2ex]
{\bf Sampling from truncated exponential.} Let $\mbox{Expo}(\lambda, \munderbar{v}, \bar{v})$ denote the distribution with density 
$$
\gamma(x) = \frac{ \lambda e^{-\lambda (x - \munderbar{v})}}{H}, \quad x \in [\munderbar{v}, \bar{v}],
$$
where $\lambda > 0$, $0 \le \munderbar{v} < \bar{v} \le \infty$, and $H = 1 - e^{-\lambda(\bar{v}-\munderbar{v})}$ (Note:when $\bar{v} = \infty$, this means $H = 1$). The cdf 
$$
F_\gamma(x) = \frac{1 - e^{-\lambda(x-\munderbar{v})}}{H}, \quad x \in [\munderbar{v}, \bar{v}]. 
$$
The inverse-cdf method to sample from $\mbox{Expo}(\lambda, \munderbar{v}, \bar{v})$ is then given by: \\
Sample $u \sim U(0, 1)$ and set 
$$
x = \munderbar{v} + \frac{-\log(1-uH)}{\lambda} = \munderbar{v} + \frac{ -\log(1-u + u \, e^{-\lambda(\bar{v}-\munderbar{v})})}{\lambda}.
$$

After some simplification, the value of $H$ corresponding to $h_2$ and $h_3$ is respectively, 
$$
H_2 = 1 - e^{-(I-A)}, \quad H_3 = 1 - e^{-(B-I)}. 
$$
\\[2ex]
The density $h_1$ can also be sampled using inverse cdf method. We have \\[1ex]
{\bf Sampling from $h_1$:} The cdf of $h_1$ is 
$$
F_1(x) = \frac{\log(1+x)}{\nu_1}, \quad x \in [0, a/\varepsilon). 
$$
The inverse cdf sampler sets: \\
draw $u \sim U(0, 1)$ and set $x = e^{u \nu_1} -1 = (1 + a/\varepsilon)^u - 1$.

\section{Extra Figures}
Here we provide additional figures relevant to the statistical performance of time-averaging estimators from the Approximate algorithm.

\begin{figure}[h]
\centering
\includegraphics[width=0.7\textwidth]{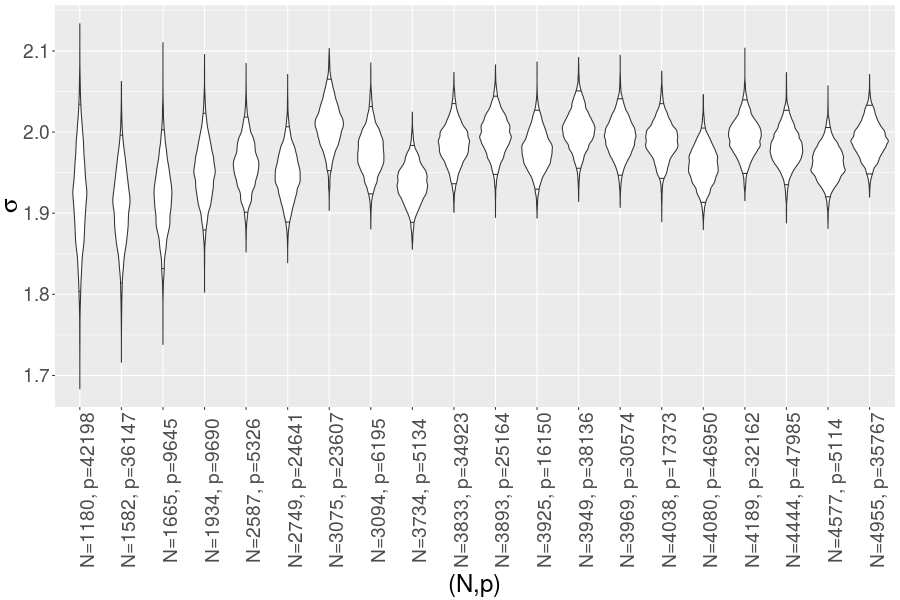}
\caption{Marginals for the residual standard deviation $\sigma$ over 20 values of $N,p$ using the approximate algorithm. The small horizontal lines indicate the 
0.025 and 0.975 approximate posterior quantiles. The true value is 2 in all cases. } \label{fig:sigma_intervals}
\end{figure}

\begin{figure}[h]
\centering
\includegraphics[width=0.9\textwidth]{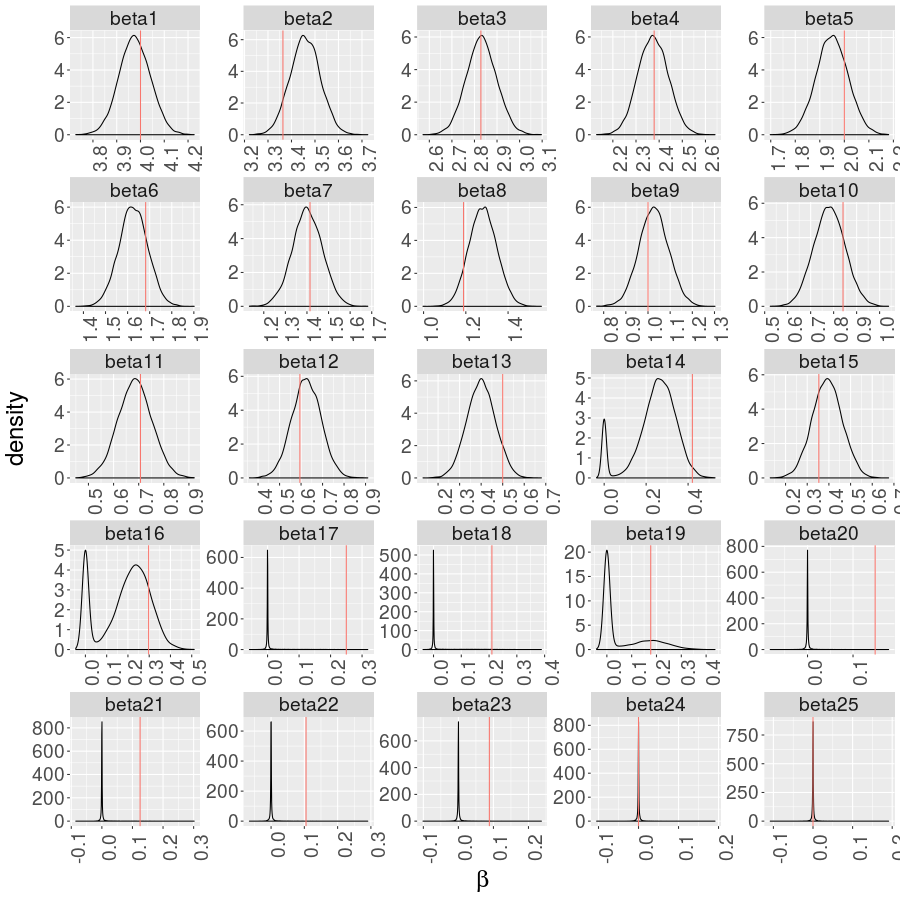}
\caption{Marginals for the first 25 entries of $\beta$ for $N=1000,p=5000$, true value indicated with red line. Approximate algorithm.} \label{fig:beta_marginals_small}
\end{figure}

\begin{figure}[h]
\centering
\includegraphics[width=0.9\textwidth]{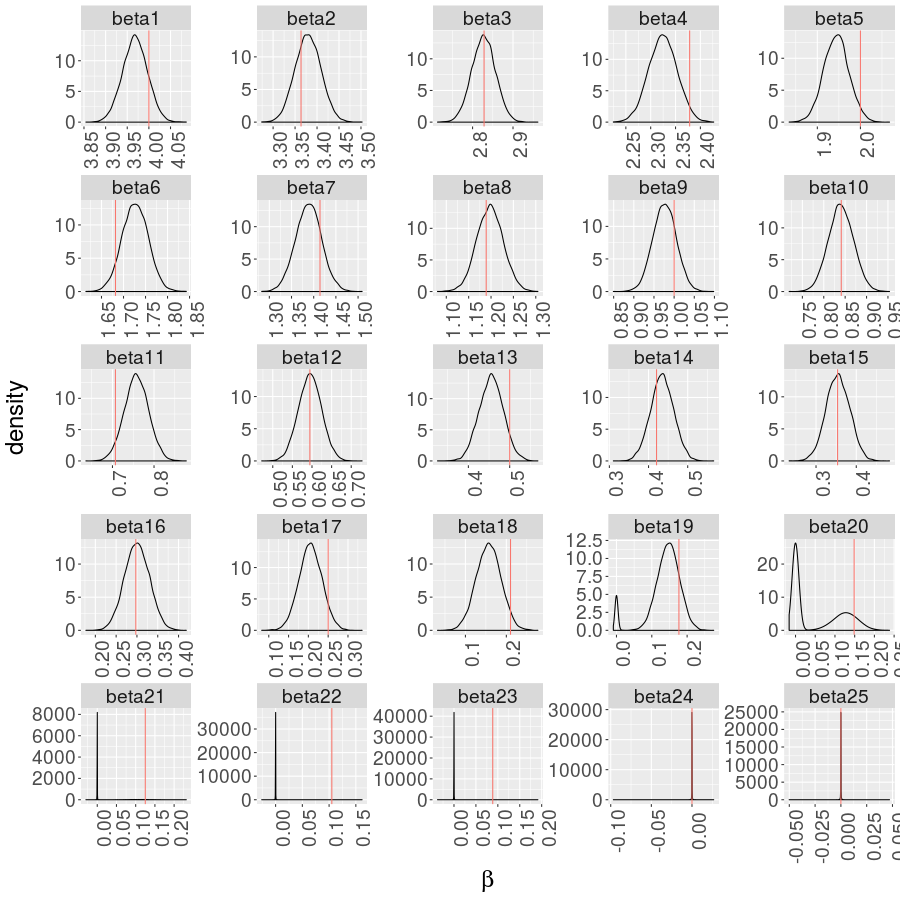}
\caption{Marginals for the first 25 entries of $\beta$ for $N=5000,p=50,000$, true value indicated with red line. Approximate algorithm.} \label{fig:beta_marginals_big}
\end{figure}

\end{document}